\newcommand{\ba}{\begin{align}}
\newcommand{\ea}{\end{align}}
\newtheorem{proposition}{Proposition}
\newtheorem{assumption}{Assumption}
\newtheorem{lemma}{Lemma}
\newtheorem{example}{Example}
\newtheorem{theorem}{Theorem}[section]
\newtheorem{definition}{Definition}
\begin{document}
\setstcolor{red}
\onehalfspacing

\title{{Non}-Concave Utility Maximization with Transaction Costs}
\author{Shuaijie Qian\thanks{Department of Mathematics, The Hong Kong University of Science and Technology, Hong Kong. Email: sjqian@ust.hk. Qian acknowledges the support from the start-up grant of the Hong Kong University of Science and Technology (B000-0172-R9863). }
\and
Chen Yang\thanks{Department of Systems Engineering and Engineering Management, The Chinese University of Hong Kong, Hong Kong. Email: cyang@se.cuhk.edu.hk. Yang acknowledges the supports from the Hong Kong Research Grants Council ECS (24207621) and GRF (14207723).}
}

\date{}

\maketitle

\begin{abstract}
This paper studies a finite-horizon portfolio selection problem with non-concave terminal utility and proportional transaction costs, in which the commonly used concavification principle for terminal value is no longer applicable. We establish a proper theoretical characterization of this problem via a two-step procedure. First, we examine the asymptotic terminal behavior of the value function, which implies that any transaction close to maturity only provides a marginal contribution to the utility. 
Second, we establish the theoretical foundation in terms of the discontinuous viscosity solution, incorporating the proper characterization of the terminal condition. 
Via extensive numerical analyses involving several types of utility functions, we find that the introduction of transaction costs into non-concave utility maximization problems can make it optimal for investors to hold on to a larger long position in the risky asset compared to the frictionless case, or hold on to a large short position in the risky asset despite a positive risk premium.
\end{abstract}

{\flushleft\textbf{Keywords:} utility maximization, portfolio selection, transaction costs, concavification principle}

\section{Introduction}

The utility maximization framework is widely used to study individuals' decisions in problems such as portfolio selection theory and consumer theory. For example, the classic Merton's problem (c.f. \cite{merton1975optimum}) studies the optimal portfolio selection, where an investor aims at maximizing the expected utility over terminal wealth and intertemporal consumption. 
In the classic utility maximization literature, the utility function is typically chosen as a concave function (e.g., CRRA or CARA utilities), which represents the individual's risk aversion. However, in many practical problems, the individual's utility has a non-concave dependence on the terminal wealth level. For example, the investor can have an investment objective and gain a sudden boost in her utility level if the wealth breaks through such an objective. This creates a jump discontinuity in the utility and makes it non-concave (see, e.g., the goal-reaching problem in Example \ref{exam 1m goal} and aspiration utility in Example \ref{exam 4 aspiration}). Another example is from the S-shaped utility in behavioral economics (see Example \ref{exam Sshape}). More examples can be found in delegated portfolio choice problems where the non-concavity can arise from the non-concave dependence of the payoff on the terminal portfolio value (see, e.g., \cite{carpenter2000does}, \cite{basak2007optimal}, \cite{he2018profit}).

The non-concave utility maximization is commonly tackled in the literature using the \textit{concavification principle}. Using this principle, the optimal investment strategy can be equivalently obtained by solving a ``concavified'' problem with the utility $U$ replaced by its concave envelope $\widehat{U}$. The basic idea behind this principle is that when the end of the investment horizon is imminent, it is optimal for the investor to avoid reaching a terminal wealth level $Z_T$ where $\widehat{U}(Z_T)>U(Z_T)$ by pushing for an unbounded (instantaneous) variance of portfolio value (c.f. \cite{browne1999reaching}). With one-sided portfolio bounds, \cite{bian2019utility} show that this principle still remains valid, since the investor can still establish unbounded portfolio variance in the convex direction. However, in the case with two-sided portfolio bounds that directly prohibit unbounded variance, \cite{dai2022nonconcave} prove that the concavification principle no longer holds, and they show that the non-concavity of the terminal utility has significant impacts on the investor's strategy, both theoretically and practically. For example, the investor may choose to gamble by short-selling stocks of positive risk-premium, or take extreme positions that attain the portfolio bounds and deviate significantly from the frictionless optimum.

The transaction costs are widely used to capture the trading fees or general asset illiquidity; see the large body of literature studying portfolio selection under transaction costs (e.g. \cite{magill1976portfolio,davis1990portfolio,shreve1994optimal}).
With transaction costs, the investor trades less actively since they need to pay the cost proportional to the trading amount. Transaction costs can impact non-concave utility maximization problems from both theoretical and financial perspectives, which motivates this paper. From the theoretical perspective, transaction costs make it prohibitively costly to achieve unbounded portfolio variance, since buying or short-selling a large amount of stocks drags down the portfolio value significantly due to transaction costs. As a result, this intuition suggests that the concavification principle is inapplicable in the presence of transaction costs.
From the financial perspective, the existing literature studying delegated portfolio selection under non-concave objectives (e.g. \cite{carpenter2000does,grinblatt1989adverse}) hints at the possible impact of transaction costs on the portfolio strategy without explicitly taking them into consideration in their models. While recent papers (e.g. \cite{dai2022convex,luo4191968mutual}) study such impact from an empirical perspective and also the implication of transaction costs on funds' liquidity premia, our interest focuses on the quantitative impact of transaction costs on the structure of the optimal portfolio strategy.

In this paper, we confirm the inapplicability of the concavification principle in the non-concave utility maximization problems with proportional transaction costs via a rigorous theoretical characterization.
To our best knowledge, this is the first paper studying continuous-time non-concave utility maximization problems with transaction costs. In stark contrast to the compulsory two-sided portfolio bounds in  \cite{dai2022nonconcave}, transaction costs impose ``soft'' bounds, making 
it diminishingly worthwhile for investors to transact and increase portfolio variance, 
as the end of the investment horizon approaches. Therefore, while the transaction costs in our setting and the two-sided portfolio bounds in \cite{dai2022nonconcave} both result in the inapplicability of the concavification principle, our work offers novel insights based on different fundamental reasoning and different underlying economic intuitions.

Our main contribution is two-fold. 
First, from the theoretical perspective, given the intrinsic discontinuity of the value function near the end of the horizon, we provide a rigorous treatment of the asymptotic terminal value in the singular control framework. Using it as the proper characterization of the terminal condition, we characterize the investor's optimal value as the unique viscosity solution of the corresponding Hamilton–Jacobi–Bellman (HJB) quasi-variational inequalities (see Definition \ref{def visco} and Theorem \ref{thm:viscosity solution}). The key step lies in the verification of the proposed terminal condition {(see Theorem \ref{prop:term})}. 
Although \cite{dai2022nonconcave} also propose a definition of discontinuous viscosity solution under the regular stochastic control framework, their terminal condition relies heavily on the homotheticity of the value function, which reduces the problem dimension to one. In comparison, our verification of the terminal condition is based on a delicate mathematical analysis of the impact of transactions on the terminal utility, which does not rely on homotheticity. 
Therefore, our approach has the potential to be extended to more general market frictions.
Our theoretical terminal condition is also motivated by and confirms the fundamental role of transaction costs in the inapplicability of the concavification principle.

Second, we perform numerical analyses based on several types of utility functions widely used in the literature, leading to many intriguing financial insights for the optimal portfolio strategy. 
When the remaining investment horizon is short, it can be optimal for the investor to hold on to a larger position in the risky asset compared to the frictionless case. Intuitively, since it is no longer feasible to push for an infinite portfolio variance immediately before the end of the horizon due to transaction costs, the investor maintains a larger position in advance as preparation. Interestingly, the corresponding leverage reported in Section \ref{sect:numerical} can be much higher than \cite{dai2022nonconcave}, where the leverage is both caused and limited by the portfolio bounds. 
Also, holding on to a large short position of the risky asset can be optimal despite its positive risk premium, since the positive effect of the large portfolio variance dominates the negative effect of the drift.

\textbf{Related Literature. } With concave utilities, there is a large body of literature studying the continuous-time utility maximization problems with transaction costs, starting from the seminal papers \cite{magill1976portfolio,davis1990portfolio, shreve1994optimal}. They found that the transaction costs, however small, virtually prohibit continuous portfolio rebalancing for optimal diversification. Instead, the investor should strike a balance between achieving the optimal risk exposure and diversification and minimizing the transaction costs. The transaction costs have since been widely used to model the bid-ask spread in a limit order book (e.g. \cite{kallsen2010using,gerhold2014transaction}), or to model the general liquidity cost when trading in an illiquid market (e.g. \cite{dai2019does,dai2011illiquidity}). Transaction costs also have been widely studied in portfolio selection (e.g. \cite{liu2002optimal,liu2004optimal}), in the explanation of liquidity premium (e.g. \cite{constantinides1986capital,vayanos1999equilibrium,gerhold2014transaction}), and in derivative pricing (e.g. \cite{davis1993european,kallsen2015option}).

Our proposed model is related to the literature studying non-concave utility functions. In addition to the goal-reaching utility (e.g. \cite{browne1999reaching}), aspiration utility (e.g. \cite{diecidue2008aspiration,aristidou2021rolling}), and  S-shaped utility (\cite{kah1979prospect,jin2008behavioral,he2018profit}) that will be discussed in details later, the non-concave utility functions are also widely used for modeling general objectives related to the distribution of wealth (e.g. \cite{he2011portfolio,he2020portfolio}).
 
Our result is also linked to the notion of viscosity solutions (see \cite{crandall1992user,fleming2006controlled,pham2009continuous}). 
Our definition of viscosity solution admits discontinuity that arises naturally as a joint result of the nonconcave utility function and the transaction costs.
\cite{altarovici2017optimal} consider the portfolio selection problem with both fixed and proportional transaction costs and smooth, concave utilities. Their derived value function may be discontinuous, and it is the unique viscosity solution up to a semicontinuous envelope.  Our definition is mostly close to \cite{dai2022nonconcave}, but their techniques for verifying the terminal boundary condition fail here, and we derive our condition by delicate analysis. 

The remainder of this paper is organized as follows. Section \ref{sect:model} describes the basic model setup and the assumptions. Section \ref{sect:theoretical} carries out the theoretical studies of the model, by characterizing the value function as the unique viscosity solution of the HJB equation, as well as identifying and proving the suitable terminal condition. Section \ref{sect:numerical} presents several numerical examples of our model and discusses their financial implications. Section \ref{sect:conclusion} concludes the paper.

\section{Model Setup}\label{sect:model}
We consider a finite investment horizon $T>0$ and assume that there are a risk-free asset (cash) and a risky asset (stock) in the market.
The cash position {grows at the} constant risk-free interest rate $r$ and the stock price follows
\begin{align}
\frac{dS_{t}}{S_t} = \mu(\nu_{t}) dt + \sigma (\nu_{t}) d \mathcal{B}_t,
\end{align}
where the expected stock return rate $\mu (\nu_t)\in \mathbb{R}$ and the stock volatility $\sigma(\nu_t)>0$ are assumed to be deterministic functions of the stochastic state variable $\nu_t$. 
We assume that $\nu_t$ follows 
\begin{align}\label{equ nu dyn}
d \nu_t = m(\nu_t) dt + \zeta(\nu_t) d \mathcal{B}^x_t. 
\end{align}
Here, $\{ (\mathcal{B}_t, \mathcal{B}^x_t)\}_{0\leq t \leq T}$ is a standard two-dimensional Brownian motion on a filtered probability space $(\Omega, \mathcal{F}, \{\mathcal{F}_t\}_{{0\le t \le T}}, {\mathbb{P}})$ with $\mathcal{B}_0 = \mathcal{B}^x_0 = 0$ and cross-variation $d[\mathcal{B}_t,\mathcal{B}^x_t] = \rho dt$. The filtration $\{\mathcal{F}_t\}_{{0\le t \le T}}$ is generated by this two-dimensional Brownian motion, and $\mathcal{F}_t$ contains all the {$\mathbb{P}$}-null sets of $\mathcal{F}$ for any $t\in [0, T]$.

Trading the stock incurs proportional transaction costs. We denote $\overline{X}_t$ and $\overline{Y}_t$ as the amount of wealth in the cash and stock, respectively. Let $\theta_1 \in (0,1)$ and $\theta_2 \in (0, +\infty)$ be the rates of the proportional
costs incurred on the stock sale and purchase, respectively. The dynamics of $\overline{X}_t$ and $\overline{Y}_t$, $0\leq t \leq T$, are 
\begin{align}
	\begin{cases}
	d \overline{X}_{t}  =  r \overline{X}_{t} dt -  (1+\theta_2)d \overline{L}_t + (1-\theta_1) d\overline{M}_t, \notag\\
	d \overline{Y}_{t} = \mu(\nu_t) \overline{Y}_t dt + \sigma(\nu_t) \overline{Y}_t d \mathcal{B}_t +d \overline{L}_t - d\overline{M}_t, 
	\end{cases}
\end{align}
where $\overline{L}_t$ and $\overline{M}_t$ represent the cumulative dollar amounts of stock purchase and sale, respectively.  They are both right-continuous with left limits, non-negative,non-decreasing, and adapted to {$\{\mathcal{F}_t\}_{0\le t\le T}$}. 

As \cite{dai2022nonconcave}, we consider the forward wealth in cash and stock, which are defined as 
\begin{align}
{X}_t = e^{-r(t-T)} \overline{X}_t,\quad  {Y}_t = e^{-r(t-T)} \overline{Y}_t.	
\end{align}
Then  
\begin{align}
\begin{cases}
	d {X}_{t}  =  - (1+\theta_2)d {L}_t + (1-\theta_1) d{M}_t, \\
	d {Y}_{t} = \eta(\nu_t) {Y}_t dt + \sigma(\nu_t) {Y}_t d \mathcal{B}_t + d {L}_t - d{M}_t, \label{equ:dyna XY}  
\end{cases}
\end{align}
where $\eta(\nu_t): = \mu(\nu_t) -r$ is the excess rate of return, and ${L}_t = \int_0^t e^{-r(u-T)} d \bar{L}_u$,  ${M}_t = \int_0^t e^{- r(u-T)} d \bar{M}_u$. 

\subsection{The Investor's Problem}
Denote by {$\{Z_t\}_{0\le t\le T}$} the forward wealth process, i.e., 
\begin{align}
Z_t = X_t+(1-\theta_1){Y}_t^+ - (1+\theta_2){Y}_t^- , 
\end{align}
where $Y_t^+ := \max \{0, Y_t\}$ and $Y_t^- := \max\{0, -Y_t\}$ are the positive and negative parts of $Y_t$, respectively. 
Furthermore, we impose an exogenous liquidation boundary $K\geq 0$. If the forward wealth $Z_t$ is no greater than $K$ at some time point $t$, we require the stock position to be immediately liquidated and the account to be closed, then the investor can only hold cash in $[t, T]$. For example, the liquidation boundary $K$ can be interpreted as a benchmark value for the wealth over the investment period $[0,T]$, adjusted by the interest growth (e.g., a percentage of the past high water-mark \cite{hodder2007incentive}). Setting $K=0$ leads to the usual bankruptcy constraint as in \cite{shreve1994optimal}.

The solvency region is 
\begin{align}
\mathscr{S} = \{(x, y, \nu)\in \mathbb{R}^3| x+(1-\theta_1){y}^+ - (1+\theta_2){y}^- \geq K  \}.
\end{align}  
Given an initial time $t\in [0, T]$ and position {$(X_{t-},Y_{t-}, \nu_{t-})=(x, y, \nu) \in \mathscr{S}$}, an investment strategy $(L_s, M_s)_{t\leq s \leq T}$ is admissible if $(X_s, Y_s, \nu_s)$ given by \eqref{equ nu dyn} and \eqref{equ:dyna XY} is in $\mathscr{S}$ for all $s\in[t, T]$. Denote by $\mathcal{A}_t(x, y, \nu)$ the set of all admissible strategies with initial time $t$ and initial position $(x, y, \nu)$. 

The investor's objective is choosing an admissible strategy to maximize the expected terminal utility over $Z_T$, i.e., 
\begin{align}
\sup \limits_{(L_s, M_s)_{0\leq s \leq T} \in \mathcal{A}_0(x, y, \nu)} {\mathbb{E}}_0^{x, y, \nu}[U(Z_T)],
\end{align} 
subject to \eqref{equ nu dyn} and \eqref{equ:dyna XY}, where {$\mathbb{E}_t^{x, y, \nu}$ denotes} the conditional expectation given {$X_{t-} = x$, $Y_{t-} = y$, and $\nu_{t-} = \nu$}. {Finally,  $U(\cdot)$ is the utility function, which satisfies the following assumption throughout this paper.} 

\begin{assumption}\label{assmpt:1}
The utility function $U(w), w \geq K$ is nondecreasing, right-continuous, and bounded from above by a function $C_1+C_2 z^p$ for some constants $C_1>0, C_2>0$ and $0<p<1$. Moreover, this utility function satisfies either of the following two conditions at $w=K$: \\
(1) $U(K)>-\infty$;\\
(2) $K=0$ and there are constants $\epsilon_w>0$, $\widetilde{p} \leq 0$, $A_1>0$, and $A_2 \in \mathbb{R}$ such that $U(w)=A_1 \frac{w^{\widetilde{p}}-1}{\widetilde{p}}+A_2$ for $0<w<\epsilon_w$ and $U(w) \geq A_1 \frac{w^{\widetilde{p}}-1}{\widetilde{p}}+A_2$ for all $w>0$.\footnote{Note that if $\widetilde{p}=0$, we set $U(w)=A_1 \ln (w)+A_2$ for $0<w<\epsilon_w$. In this case, the proofs of Theorems \ref{thm compa principle} and \ref{thm:viscosity solution} are not affected since the $\log$ utility also permits a closed-form solution when there are no transaction costs.   
}
\end{assumption}  
This assumption follows \cite{dai2022nonconcave}. It is broad enough to cover all CRRA utilities, and $U(z)$ is quite flexible in any compact subset of $[K,+\infty)$.

The following are some examples of non-concave utility functions satisfying this assumption. 
\begin{example}\label{exam 1m goal}
{\bf Goal-Reaching Utility. }  \cite{browne1999reaching} considers a fund manager whose objective is {maximizing the probability that the portfolio value $z$ beats some benchmark of $\bar{z}$  in a given finite time horizon}. Then the corresponding utility function is 
\begin{align}
	U(z) = \mathbf{1}_{z\geq \bar{z}},
\end{align}
where $\bar{z}$ is the benchmark. This utility function is discontinuous at $z=\bar{z}$ and hence non-concave.
\end{example}

\begin{example}\label{exam 4 aspiration}
{\bf The Aspiration Utility.} \cite{diecidue2008aspiration,aristidou2021rolling} study the type of discontinuous utility functions
\begin{align}\label{eqn:aspUtility}
U(z)  = \begin{cases}
		\frac{z^p}{p} & \text{if } z< \bar{z},\\
		c_1+c_2 \frac{z^p}{p} & \text{if } z\geq \bar{z}.
	\end{cases}
\end{align}
	Here, 
	 $p<1$ indicates the risk aversion level,
	$c_1>0$ and $c_2$ are constants such that $U(\bar{z}-)<U(\bar{z})$, and $\bar{z}$ denotes the aspiration level. 
	As a result, the utility function $U$ has an upward jump at $\bar{z}$, meaning that the investor achieves a boost in her utility once the wealth reaches $\bar{z}$. For example, this can be due to a change in the investor's social status. More theoretical and empirical evidence can be found in the above two papers.
\end{example}
\begin{example}\label{exam Sshape}
{\bf The S-shaped Utility of Prospect Theory. } \cite{kah1979prospect} consider the following S-shaped utility function: 
\begin{equation}
	{U(z)} = 
	\begin{cases}
		(z-z_0)^p  & \text{if}\ z > z_0	\\
		-\lambda (z_0-z)^p & \text{if}\ z \leq z_0,
	\end{cases}\label{eqn:Sutil}
\end{equation}
where $z_0$ is the wealth at time $0$ to distinguish gains from losses, $p \in (0, 1)$ since the investor is risk-averse over gains, and $\lambda>1$ because the pain from loss is higher than the pleasure from the same amount of gain.  
\end{example}
{In the above examples, the liquidation boundary $K$ can be any nonnegative real number, since the utility function is well-defined on $[K, \infty)$ for any $K\geq 0$.}

The following assumption is on the dynamics of stock prices. 

\begin{assumption}\label{ass:2}
1. The deterministic functions $\mu(\nu)$, $\sigma(\nu)>0$, $m(\nu)$, and $\zeta(\nu)>0$ are globally Lipschitz continuous in $\nu$ with Lipschitz constant $L$. 

2. Given $\nu \in \mathbb{R}$, $\overline{N}>0$  there are constants $\delta_{\nu, \bar{N}}$, $\widetilde{C}_{\nu, \bar{N}}$, $t_{\nu, \bar{N}}$, which are  locally uniformly positive and bounded with respect to $\nu$ and $\bar{N}$, 
\footnote{
That is, for any constant $M>0$, there is a constant $\epsilon_M$, such that for any $|\nu|, \overline{N} \leq M$, we have $\delta_{\nu, \bar{N}}$, $\widetilde{C}_{\nu, \bar{N}}$, $t_{\nu, \bar{N}}\in [\epsilon_M, 1/\epsilon_M]$.  
}
such that
for any $N>\overline{N}$ and $t\in [t_{\nu, \bar{N}}, T]$, $$\mathbb{P} \left( \max \limits_{t\leq s \leq T} |\nu_s -\nu| \geq N \bigg| \nu_t = \nu \right) \leq {\widetilde{C}_{\nu, \bar{N}}} e^{- \delta_{\nu, \bar{N}} \frac{N^2}{T-t}}.$$ 

3. For the case without transaction costs, there exists a classical solution for $U(w)=\frac{w^{\hat{p}}}{\hat{p}}, 0<\hat{p}<1$, in a time interval $\left(t_{\hat{p}}, T\right]$, where $t_{\hat{p}}\in [-\infty, T)$ is a continuous and increasing function w.r.t. $\hat{p}$ for $0<\hat{p}<1$. We assume the value function has the form 
$$
V(t, z, \nu)=\frac{z^{\hat{p}}}{\hat{p}} F^{(\hat{p})}(t, \nu),
$$
where $z$ is the initial wealth at time $t$. Moreover, the function $F^{(\hat{p})}(t, \nu)$ has the form
\begin{align*}
F^{(\hat{p})}(t, \nu):=e^{A_{\hat{p}}(t) \nu^2 +B_{\hat{p}}(t) \nu +C_{\hat{p}}(t)},
\end{align*}
where $F^{(\hat{p})}(t, \nu)$ is finite for $t_{\hat{p}}<t \leq T$,  $A_{\hat{p}}(t)>0, F^{(\hat{p})}(t, \nu)$ are strictly decreasing in $t$, and
\begin{align*}
\left\{
\begin{array}{>{\displaystyle}l}
\lim _{q \rightarrow \hat{p}+} A_q(t)=A_{\hat{p}}(t) \\
\lim _{q \rightarrow \hat{p}+} B_q(t)=B_{\hat{p}}(t) \\
\lim _{q \rightarrow \hat{p}+} C_q(t)=C_{\hat{p}}(t), \\
A_q(t)>A_{\hat{p}}(t), \quad \text { when } q>\hat{p},\ t_q<t \leq T .
\end{array}
\right.
\end{align*}
\end{assumption}
Assumption \ref{ass:2} covers the following Gaussian mean return model studied in \cite{kim1996dynamic}\footnote{ We will verify that it satisfies Assumption \ref{ass:2}.2 in \ref{sec GMR ass2.2}; the verification for Assumption \ref{ass:2}\textcolor{blue}{.3} follows from \cite{liu2007portfolio}. Depending on the parameters, the Gaussian mean return model may either explode in finite time ($t_{\hat{p}} > -\infty$) or never explode in finite time ($t_{\hat{p}} = -\infty$); see \cite{kim1996dynamic}. The geometric Brownian motion never explodes in finite time ($t_{\hat{p}}=-\infty$).}
\begin{align}\label{param:GMR}
\mu(\nu) = r+ \sigma \nu, ~\sigma(\nu) \equiv \sigma, ~m(\nu) = \kappa (\bar{\nu} - \nu), ~\zeta(\nu) \equiv \zeta,
\end{align}
which captures the effect of time-varying risk premium.
Assumption \ref{ass:2} also covers the geometric Brownian motion model with
\begin{align}\label{param:GBM}
	\mu(\nu)\equiv r+\eta,~\sigma(\nu)\equiv \sigma,
\end{align}
where $\eta$ is the risk premium. 

Note that Assumption \textcolor{blue}{\ref{ass:2}.3} allows for a wider class of models beyond the Gaussian mean return model (see, for example, the Heston model and the CIR model in  \cite{liu2007portfolio}). Our main restriction on the model setup is from Assumptions \textcolor{blue}{\ref{ass:2}.1} and \textcolor{blue}{\ref{ass:2}.2}.

\section{Theoretical Analysis}\label{sect:theoretical}
In the following, we denote 
\begin{align}\label{eqn:z}
z: = x+(1-\theta_1){y}^+ - (1+\theta_2){y}^-.
\end{align}
We define the value function by 
\begin{align}\label{eqn:vf}
V(t, x, y, \nu) = \sup \limits_{(L_s, M_s)_{t\leq s \leq T} \in \mathcal{A}_t(x, y, \nu)} {\mathbb{E}}_t^{x, y, \nu}[U(Z_T)]
\end{align} 
for $(x, y, \nu) \in \mathscr{S}$, $0\leq t \leq T$. Formally, in the interior of $[0, T]\times \mathscr{S}$, $V(t, x, y, \nu)$ satisfies the following HJB equation
\begin{align}\label{equ PDE main}
\begin{split}
&\mathcal{L} V: = \min\bigg\{-V_t -\frac{1}{2}\bigg(\sigma^2(\nu) y^2 V_{yy} + \zeta^2(\nu) V_{\nu\nu} + 2 \rho \sigma(\nu) y \zeta(\nu) V_{y\nu} \bigg) - \eta(\nu) y V_y-m(\nu) V_\nu ,\\
&\qquad \qquad \qquad  V_y - (1-\theta_1)V_x, \quad (1+\theta_2)  V_x - V_y   \bigg\} = 0,
\end{split}
\end{align}
with the following boundary condition  for $x_0+ (1-\theta_1) y_0^+ - (1+\theta_2) y_0^- = K$,  
\begin{align}\label{equ bound condition}
\begin{cases}
\lim \limits_{(s, x, y, \nu) \rightarrow(t, x_0, y_0, \nu_0)} V(s, x, y, \nu)=U(K) & \text { if } U(K)>-\infty \\ 
\lim \limits_{(s, x, y, \nu) \rightarrow(t, x_0, y_0, \nu_0)}  V(s, x, y, \nu)-V_{CRRA}(s, x, y, \nu)=0 & \text { if } K=0 \text { and } U(0)=-\infty,
\end{cases}
\end{align}
where $V_{CRRA}(s, x, y, \nu)$ is the value function when the terminal utility is $U(w) = A_1 \frac{w^{\widetilde{p}}-1}{\widetilde{p}}+A_2$, $\forall w\geq 0$, and initial state $(x_s, y_s, \nu_s) = (x, y, \nu)$.
In the special case where $S$ follows a geometric Brownian motion, all derivatives of $V$ with respect to $\nu$ vanish from the equation \eqref{equ PDE main}, which then reduces to 
\begin{align*}
	\min\left\{-V_t -\frac{1}{2}\sigma^2 y^2 V_{yy} - \eta y V_y ,~~ V_y - (1-\theta_1)V_x,~~ (1+\theta_2)  V_x - V_y   \right\} = 0. 
\end{align*}

\subsection{Terminal Condition}
Without transaction costs, the investor can push for infinite instantaneous portfolio variance near the terminal time, and the concavification principle holds. Therefore, the terminal utility can be replaced by its concave envelope, which is continuous. With portfolio bounds that put a hard constraint on the leverage and limit the portfolio variance, the concavification principle is proved to be invalid by \cite{dai2022nonconcave}, but the intuition of taking the maximum allowed leverage around terminal time still holds.

By introducing transaction costs, the behavior of the value function and the strategy near terminal time become more intriguing. While the investor has the incentive to take a large portfolio variance allowed as mentioned above, transaction costs virtually prohibit the investor from taking infinite variance. Consequently, the concavification principle becomes no longer applicable in the presence of transaction costs. Intuitively, compared to the hard constraint on leverage imposed by the portfolio bound, transaction costs impose a ``soft'' constraint. Indeed, the following theorem characterizes the asymptotic behavior of the value function as the time approaches maturity $T$, which confirms that the value function can be discontinuous if it is both close to the terminal in time and close to the jump points of the utility function.

\begin{theorem}\label{prop:term}
{The value function $V$ defined in \eqref{eqn:vf} satisfies}
\begin{align} \label{equ: tercon}
	\lim  \limits_{(t, x, y, \nu) \to ( T-, \hat{x}, \hat{y}, \hat{\nu})} V(t, x, y, \nu) - U(\hat{z}-)-2 \Phi \bigg(\frac{\min\{z-\hat{z}, 0\}}{ |\hat{z}-x| \sigma(\hat{\nu})\sqrt{T-t} } \bigg) \left(U(\hat{z}) - U(\hat{z}-)\right)= 0,
\end{align}
where $U(\hat{z}-)$ is the left limit of $U$ at $\hat{z}$, $U(K-)$ = $U(K)$, $\Phi$ is the standard normal cumulative distribution function, and $\hat{z}$ is defined by \eqref{eqn:z} with $(x,y,z)$ replaced by $(\hat{x},\hat{y},\hat{z})$.
In the case $|\hat{z}-x| = 0$, we set 
\begin{numcases}{\Phi\bigg(\frac{\min\{z-\hat{z}, 0\}}{ |\hat{z}-x| \sigma(\hat{\nu})\sqrt{T-t} }\bigg) = }
	0 & \text{when }$z<\hat{z}$, \notag\\
	1 & \text{when }$z\geq \hat{z}$. \notag
\end{numcases} 
\end{theorem}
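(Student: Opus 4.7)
The plan is to prove \eqref{equ: tercon} by establishing matching lower and upper bounds on $V(t,x,y,\nu)$ as $(t,x,y,\nu)\to (T-,\hat x,\hat y,\hat \nu)$. Writing $p:=2\Phi(\min\{z-\hat z,0\}/(|\hat z-x|\sigma(\hat\nu)\sqrt{T-t}))$, the target limit $U(\hat z-)+p\,(U(\hat z)-U(\hat z-))$ has a clean probabilistic interpretation: $p$ is asymptotically the largest attainable value of $\mathbb{P}(Z_T\ge \hat z)$, and the expected terminal utility degenerates to that of a Bernoulli random variable taking the value $U(\hat z)$ with probability $p$ and $U(\hat z-)$ otherwise.

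For the lower bound I would construct an explicit admissible strategy. Focusing on $\hat y>0$ (the case $\hat y<0$ is symmetric, and $\hat y=0$ is trivial because then $|\hat z-x|\to 0$), consider the ``hold-then-liquidate'' strategy: keep $(X_s,Y_s)\equiv(x,y)$ with no trading until $\tau:=\inf\{s>t\colon Z_s\ge \hat z\}$, and if $\tau\le T$, liquidate the entire stock position at $\tau$. Two key observations drive the computation: (i) under no trading with $Y_s>0$ the process $Z_s$ satisfies $dZ_s=(1-\theta_1)Y_s(\eta(\nu_s)\,ds+\sigma(\nu_s)\,dB_s)$, so $Z$ is continuous and $Z_\tau=\hat z$; (ii) liquidating a long position does not change $Z$ (the coefficient of $dM_s$ in $dZ_s$ vanishes when $Y_s>0$), so $Z_T=Z_\tau=\hat z$ on $\{\tau\le T\}$. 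The reflection principle applied to the leading-order approximation $Z_s-z\approx (1-\theta_1)y\sigma(\hat\nu)(B_s-B_t)$ then yields $\mathbb{P}(\tau\le T)\to p$, since $(1-\theta_1)y\to \hat z-\hat x=|\hat z-\hat x|$. On $\{\tau>T\}$ the bound $U(Z_T)\ge U(\hat z-)-\epsilon$ holds on $\{|Z_T-\hat z|<\delta\}$ by the definition of $U(\hat z-)=\sup_{w<\hat z}U(w)$, and Assumption \ref{assmpt:1} together with a moment bound disposes of the remaining tail whose probability vanishes as $t\to T$.

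For the upper bound, an It\^o/Tanaka decomposition of $Z=X+(1-\theta_1)Y^+-(1+\theta_2)Y^-$ gives
\begin{align*}
dZ_s=\Lambda_s\bigl(\eta(\nu_s)\,ds+\sigma(\nu_s)\,dB_s\bigr)-(\theta_1+\theta_2)\,d\Xi_s,
\end{align*}
where $\Lambda_s:=(1-\theta_1)Y_s^+-(1+\theta_2)Y_s^-$ is the liquidation value and $\Xi$ is a non-decreasing process capturing the ``wasted'' trades (purchases while long or sales while short). Since $\Xi$ is non-decreasing, $Z_s\le \hat Z_s:=z+\int_t^s\Lambda_u\eta(\nu_u)\,du+\int_t^s\Lambda_u\sigma(\nu_u)\,dB_u$ for every $s\in[t,T]$, and hence $\mathbb{P}(Z_T\ge \hat z)\le \mathbb{P}(\sup_{s\in[t,T]}\hat Z_s\ge \hat z)$. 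Combining the Dambis--Dubins--Schwarz time change with the Brownian reflection principle, the right-hand side is bounded by $p+o(1)$, provided $\sup_{s\in[t,T]}|\Lambda_s-(\hat z-\hat x)|=o(1)$ with high probability. The contribution from $\{Z_T>\hat z+\delta\}$ is controlled using the polynomial growth bound $U(z)\le C_1+C_2 z^p$ together with the estimate $\mathbb{E}[Z_T^{p}\,\mathbf{1}_{Z_T>\hat z+\delta}]\to 0$; the right-continuity $U(\hat z+)=U(\hat z)$ absorbs the boundary layer just above $\hat z$; and on $\{Z_T<\hat z\}$ we simply use $U(Z_T)\le U(\hat z-)$.

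The technical heart of the proof -- and the main obstacle -- is proving the strategy-uniform control on $\Lambda_s$ that underlies the upper bound. The investor could try to enlarge $|\Lambda_s|$ by trading, but an increment of $|Y|$ by $\delta$ forces at least $\min\{\theta_1,\theta_2\}\cdot\delta$ into $\Xi_s$, while its marginal contribution to the diffusion of $Z_T$ is only $O(\delta\sigma(\hat\nu)\sqrt{T-s})$; for $s$ near $T$ this trade-off is unfavorable by a factor of order $\sqrt{T-t}$, so any reasonable strategy should keep $\Lambda_s$ close to $\hat z-\hat x$. Formalizing this into a sharp estimate uniform over all admissible strategies would proceed via a stopping-time decomposition into a small-deviation regime (where $|\Lambda_s-(\hat z-\hat x)|\le \epsilon$ and the reflection bound applies directly) and a large-deviation regime whose probability is dominated by a concentration estimate that exploits the transaction cost absorbed in $\Xi$. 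This step also makes quantitatively precise why the concavification principle fails in our setting: the sharp bound $p<1$ for $z<\hat z$ shows that the investor cannot asymptotically attain $\mathbb{P}(Z_T\ge\hat z)=1$, in stark contrast to the frictionless case where infinite instantaneous variance is available.
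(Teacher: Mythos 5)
Your lower bound argument essentially matches the paper's (Proposition~\ref{prop:goalreaching}), but with one gap: the hold-then-liquidate strategy as stated stops only at the upper barrier $\tau=\inf\{s\colon Z_s\ge\hat z\}$, so on $\{\tau>T\}$ the terminal wealth $Z_T$ can fall well below $\hat z$. When $K=0$ and $U(0)=-\infty$ (Case 2 of Assumption~\ref{assmpt:1}) the phrase ``a moment bound disposes of the remaining tail'' is not enough, since $\mathbb{E}[|U(Z_T)|\mathbf{1}_{\{Z_T<\hat z-\delta\}}]$ involves a utility unbounded below. The paper's strategy $\pi^*$ adds a second, lower barrier at $\hat z-\delta$ and liquidates at $\tau_{\hat z}\wedge\tau_{\hat z-\delta}$, which forces $Z_T^{\pi^*}\ge\hat z-\delta$ pathwise and eliminates the issue.

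Your It\^o--Tanaka decomposition $dZ_s=\Lambda_s(\eta\,ds+\sigma\,dB_s)-(\theta_1+\theta_2)\,d\Xi_s$ is correct (and quite elegant), and it isolates the right object: the attainable diffusion of $Z$ is controlled by $\Lambda_s$. But the upper bound is exactly where the real work lies, and I see two genuine gaps. First, your heuristic trade-off ---``increasing $|Y|$ by $\delta$ costs $\min\{\theta_1,\theta_2\}\delta$ into $\Xi$, while the marginal contribution to the diffusion is only $O(\delta\sigma\sqrt{T-s})$'' --- compares cost to diffusion, not cost to the marginal probability gain. In fact for $z$ far enough from $\hat z$ (specifically, $\hat z-z\gtrsim(\theta_1+\theta_2)|\Lambda|$) trading \emph{is} beneficial, which the paper explicitly acknowledges: the correct statement is not that trading never pays, but that its contribution vanishes as $T-t\to0$, and this distinction makes the estimate subtler than your sketch suggests. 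Second, and more importantly, your stopping-time plan conditions on $|\Lambda_s-(\hat z-\hat x)|\le\epsilon$, but $\Lambda_s=(1-\theta_1)Y_s^+-(1+\theta_2)Y_s^-$ can move either because the investor trades (costly, absorbed in $\Xi$) or because the stock price moves (free). Your sketch addresses only the first channel; a rigorous argument must disentangle the two, and the concentration estimate you invoke ``exploiting the transaction cost absorbed in $\Xi$'' is not written down. Finally, you assert $\mathbb{E}[Z_T^p\mathbf{1}_{\{Z_T>\hat z+\delta\}}]\to0$ \emph{uniformly over all admissible strategies}, but this is itself the crux, not a footnote; it is exactly what the paper's Proposition~\ref{pro:estipro} proves.

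For comparison, the paper sidesteps the pathwise control of $\Lambda_s$ entirely. For the goal-reaching kernel (Proposition~\ref{prop:goalreaching}), it first shows that for $y>0$ the buying and shorting transactions produce a nonpositive excess return on the event $\{\max_{t\le s_1,s_2\le T}S_{s_2}/S_{s_1}\le(1+\theta_2)/(1-\theta_1)\}$, whose complement has probability $\to0$ as $T-t\to0$; this reduces the analysis to sell-only strategies, for which $Z_T-z$ is bounded by the running max of $S_s/S_t$. For the polynomial tail needed for general $U$ (Proposition~\ref{pro:estipro}), it bounds the cumulative purchase amount $L_T-L_t$ by $\theta_1^{-1}\max_{t\le s\le T}Z_s^{(0)}$, where $Z^{(0)}$ is the frictionless comparison wealth whose running maximum is controlled by the closed-form CRRA value function (Lemma~\ref{lem:hitting prob}); conditioning on $L_T-L_t<B$ and $M_T-M_t<B$ then bounds the stock exposure and reduces the problem to a stock-price fluctuation estimate (Lemma~\ref{lem:hitting prob GBM}). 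Both approaches rest on the same intuition, but the paper's route converts the strategy-uniform control into a bound on cumulative trading, which is much easier to estimate than the pathwise exposure $\Lambda_s$. If you wish to pursue your decomposition, you would need to (i) add the frictionless majorant to control the terminal tail, and (ii) replace the $\Lambda_s$-concentration argument with something that separately handles the market-movement and trading channels; as it stands, the key estimate is missing.
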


The proof of Theorem \ref{prop:term} will be relegated to \ref{sect:propterm}. The proof is significantly different from \cite{dai2022nonconcave} from the technical perspective. Indeed, they verify the discontinuous terminal condition by reducing the original problem to a one-dimensional problem. However, such a kind of homotheticity does not apply here. 
Instead, we directly estimate the contribution of transactions in the total value function by delicate mathematical analysis, and we show that when the time is close to maturity, this contribution is marginal, if not negative. The technical difficulty lies in the arbitrariness of trading strategy, and we are able to build a uniform estimation over all trading strategies. Our approach has the potential to be extended to more general market frictions.

Intuitively, to make the wealth increase to the threshold $\hat{z}$, either the investor needs to hold a large (positive or negative) amount of stock, or the stock price needs to be sufficiently fluctuating. But the value of holding a large amount of stock is eroded by the transaction costs, while the contribution of stock price fluctuation is smaller and smaller as time approaches maturity.  In this way, we can show that the value of the transaction is marginal. 

In the following, we provide the intuitions on the terminal condition \eqref{equ: tercon}.
In the special case $U(\hat{z}-) = U(\hat{z})$, i.e., $U$ is continuous around $\hat{z}$, \eqref{equ: tercon} degenerates to 
\begin{align}
	\lim  \limits_{(t, x, y, \nu) \to ( T-, \hat{x}, \hat{y}, \hat{\nu})} V(t, x, y, \nu) = U(\hat{z}), 
\end{align}
which implies a continuous terminal condition consistent with the classical definition. To elaborate on the more interesting case when $U(\hat{z}-)< U(\hat{z})$, we consider the goal-reaching problem by letting $U(z) = \mathbf{1}_{z\geq \hat{z}}$ with $\hat{z} = 1$. 
For the ease of exposition, we take as an example the case where $\mu$ and $\sigma$ are specified as \eqref{param:GBM}; similar intuition holds for the general case.
Consequently, \eqref{equ: tercon} degenerates into
\begin{align}
	\lim  \limits_{(t, x, y) \to ( T-, \hat{x}, \hat{y})} V(t, x, y) -2 \Phi \left(\frac{\min\{z-1, 0\}}{ |1-x| \sigma\sqrt{T-t} } \right) = 0. \label{equ exam goal term}
\end{align}
The equation \eqref{equ exam goal term} indicates the failure of the concavification principle, since this principle implies the boundary condition
\begin{align}
	\lim  \limits_{(t, x, y) \to ( T-, \hat{x}, \hat{y})} V(t, x, y)  = \hat{z}. 
\end{align}
We discuss \eqref{equ exam goal term} in two cases.
{\flushleft\textbf{When $z\geq 1$ with limit $\hat{z} = 1$:}} this equation becomes 
\begin{align}
	\lim  \limits_{(t, x, y) \to ( T-, \hat{x}, \hat{y})} V(t, x, y) - 1  = 0. 
\end{align}
The interpretation is that, when wealth is higher than the target, the investor can always liquidate the entire stock position and reach the goal. 

{\flushleft\textbf{When $z\leq 1$ with limit $\hat{z} = 1$:}} this equation becomes 
\begin{align}\label{equ terminal goal-reaching}
	\lim  \limits_{(t, x, y) \to ( T-, \hat{x}, \hat{y})} V(t, x, y) -2 \Phi \left(\frac{z-1}{ |1-x| \sigma\sqrt{T-t} } \right) = 0. 
\end{align}
In the limiting process, the second term has a singularity around $(T, \hat{x}, \hat{y})$, which cancels out the singularity of $V$ around this point. 
The second term is nothing but the leading term around $(T, \hat{x}, \hat{y})$ of the value function under the following strategy: the investor makes no transaction before reaching the goal and liquidates the entire stock position immediately after reaching the goal.   
Because when $T-t$ is short, the stock account wealth dynamic is approximately
\begin{align}
& d \widetilde{Y}_s =  \sigma y d \mathcal{B}_s,\ t\leq s \leq T, \quad \widetilde{Y}_t =  Y_t= y. \notag
\end{align}  
Therefore, $ \frac{\widetilde{Y}_s-y}{\sigma y }$ is a standard Brownian motion. Taking $x<1$ as an example, we have
\begin{align}
&\mathbb{P}( Z_T\geq1 ) = \mathbb{P} \bigg( \max\limits_{t\leq s \leq T} (1-\theta_1)Y_s \geq 1-x\bigg) \notag\\
 & \qquad \approx   \mathbb{P} \bigg( \max\limits_{t\leq s \leq T} (1-\theta_1)\widetilde{Y}_s  \geq 1-x \bigg) 
=  \mathbb{P} \bigg( \max\limits_{t\leq s \leq T}  \frac{\widetilde{Y}_s-y}{\sigma y } \geq \frac{1-z}{ (1-\theta_1) \sigma y  } \bigg). \notag
\end{align}
Since ${x}+(1-\theta_1)y \to 1$, 
\begin{align}
 \mathbb{P} \bigg( \max\limits_{t\leq s \leq T}  \frac{\widetilde{Y}_s-y}{\sigma y } \geq \frac{1-z}{ (1-\theta_1) \sigma y  } \bigg)\approx  \mathbb{P} \bigg( \max\limits_{t\leq s \leq T}  \frac{\widetilde{Y}_s-y}{\sigma y } \geq \frac{1-z}{ (1-x) \sigma  } \bigg) = 2 \Phi \bigg(\frac{z-1}{ (1-x) \sigma\sqrt{T-t} } \bigg). \notag
\end{align}
The case $x>1$ is analogous. 

It is worth clarifying that the condition \eqref{equ terminal goal-reaching} does not imply that the optimal strategy prohibits transactions before reaching the goal when $T-t$ is sufficiently small. On the contrary, numerical results in Section \ref{sect:numerical} illustrate that it can still be optimal to buy or sell in this case, which even leads to a very high leverage. 
The high leverage does not appear in the terminal condition \eqref{equ terminal goal-reaching} because when time to maturity $T-t$ is small, the contribution of transaction to the utility is marginal.

\subsection{Viscosity Solution and Comparison Principle}
In this subsection, we show that the value function $V(t, x, y)$ is the unique viscosity solution to the PDE problem \eqref{equ PDE main} with boundary condition \eqref{equ bound condition} and terminal condition  \eqref{equ: tercon}.     
{In the following, we simply refer to the PDE together with the boundary and terminal conditions as the HJB equation \eqref{equ PDE main} -- \eqref{equ: tercon}.}

We first introduce our notion of viscosity solution that incorporates the discontinuity implied by the terminal condition  \eqref{equ: tercon}.
Define the lower semicontinuous envelope and upper semicontinuous envelope of the value function $V$ as
\begin{align}
V_*(t, x, y, \nu)&=\liminf_{(t_1,x_1, y_1, \nu_1)\rightarrow(t, x, y, \nu)}V(t_1,x_1, y_1, \nu_1)\\
V^*(t, x, y, \nu)&=\limsup_{(t_1,x_1, y_1, \nu_1)\rightarrow(t, x, y, \nu)}V(t_1,x_1, y_1, \nu_1).
\end{align}

\begin{definition}[Viscosity Solution]\label{def visco}
(i).
We say that $V$ is a viscosity subsolution of the HJB equation \eqref{equ PDE main} -- \eqref{equ: tercon} if it satisfies the following conditions:
\\a) For all smooth $\psi$ such that $ V^* \leq\psi$ and
$V^*(\hat{t}, \hat{x}, \hat{y}, \hat{\nu})=\psi(\hat{t}, \hat{x}, \hat{y}, \hat{\nu})$ for some $(\hat{t}, \hat{x}, \hat{y}, \hat{\nu})\in [0, T)\times\mathscr{S}$, we have 
\begin{align}
\mathcal{L} \psi(\hat{t}, \hat{x}, \hat{y}, \hat{\nu}) \leq 0. \label{HJBsubt}
\end{align}
b) For all $0\leq \hat{t}< T$,
\begin{align}
\begin{cases}
\limsup \limits_{(t, x, y, \nu) \to (\hat{t}, \hat{x}, \hat{y}, \hat{\nu})}V^*(t, x, y, \nu) - U(K) \leq 0 & \text { if } U(K)>-\infty \\ 
\limsup \limits_{(t, x, y, \nu) \to (\hat{t}, \hat{x}, \hat{y}, \hat{\nu})}V^*(t, x, y, \nu)-V_{CRRA}(t, x, y, \nu)\leq0 & \text { if } K=0 \text { and } U(0)=-\infty.
\end{cases}
\end{align}
c) For all $w\geq K$,
\begin{align}
\limsup  \limits_{(t, x, y, \nu) \to ( T-, \hat{x}, \hat{y}, \hat{\nu})} V^*(t, x, y, \nu) - U(\hat{z}-)-2 \Phi \bigg(\frac{\min\{z-\hat{z}, 0\}}{ |\hat{z}-x| \sigma(\hat{\nu})\sqrt{T-t} } \bigg) \bigg(U(\hat{z}) - U(\hat{z}-)\bigg)\leq 0. \label{bTerValuecdsub}
\end{align}
(ii).
We say that $V$ is a viscosity supersolution of the HJB equation \eqref{equ PDE main} -- \eqref{equ: tercon} if it satisfies the following conditions:
\\a) For all smooth $\psi$ such that $ V_* \geq\psi$ and
$V_*(\hat{t}, \hat{x}, \hat{y}, \hat{\nu})=\psi(\hat{t}, \hat{x}, \hat{y}, \hat{\nu})$ for some $(\hat{t}, \hat{x}, \hat{y}, \hat{\nu})\in [0, T)\times\mathscr{S}$, we have 
\begin{align}
\mathcal{L} \psi(\hat{t}, \hat{x}, \hat{y}, \hat{\nu}) \geq 0. \label{HJBsupt}
\end{align}
b) For all $0\leq \hat{t}< T$, 
\begin{align}
\begin{cases}
\liminf \limits_{(t, x, y, \nu) \to (\hat{t}, \hat{x}, \hat{y}, \hat{\nu})}V_*(t, x, y, \nu) - U(K) \geq 0 & \text { if } U(K)>-\infty \\ 
\liminf \limits_{(t, x, y, \nu) \to (\hat{t}, \hat{x}, \hat{y}, \hat{\nu})}V_*(t, x, y, \nu)-V_{CRRA}(t, x, y, \nu)\geq0 & \text { if } K=0 \text { and } U(0)=-\infty.
\end{cases}
\end{align}
c) For all $w\geq K$,
\begin{align}
\liminf  \limits_{(t, x, y, \nu) \to ( T-, \hat{x}, \hat{y}, \hat{\nu})} V_*(t, x, y, \nu) - U(\hat{z}-)-2 \Phi \bigg(\frac{\min\{z-\hat{z}, 0\}}{ |\hat{z}-x| \sigma(\hat{\nu})\sqrt{T-t} } \bigg) \bigg(U(\hat{z}) - U(\hat{z}-)\bigg)\geq 0. \label{bTerValuecdsup}
\end{align}
 (iii).
  We say that $V$ is a viscosity solution if it is both a viscosity supersolution and subsolution.
\end{definition}

Define the set 
\begin{align}
\mathcal{C}: = &\bigg\{ v(t, x, y, \nu):(t_p, T]\times \mathscr{S} \to \mathbb{R} \bigg|  v(t, x, y, \nu) \leq C_2 (x+y)^p F^{(p)}(t, \nu) +C_1 \\
&\qquad  \text{and } 
v(t, x, y, \nu) \geq 
\begin{cases}
U(K), \qquad \qquad \ \ \text{ if } U(K)>-\infty,\\
V_{CRRA}(t, x, y, \nu),  \text{ if } K=0 \text { and } U(0)=-\infty
\end{cases} 
\bigg\}.
\end{align}
With the notion of viscosity subsolution (supersolution) in Definition \ref{def visco}, we have the following comparison principle. 
\begin{theorem}[Comparison Principle] \label{thm compa principle}
Assume that $u$ and $v$ are a viscosity subsolution and a supersolution to HJB equation 
\eqref{equ PDE main} -- \eqref{equ: tercon}, respectively. If $u$ and $v$ are both in $\mathcal{C}$, then $u\leq v$ in  $(t_p, T]\times\mathscr{S}$. 
\end{theorem}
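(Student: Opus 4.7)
The proof would follow the classical viscosity-comparison blueprint (doubling of variables plus Crandall--Ishii lemma) adapted to handle three nonstandard features: the polynomial-times-$F^{(p)}$ growth of the class $\mathcal{C}$, the quasi-variational (gradient-constraint) structure of $\mathcal{L}$, and the intrinsically discontinuous terminal condition of Definition \ref{def visco}(c).

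\textbf{Step 1 (growth penalization).} Suppose for contradiction that $\sup_{(t_p,T]\times\mathscr{S}}(u-v)>0$. Using Assumption \ref{ass:2}.3, pick $q$ slightly larger than $p$ and construct a smooth function $h(t,x,y,\nu)$, built from $(x+y+c)^q F^{(q)}(t,\nu)$ with a small tilt in $(x,y)$ to separate the two gradient constraints, which is a strict classical supersolution of $\mathcal{L}=0$ on $(t_q,T]\times\mathscr{S}$ and whose growth strictly dominates that of the class $\mathcal{C}$. Since $\lim_{q\to p+}F^{(q)}=F^{(p)}$, it suffices to prove $u\leq v$ on $(t_q,T]\times\mathscr{S}$ for each such $q$. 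Replace $u$ by $u_\epsilon:=u-\epsilon h$; then $u_\epsilon$ remains a subsolution up to an $\epsilon c_0$ margin (where $c_0>0$ is the strictness of $h$), $u_\epsilon-v$ is still strictly positive somewhere for small $\epsilon$, and the supremum is now attained over a bounded set.

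\textbf{Step 2 (doubling of variables).} For $\alpha>0$ define
\[
\Psi_\alpha(t,P,s,Q):=u_\epsilon(t,P)-v(s,Q)-\tfrac{\alpha}{2}\bigl(|t-s|^2+|P-Q|^2\bigr),
\]
with $P=(x,y,\nu)$, and let $(\hat t_\alpha,\hat P_\alpha,\hat s_\alpha,\hat Q_\alpha)$ be a maximizer. Classical estimates give $\alpha(|\hat t_\alpha-\hat s_\alpha|^2+|\hat P_\alpha-\hat Q_\alpha|^2)\to 0$. One must rule out that the limit point sits on $\partial\mathscr{S}$ (where $z=K$) or at $t=T$. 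On $\partial\mathscr{S}$, Definition \ref{def visco}(b) collapses both $u^*$ and $v_*$ to the same quantity ($U(K)$ or $V_{CRRA}$), which together with $\epsilon h>0$ forces $u_\epsilon-v<0$ there. At $t=T$, Definition \ref{def visco}(c) applied separately to $u^*$ and $v_*$ produces the \emph{same} limiting profile $U(\hat z-)+2\Phi(\cdot)[U(\hat z)-U(\hat z-)]$, so $u^*-v_*\leq 0$ in the limit, again contradicting the positive supremum. Hence the maximizer must lie in the parabolic interior $(t_q,T)\times\mathrm{int}(\mathscr{S})$.

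\textbf{Step 3 (Ishii's lemma and contradiction).} Apply the Crandall--Ishii lemma at the interior maximizer to obtain parabolic second-order super/subjets for $u_\epsilon$ and $v$ together with the usual $3\alpha$-matrix inequality. The operator $\mathcal{L}$ has three components. If on the $u$-side one gradient constraint is active, say $(u_\epsilon)_y-(1-\theta_1)(u_\epsilon)_x\leq 0$, then because $h$ was chosen so that $h_y-(1-\theta_1)h_x>c>0$ (and analogously for the second), the corresponding quantity for $v$ at $\hat Q_\alpha$ is strictly negative, violating the supersolution inequality. Otherwise the parabolic part is the binding one on both sides, and subtracting the sub- and supersolution inequalities, controlling the second-order terms via the matrix bound (the cross term $\rho\sigma y\zeta$ is absorbed in the standard way), and using the strict margin $\mathcal{L}h\geq c_0>0$ yields a contradiction of the form $\epsilon c_0\leq o_\alpha(1)$. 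Sending $\epsilon\to 0$ and then $q\to p+$ gives $u\leq v$ on all of $(t_p,T]\times\mathscr{S}$.

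\textbf{Main obstacle.} The hardest part is ruling out a maximizer at $t=T$ in Step 2: the terminal profile \eqref{equ: tercon} contains the singular factor $\Phi\bigl(\frac{\min\{z-\hat z,0\}}{|\hat z-x|\sigma(\hat\nu)\sqrt{T-t}}\bigr)$, whose denominator vanishes as $(t,x)\to(T,\hat z)$, so one cannot simply compare $u^*(T,\cdot)$ and $v_*(T,\cdot)$ as continuous boundary data. Instead, one must simultaneously track the approach rates of $(\hat t_\alpha,\hat P_\alpha)$ and $(\hat s_\alpha,\hat Q_\alpha)$ toward $(T,\hat x,\hat y,\hat\nu)$ and combine Theorem \ref{prop:term} with Definition \ref{def visco}(c) to verify that both envelopes admit identical asymptotic expansions along any (even singular) approach direction, thereby ensuring $u^*-v_*\leq 0$ at $t=T$.
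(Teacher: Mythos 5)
Your blueprint matches the paper's closely: doubling of variables plus the Crandall--Ishii lemma, a growth-controlling auxiliary function built from the $q$-power CRRA solution with $q$ slightly above $p$, ruling out the spatial boundary $z=K$ via Definition \ref{def visco}(b), ruling out the terminal face $t=T$ via Definition \ref{def visco}(c), and a three-case contradiction for the quasi-variational operator. A few comments on where you diverge from (or misstate) the paper's actual argument.

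\textbf{Missing the zeroth-order term (the real gap).} Your contradiction in Step~3 for the parabolic case hinges on the claim that $h$ can be chosen as a \emph{strict} classical supersolution, $\mathcal{L}h\ge c_0>0$, so that $u_\epsilon = u-\epsilon h$ is a strict subsolution and you obtain $\epsilon c_0\le o_\alpha(1)$. But the natural candidate $(x+y)^qF^{(q)}(t,\nu)$ (or the paper's $((x+y)^q+C)F^{(q)}(t-(q-p),\nu)$) is built precisely so that the parabolic component of $\mathcal{L}$ applied to it is $\ge 0$ \emph{with equality on the frictionless Merton line}; see the computation culminating in \eqref{equ h esti}. So no strictly positive margin $c_0$ is available there, and your Step~3 contradiction does not close. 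The paper supplies the needed strictness differently: it first rescales $u$ and $v$ by $e^{\beta(t-T)}$, which adds a zeroth-order term $+\beta F$ to the PDE, and then the final inequality is driven by $\beta(\psi-\phi)\ge \beta\delta>0$ at the doubled maximizer, not by a strict margin of the auxiliary function. You need either this exponential-in-time rescaling (the standard and clean route) or a genuinely strict auxiliary supersolution, and your construction does not provide the latter.

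\textbf{The ``tilt'' is unnecessary.} You propose adding a small tilt in $(x,y)$ so that $h_y-(1-\theta_1)h_x$ and $(1+\theta_2)h_x-h_y$ are strictly positive. Since the auxiliary function depends on $x,y$ only through $x+y$, one gets $h_x=h_y$ automatically, hence $h_y-(1-\theta_1)h_x=\theta_1 h_y>0$ and $(1+\theta_2)h_x-h_y=\theta_2 h_x>0$ already. The paper exploits exactly this to get the contradiction in cases (i) and (ii) of its Ishii step, so no tilt is needed. This is minor, but it suggests you did not check the gradient-constraint margins against the concrete $h$.

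\textbf{The terminal face is less delicate than you say.} You flag the singular factor $\Phi\bigl(\min\{z-\hat z,0\}/(|\hat z-x|\sigma(\hat\nu)\sqrt{T-t})\bigr)$ as the main obstacle and suggest tracking approach rates and invoking Theorem \ref{prop:term}. Two points. First, invoking Theorem \ref{prop:term} would be out of place: that theorem asserts a property of the particular value function $V$, whereas the comparison principle must hold for \emph{arbitrary} $u,v$ satisfying the sub/supersolution definitions; the only usable input is Definition \ref{def visco}(c) itself. Second, no rate-tracking is required: in the doubling step, $\alpha\to\infty$ forces both maximizer components to converge to a common point $(T,\hat x_0,\hat y_0,\hat\nu_0)$, and the profile in Definition \ref{def visco}(c) is the \emph{same} expression for both the subsolution and the supersolution at that common limit, so subtracting \eqref{bTerValuecdsub} from \eqref{bTerValuecdsup} kills the singular term outright and yields $\limsup(u^*-v_*)\le 0$ at $t=T$, contradicting the strictly positive sup. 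That is exactly how the paper handles it, after first noting that the bounded-set localization is independent of the parameters $\epsilon_1,\epsilon_2,\alpha$.

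In summary: the architecture is right, the boundary and terminal cases are handled correctly in spirit, but the parabolic contradiction as written does not close without the exponential rescaling (or an actually-strict supersolution, which you have not produced), and the terminal-condition discussion is both slightly overcomplicated and slightly circular.
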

The proof of this theorem will be given in \ref{sect:comparison}. The comparison principle is essential to guarantee that our definition is reasonable.  In the proof of this comparison principle, we pay special attention to the terminal condition, which differs from the classical proof.  

The following theorem summarizes our result. 
\begin{theorem}\label{thm:viscosity solution}
(i) There is at most one viscosity solution to \eqref{equ PDE main} -- \eqref{equ: tercon} in $\mathcal{C}$. 

(ii) The value function $V(t, x, y, \nu)$ is a viscosity solution to \eqref{equ PDE main} -- \eqref{equ: tercon} and $V\in\mathcal{C}$.

(iii) $V(t, x, y, \nu)$ is the unique viscosity solution to \eqref{equ PDE main} -- \eqref{equ: tercon} in $\mathcal{C}$.
\end{theorem}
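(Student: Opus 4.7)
The plan is to deduce Theorem \ref{thm:viscosity solution} in three pieces. Part (i) is immediate from the comparison principle (Theorem \ref{thm compa principle}): if $u_1, u_2 \in \mathcal{C}$ are both viscosity solutions, then $u_1$ is a subsolution, $u_2$ a supersolution, hence $u_1 \leq u_2$ in $(t_p, T] \times \mathscr{S}$; reversing the roles yields the opposite inequality and thus uniqueness. Part (iii) follows by combining (i) with (ii). All the substantive work is therefore in part (ii), where one must verify both that $V \in \mathcal{C}$ and that $V$ satisfies the three sets of subsolution/supersolution requirements in Definition \ref{def visco}.

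First I would show $V \in \mathcal{C}$. For the upper bound, I would compare the transaction-cost problem with the frictionless power-utility problem of Assumption \ref{ass:2}.3: since $U(z) \leq C_1 + C_2 z^p$ and since transaction costs only reduce attainable wealth (so the cost-free value function dominates the one with costs), the admissible terminal wealth is bounded in expectation by what a frictionless investor with initial wealth $z = x + (1-\theta_1)y^+ - (1+\theta_2)y^-$ could achieve. This gives $V(t,x,y,\nu) \leq C_2(x+y)^p F^{(p)}(t,\nu) + C_1$ after a mild estimate to pass from wealth $z$ to $x+y$. For the lower bound, if $U(K) > -\infty$ I would use the admissible strategy of immediate liquidation followed by holding cash, which produces terminal wealth $\geq K$ (since the forward cash is constant under no transactions) and hence utility $\geq U(K)$. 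In the case $K=0$ with $U(0) = -\infty$, the lower bound $U(w) \geq A_1(w^{\widetilde{p}}-1)/\widetilde{p} + A_2$ from Assumption \ref{assmpt:1}(2) lets me dominate $V$ below by $V_{CRRA}$ because the same admissible strategies are available in both problems and the utility comparison is pointwise.

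Second, I would establish the interior viscosity property (a) through the standard dynamic programming principle for singular control with transaction costs: picking a suitable family of admissible strategies that either remain in the no-transaction region (yielding the parabolic PDE inequality via Itô's lemma on the test function $\psi$) or instantaneously execute a buy/sell of size $\varepsilon$ (yielding the transaction inequalities $\psi_y - (1-\theta_1)\psi_x \geq 0$ and $(1+\theta_2)\psi_x - \psi_y \geq 0$ after sending $\varepsilon \to 0$). Standard arguments as in \cite{shreve1994optimal} or \cite{fleming2006controlled} combine these into both the sub- and supersolution interior inequalities for $\mathcal{L}\psi$. The boundary condition (b) at $z=K$ follows from the requirement of immediate liquidation in the admissibility definition: once $Z_t \leq K$, the account closes and the investor earns exactly $U(K)$ (or $V_{CRRA}$ in the CRRA case), so the left and right envelopes both converge to the prescribed values as $(x,y,\nu)$ approaches the liquidation boundary.

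The main obstacle is the terminal condition (c), but this is precisely the content of Theorem \ref{prop:term}, which gives the full limit of $V$ as $t \to T-$ with the explicit $\Phi$-correction. Since the limit in \eqref{equ: tercon} exists, both $V^*$ and $V_*$ share that same limit at $(T-, \hat{x}, \hat{y}, \hat{\nu})$, which immediately supplies the $\limsup \leq$ condition \eqref{bTerValuecdsub} for the subsolution property and the $\liminf \geq$ condition \eqref{bTerValuecdsup} for the supersolution property. The most delicate conceptual step in the whole proof is ensuring that the interior viscosity argument and the lower/upper bound for $V \in \mathcal{C}$ are consistent near the terminal time where $V$ is genuinely discontinuous; here I would use the usual upper- and lower-semicontinuous envelopes and verify the test-function inequalities only at points in $[0, T) \times \mathscr{S}$, exactly as Definition \ref{def visco} prescribes, so that the intrinsic terminal discontinuity is absorbed into condition (c) rather than into the interior PDE.
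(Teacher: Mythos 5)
Your overall decomposition matches the paper's: part (i) from the comparison principle, part (iii) from (i) and (ii), and for (ii) you split into membership in $\mathcal{C}$, the interior viscosity property (a), the spatial boundary condition (b), and the terminal condition (c), where (c) is read off from Theorem~\ref{prop:term} --- all as the paper does.

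There is, however, a genuine gap in your treatment of condition (b). You write that ``once $Z_t \leq K$, the account closes and the investor earns exactly $U(K)$,'' so the envelopes converge to the prescribed values. That observation only pins down the value \emph{at} the boundary $z = K$; it does not by itself deliver the required boundary limit $\lim_{(s,x,y,\nu)\to(t,x_0,y_0,\nu_0)} V(s,x,y,\nu)=U(K)$ from the interior. The lower bound $\liminf V \geq U(K)$ is easy (immediate liquidation), but the nontrivial direction is $\limsup V \leq U(K)$: an investor with wealth $z$ slightly above $K$ is still free to gamble aggressively, and nothing in your argument controls the extra value this generates as $z\downarrow K$. The paper closes this via Proposition~\ref{prop bound con}: it bounds $V(t,x,y,\nu)$ above by the frictionless value function $\hat V$ evaluated at the liquidation wealth $z=x+(1-\theta_1)y^+-(1+\theta_2)y^-$ (transaction costs can only hurt), and then invokes the frictionless boundary result of \cite{dai2022nonconcave} that $\hat V\to U(K)$ (or $\hat V - V_{CRRA}\to 0$) as wealth approaches $K$. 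That comparison with the frictionless problem is the missing ingredient.

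A secondary, non-fatal difference: for condition (a) you propose the classical strong DPP with explicit $\varepsilon$-trades and It\^o's lemma on test functions, while the paper invokes a \emph{weak} dynamic programming principle (Proposition~\ref{pro:DP}, with the viscosity inequalities then following from Corollary~5.6 of \cite{bouchard2011weak}). The weak DPP avoids the measurable-selection issues your approach would have to address (typically hidden behind a citation to \cite{shreve1994optimal} or \cite{fleming2006controlled}); the paper also supplies the local uniform boundedness of $V$ needed to apply that machinery, which your sketch omits. This is a legitimate alternative route, but you should be aware that the paper's choice is not incidental.
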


The proof of this theorem will be given in \ref{sect:viscosity}.
Theorem \ref{thm:viscosity solution} (i) is from the comparison principle Theorem \ref{thm compa principle}. This is because any viscosity solution must be both a subsolution and a supersolution, then any two viscosity solutions must be equal. Also, Theorem \ref{thm:viscosity solution} (iii) is a direct corollary of (i) and (ii). The proof of Theorem \ref{thm:viscosity solution} (ii) includes two parts. The first part is to verify that $V$ satisfies Definition \ref{def visco}, i.e., it is both a subsolution and a supersolution.
The second part is to check the value function $V(t, x, y, \nu) \in \mathcal{C}$. Actually, this can be proved using Assumption \ref{assmpt:1}, which implies
\begin{align}
C_1+ C_2 (x+y)^p F^{(p)}(t, \nu) \geq V(t, x, y, \nu) \geq \begin{cases}
U(K), \qquad \qquad \ \ \text{ if } U(K)>-\infty,\\
V_{CRRA}(t, x, y, \nu),  \text{ if } K=0 \text { and } U(0)=-\infty.
\end{cases} 
\end{align}

\section{Numerical Examples}\label{sect:numerical}
Based on the above theoretical framework, in this section, we provide numerical examples that illustrate interesting financial insights. To this end, we numerically characterize the investor's optimal strategy through the action regions in terms of the dollar value in stock $y$ and the wealth $z$ (liquidation value of the portfolio) as defined in \eqref{eqn:z}.
From Section \ref{subsect:GR0} to Section \ref{subsect:asp}, we consider a stock $S$ following a geometric Brownian motion model with the risk premium $\eta=\mu-r=0$ or 0.04, and $\sigma=0.3$, and we assume $\bar{z}=1$, $K=0$, and $r=0$ for simplicity. In Section \ref{subset:GMR}, we explore a more general market environment where $S$ follows the Gaussian mean return model. In \ref{appendix:terminal}, we numerically verify the terminal condition \eqref{equ: tercon}  under various utility settings. 
For later use, we define the action regions, including the sell region (SR), the buy region (BR), and the no-trading region (NR), as follows:
\begin{align*}
SR =& \bigg\{(t, x, y, \nu)\in (t_p, T]\times \mathscr{S} \bigg| V_y = (1-\theta_1)V_x  \bigg\}, \\
BR =&  \bigg\{(t, x, y, \nu)\in (t_p, T]\times \mathscr{S} \bigg| V_y = (1+\theta_2)V_x \bigg\}, \\
NR=&  \bigg\{(t, x, y, \nu)\in (t_p, T]\times \mathscr{S} \bigg|   (1-\theta_1)V_x<  V_y < (1+\theta_2)V_x   \bigg\}.
\end{align*}
The optimal actions in these three regions are to sell the stock, to buy the stock, and to refrain from trading, respectively.

\subsection{Goal-Reaching Problem with Zero Risk Premium}\label{subsect:GR0}
To begin with, we consider the goal-reaching problem with the risk premium $\eta=0$. The action regions for $T-t=0.01,0.05,0.1$ are illustrated in the left panels in Figure \ref{fig goal reaching etas} from top to bottom, respectively. Note that due to the transaction costs, buying (resp. selling) decreases the wealth $z$, besides increasing (resp. decreasing) $y$.

These three figures show common patterns for the goal-reaching problem strategy with transaction costs. 
First, when the absolute dollar investment in stock $|y|$ is sufficiently small, it is optimal to buy to the dashed line when $y$ is positive and in the green region, i.e., buying an amount $\Delta y=y^\prime-y$ such that the point representing the post-purchase wealth and investment in stock, $(z^\prime,y^\prime)$, is located on the dashed line (e.g., Point A to B in the top left panel of Figure \ref{fig goal reaching etas}).
On the other hand, it is optimal to sell to the dash-dotted line when $y$ is negative and in the yellow region, i.e., selling an amount of $\Delta y=y-y^\prime$ such that the point representing the post-sale wealth and investment $(z^\prime,y^\prime)$ is located on the dash-dotted line (e.g., Point C to D in the top left panel of Figure \ref{fig goal reaching etas}) 
This strategy has two benefits: (a) it increases $|y|$, leading to a larger portfolio variance and hence a higher probability of reaching the goal; (b) the trading direction minimizes the transaction costs. For example, starting from the green region, buying to the dashed line results in a smaller trading amount (hence a smaller transaction cost) compared to selling to the dash-dotted line. 

Second, the investor will strike a balance between increasing the variance and reducing transaction costs. On the one hand, when $|y|$ is already sufficiently large, it is optimal to hold on to the current position without trading, as indicated by the no-trading region (the blue regions). This saves transaction costs since the current portfolio variance is sufficiently high. On the other hand, the investor will buy or sell fewer stocks when $z$ is very close to 0 or 1. Indeed, in these two cases, a high transaction cost drags down the portfolio value, increasing the liquidation risk (when $z$ is close to 0) or reducing the probability of reaching the already very close target (when $z$ is close to 1). Therefore, the investor will put more emphasis on reducing costs by buying or selling fewer stocks.

Third, as the remaining time horizon $T-t$ becomes shorter (from bottom to top), the trading region becomes larger. Indeed, the investor needs to achieve a larger portfolio variance so as to reach the goal when $T-t$ is smaller.

Finally, we compare the action regions to the optimal target position without transaction costs (frictionless target) in \cite{browne1999reaching} depicted as the grey solid line, i.e.,
\begin{align*}
  y = \frac{1}{\sigma\sqrt{T-t}}\phi(\Phi^{-1}(z)),~~z\in (0,1),  
\end{align*}
where $\phi$ is the standard normal probability density function.
\Copy{pushfor}{
We see that the upper boundary of the buy region (the dashed line) can be higher than the frictionless target, meaning that in the presence of transaction costs, the investor may establish a larger stock position compared to the case without transaction costs. 
To understand this, when there is no transaction cost, the frictionless target will go to infinity as $T-t$ gets very close to 0 for any given $z$; in other words, if the current position does not make it very close to the goal (or to liquidation), the investor can take on an even larger stock position in the remaining time before maturity. 
However, this is impossible in the presence of transaction costs since the dollar purchase amount should be bounded by $z/\theta_2$ to keep a positive portfolio value and avoid liquidation. 
As a result, the investor will establish a larger position before the maturity is imminent for compensation. 
Nevertheless, when $z$ is close to 1, the investor will still buy to a smaller position compared to the frictionless target in order to save cost.
}

\begin{figure}[hptb!]
\includegraphics[width=0.48\textwidth]{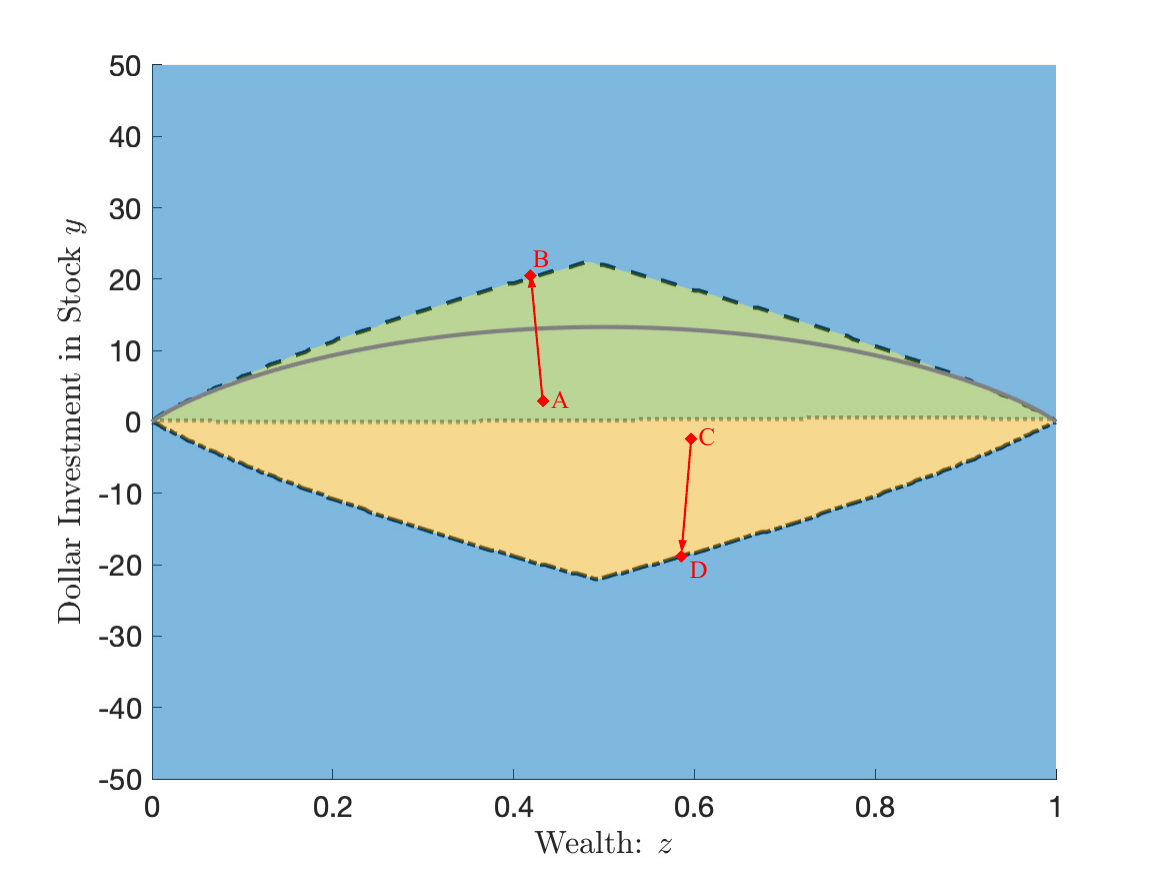}
\includegraphics[width=0.48\textwidth]{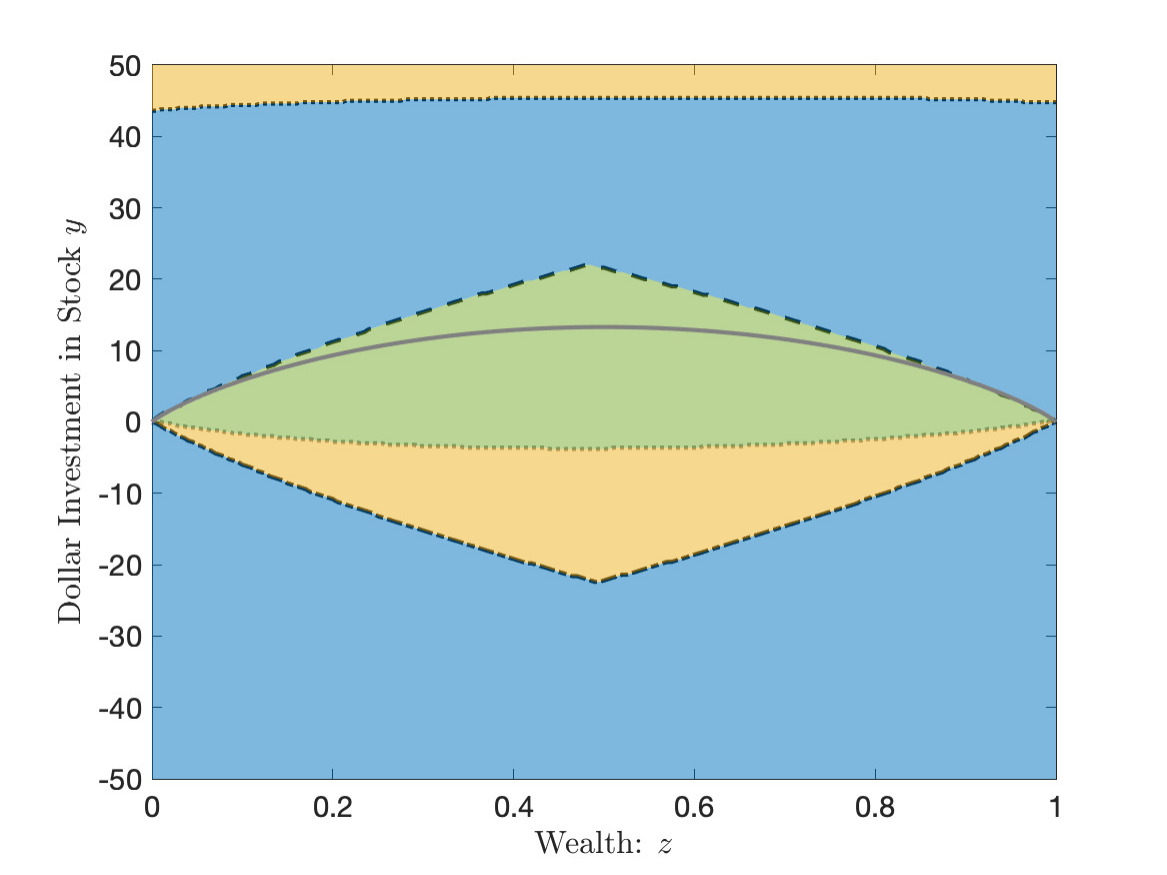}\\
\includegraphics[width=0.48\textwidth]{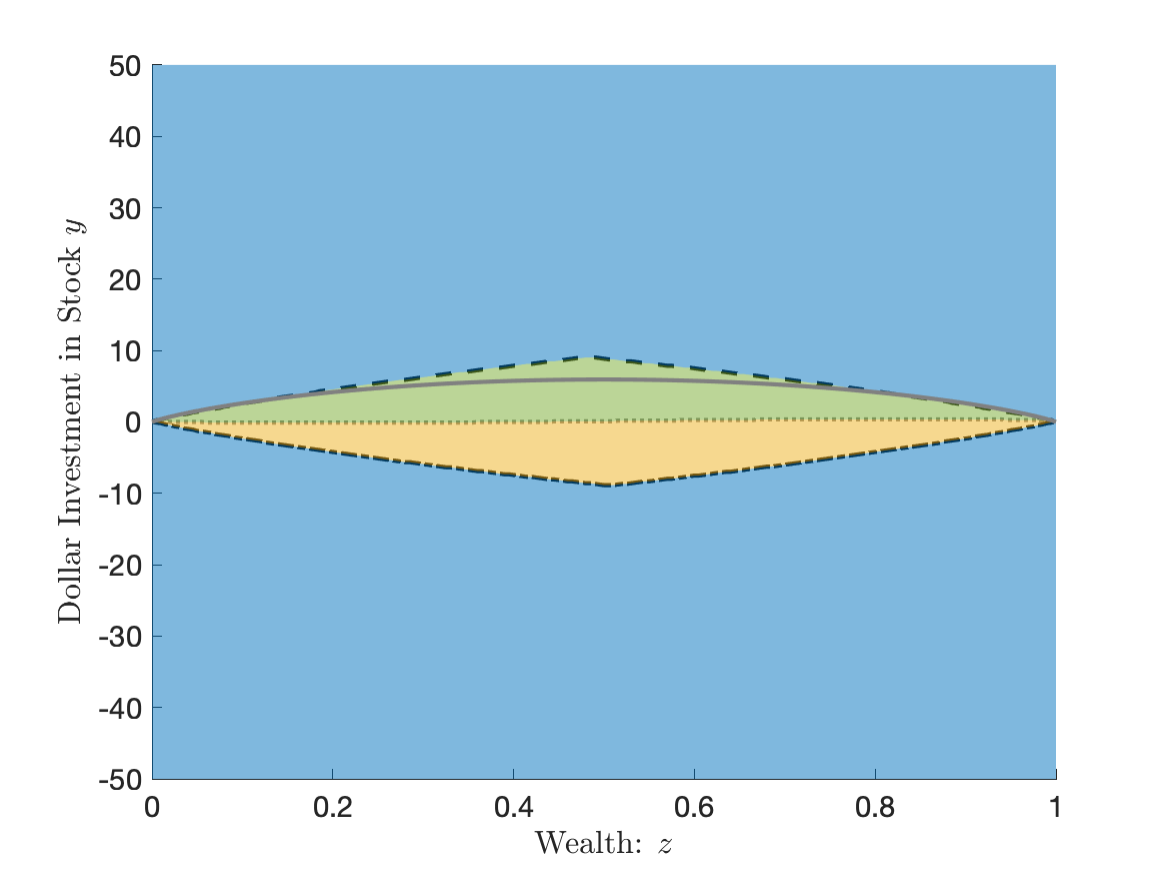}
\includegraphics[width=0.48\textwidth]{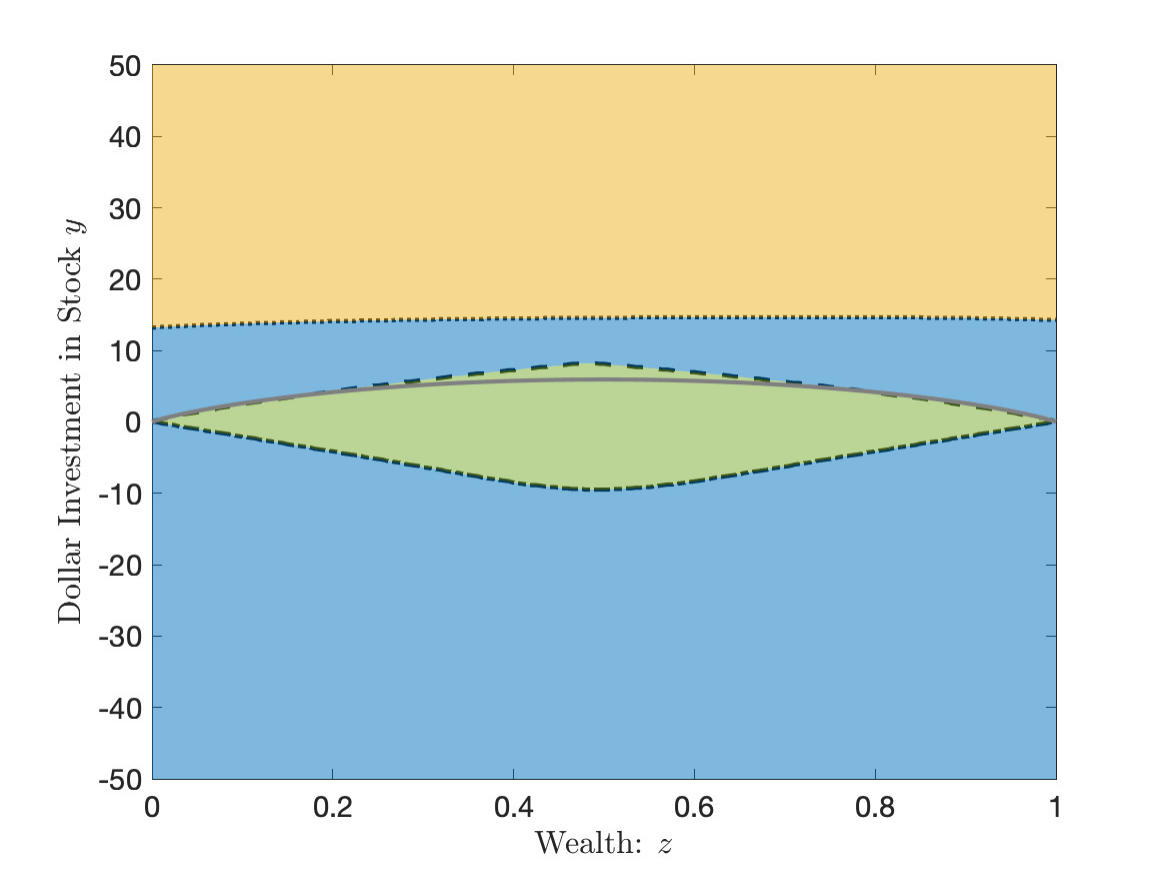}\\
\includegraphics[width=0.48\textwidth]{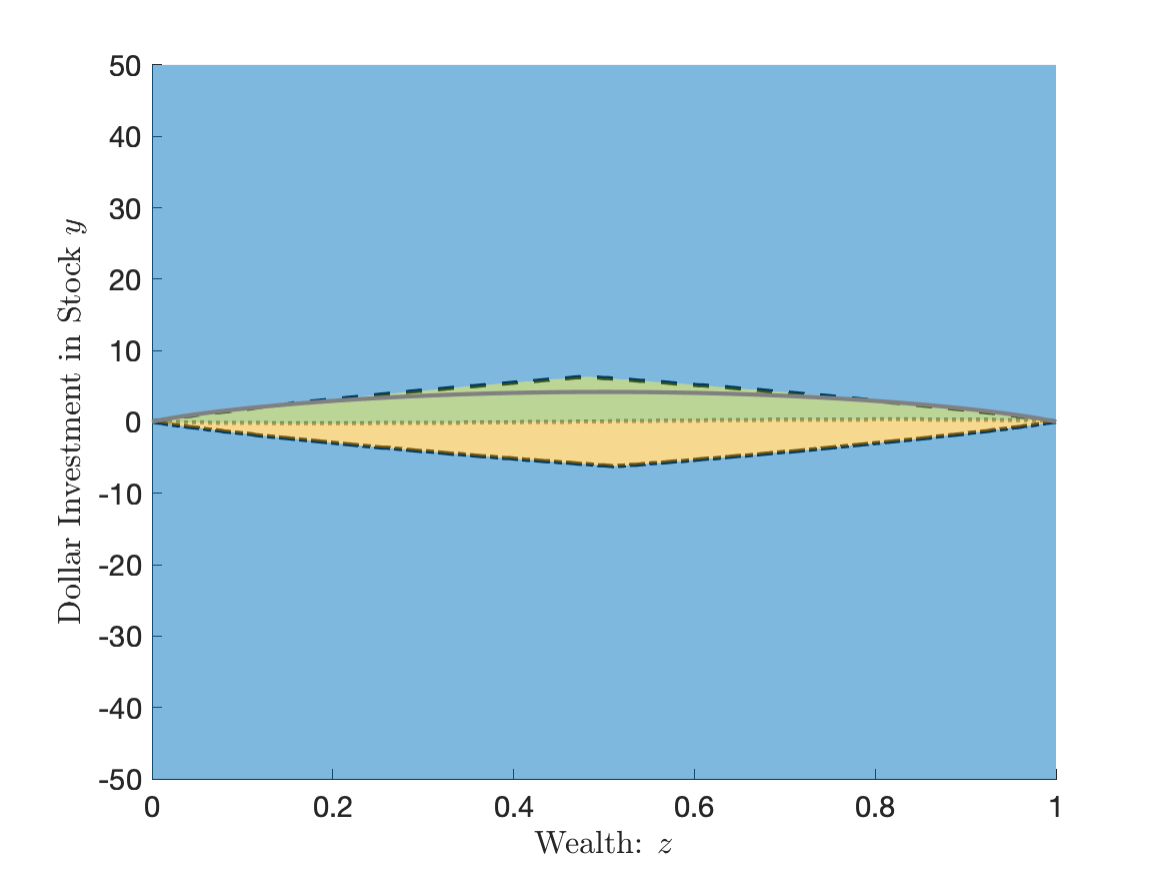}
\includegraphics[width=0.48\textwidth]{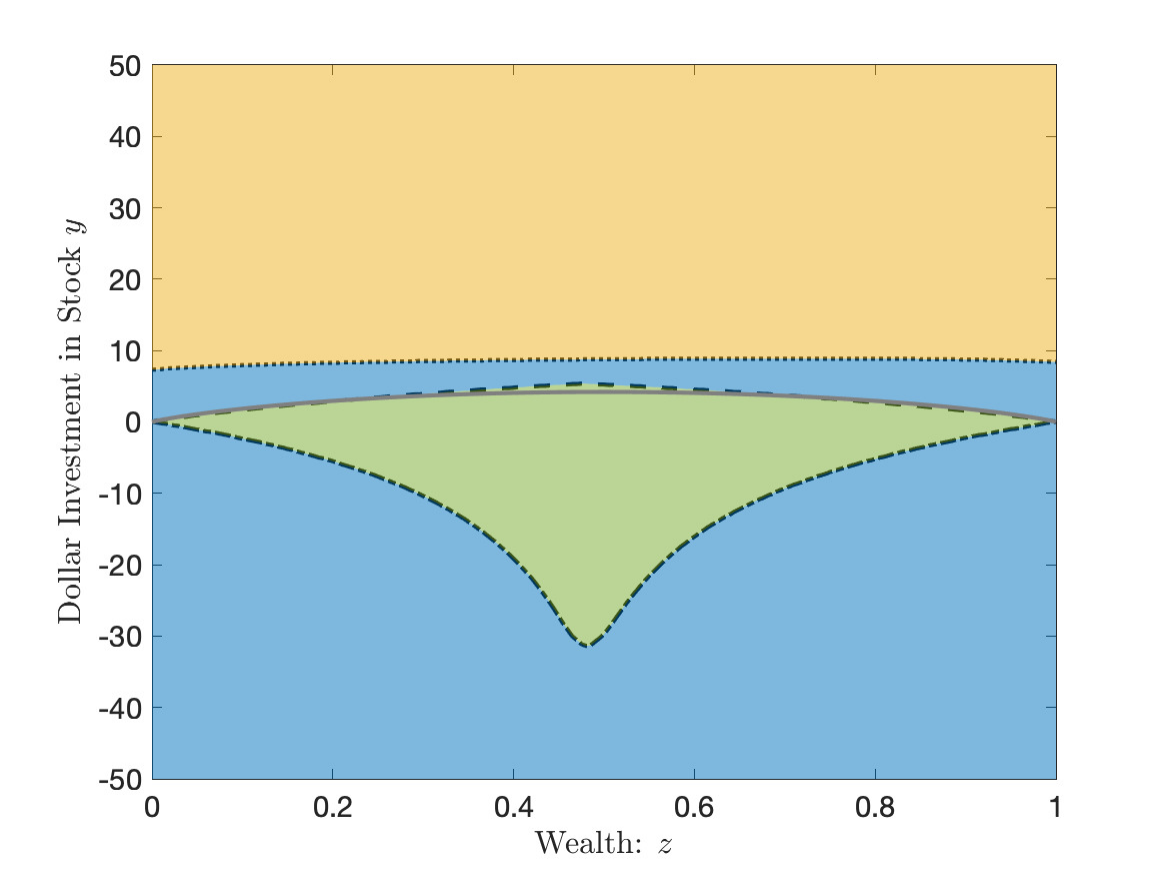}
\caption{Action regions for the goal-reaching problem with $\eta=0$ (left panels) and $\eta=0.04$ (right panels). The three rows from top to bottom correspond to $T-t=0.01,0.05,0.1$, respectively. Yellow: sell region; Blue: no-trading region; Green: buy region. Solid line: target position without transaction costs. Parameters: $\theta_1=\theta_2=10^{-3},\sigma=0.3.$}\label{fig goal reaching etas}
\end{figure}

\subsection{Goal-Reaching Problem with Positive Risk Premium}\label{subsect:GRp}
Next, we study the goal-reaching problem with a positive risk premium. To this end, we set $\eta=0.04$ and keep the other parameters the same as in Section \ref{subsect:GR0}. The action regions for $T-t=0.01,0.05,0.1$ are illustrated in the right panels in Figure \ref{fig goal reaching etas} from top to bottom, respectively. 

These figures show two effects of the positive risk premium. First, the investor will reduce the long position $y>0$ when it is sufficiently large, as indicated by the additional sell region.
This can be explained as follows. When $y>0$, the positive risk premium of the risky asset contributes towards achieving the goal.  Therefore, when $y>0$ and the leverage is too high, the investor has the incentive to reduce the position to stay in the market and take advantage of the risk premium. Such an incentive is stronger when $T-t$ is larger, as shown by the enlargement of this sell region in $y>0$. Nevertheless, this sell region has a strictly positive limit as $z$ increases to $1$, which is in stark contrast to \cite{browne1999reaching} and \cite{dai2022nonconcave}, where the limit is 0 (see the grey solid line). To understand this, in these two papers, the positive risk premium leads to the incentive to stay in the market for a longer time. Therefore, the leverage is significantly lowered as $z$ increases to reduce bankruptcy risk. In our case, the bankruptcy risk is negligible compared with the future purchase cost incurred.  Therefore, the investor would rather keep a strictly positive stock position and delay the liquidation to maturity. The intuition for the strictly positive sell boundary around $z = 0$ is similar.

Second, compared to the case with $\eta=0$ (left panels), when $|y|$ is sufficiently small, the investor is more inclined to establish a long position rather than a short position, when $T-t$ is larger. For example, in the top and middle panels, the buy region for $\eta=0.04$ is larger than those for $\eta=0$. Furthermore, in the bottom right panel with $T-t=0.1$, the buy region protrudes downwards when $z$ is away from 0 or 1, meaning that the investor is motivated to switch to a long position even from a large short position. This is because, given $y$, as $T-t$ becomes larger, the effect of the positive risk premium on the wealth becomes more significant compared to the portfolio variance and transaction costs. Nevertheless, when $z$ is close to 0 or 1, such motivation will be weaker, and the investor still defers trading and saves transaction costs so as to avoid getting close to liquidation or getting away from the goal. 
Furthermore, given $T-t$, the buy region in $y<0$ is bounded from below, suggesting that it is still optimal to hold onto a very large short position despite a positive risk premium. Intuitively, given $T-t$, with a sufficiently large short position (i.e., $-y>0$ is sufficiently large), the benefit from a large portfolio variance to reach the goal dominates the drawback from the negative drift of the portfolio value from the large short position. This can be seen from the following heuristic example.
Imagine that $Y_0<0$ is large, and the investor does not trade until reaching the goal or liquidation boundary. Then the expected stopping time $\mathbb{E}[\tau_0\wedge \tau_1] = O(1/|Y_0|^2)$, where $\tau_i$, $i = 0, 1$ is the first time that the wealth $z_t$ hits $i$. Since the time before reaching the goal or liquidation boundary is short, the negative drift impacts the wealth only at the level of $|Y_0|\mathbb{E}[\tau_0\wedge \tau_1] = O(1/|Y_0|) $, which is decreasing in $|Y_0|$. This implies that when the initial short position is large, the negative drift's relative influence decreases in $|Y_0|$ since the wealth level reaches either 0 or 1 faster. 

\begin{figure}[hptb!]
\includegraphics[width=0.48\textwidth]{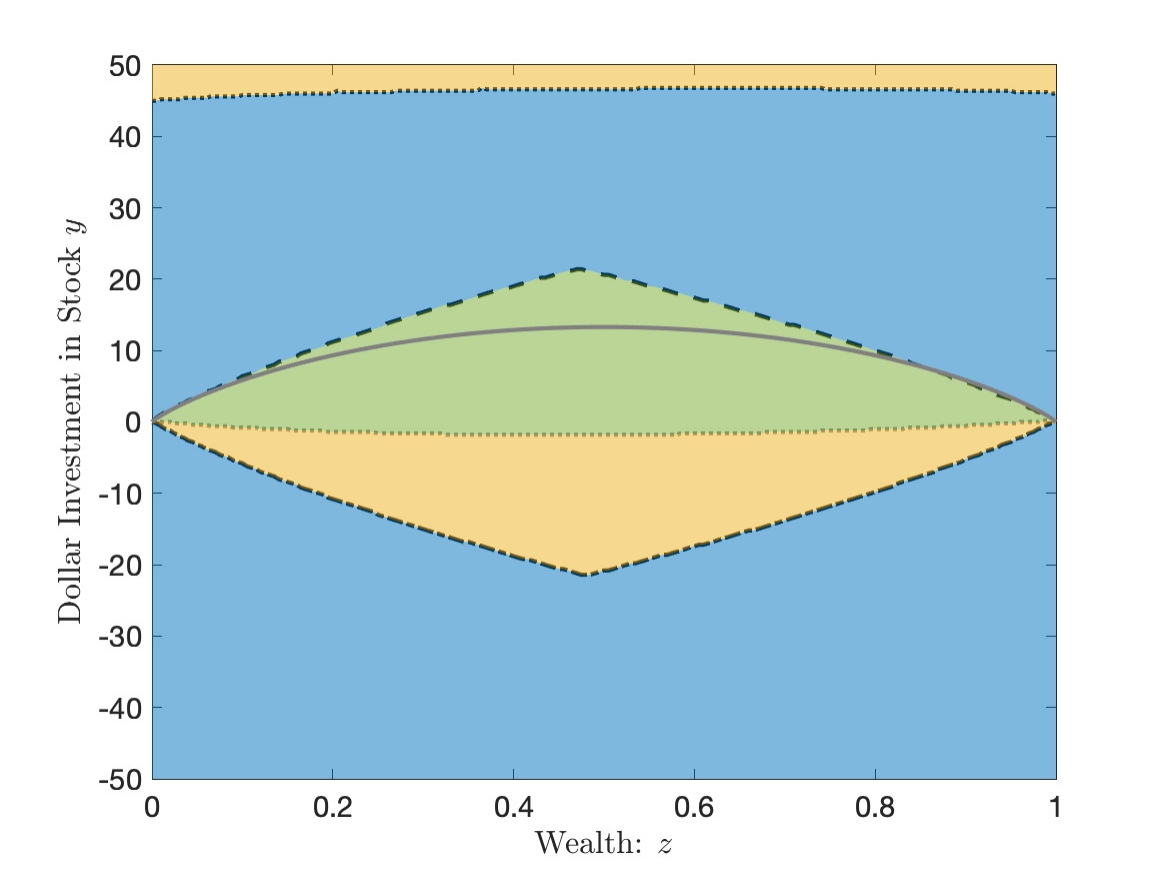}
\includegraphics[width=0.48\textwidth]{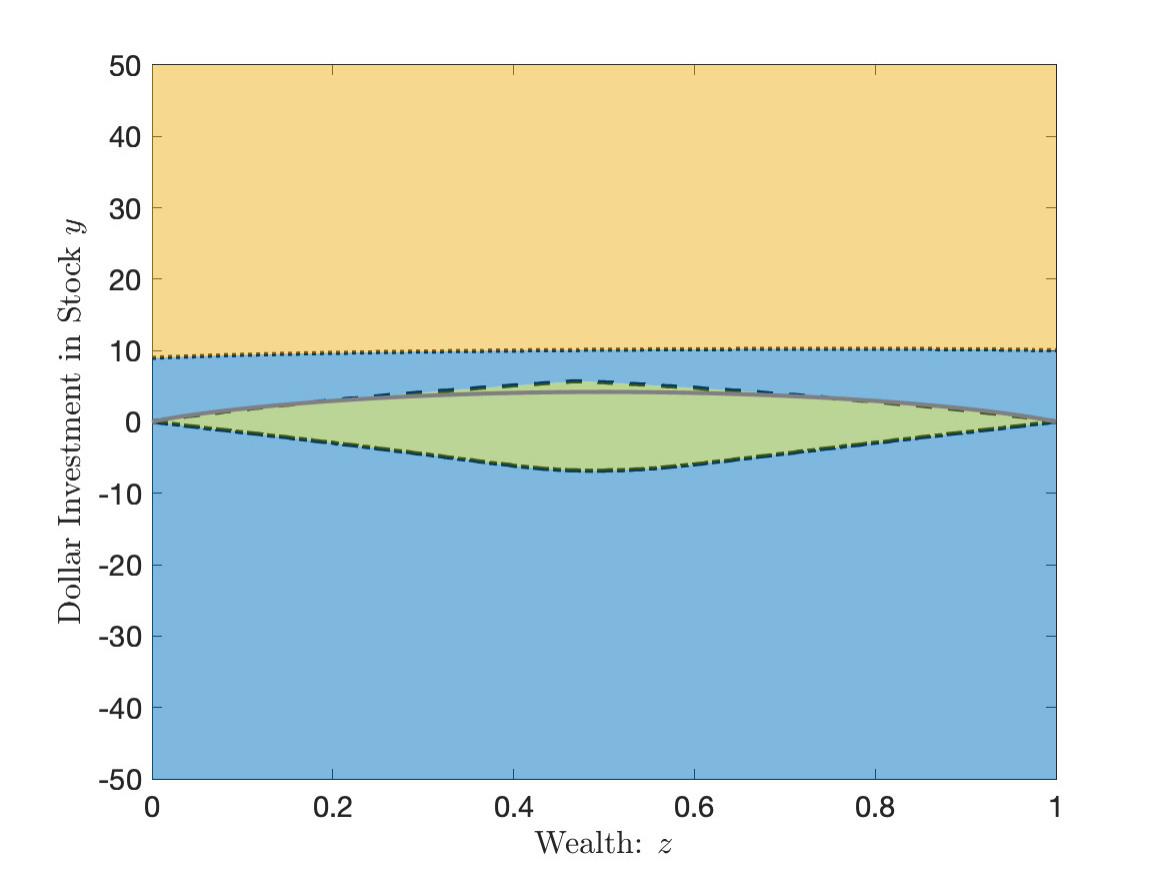}
\caption{The action regions for the goal-reaching problem with positive risk premium and larger transaction costs. Left panel: $T-t=0.01$; Right panel: $T-t=0.1$. Yellow: sell region; Blue: no-trading region; Green: buy region. Solid line: target position without transaction costs. Parameters: $\theta_1=\theta_2=2\times10^{-3},\sigma=0.3,\eta=0.04.$}\label{fig goal reaching theta}
\end{figure}

Next, we illustrate the effect of the transaction cost rates. By comparing the left and right panels of Figure \ref{fig goal reaching theta} to the top right and bottom right panels of Figure \ref{fig goal reaching etas}, respectively, we see that a larger transaction cost shrinks the buy region in $y<0$, i.e., it reduces the investor's incentive to switch from a short position to a long position to chase for the positive risk premium.

For the goal-reaching problem, \cite{dai2022nonconcave} also document investors' risk-seeking behavior. In their case, the trigger for such behavior is not the transaction costs, but rather the imposed two-sided portfolio bounds, which are the limit on the level of permitted leverage. In contrast, transaction costs do not put a direct limit on leverage; rather, the intrinsic limit that prevents taking infinite leverage is the large potential transaction cost that needs to be paid upon liquidation.  Consequently, Figure \ref{fig goal reaching etas} indicates that it is possible that the investor takes and holds on to much higher leverage compared to \cite{dai2022nonconcave}; for example, starting from the buy region, the investor will buy to the upper boundary of this region and keep the position if it subsequently moves into the no-trading region. Furthermore, unlike their model, our model does not produce a sudden switch between large long and short positions as the wealth changes, since this would trigger a very large amount of transaction costs.

\subsection{Aspiration Utility}\label{subsect:asp}
As a third example, we discuss the strategy under the aspiration utility \eqref{eqn:aspUtility} with $p=0.5$, $c_1=0$, $c_2=1.5$, and $ \bar{z}=1$. Note that due to the upward jump of the utility function at $z=\bar{z}$, the aspiration utility maximization problem resembles a goal-reaching problem locally for $z$ close to but below $\bar{z}$, so as to reach the ``goal'' $\bar{z}$ and get a boost in the utility. However, as the investor is still risk-averse and gains utility from the actual wealth level away from $\bar{z}$, the investor will act more risk-averse compared to the goal-reaching problem so as to preserve portfolio value. Therefore, we expect that the investor's optimal strategy should be a mixture of a goal-reaching problem and a concave utility maximization problem.

We first illustrate the numerical results for the case with $\eta=0$ in Figure \ref{fig aspi strategy eta0} for $T-t=0.01$ and 0.1. We see that, for $z$ below $\bar{z}=1$, the action regions locally resemble those of the goal-reaching problem (e.g., compare the regions I to IV to the four action regions from top to bottom in the top left panel of Figure \ref{fig goal reaching etas}). 
However, since the investor will also gain utility from the wealth level, the wealth-preserving motivation interacts with the goal-reaching motivation in two ways. First, when the wealth level $z$ is sufficiently close to 0 or above 1, the optimal strategy is to voluntarily liquidate the risky asset position to preserve value, rather than maintaining a long or short position as in the goal-reaching problem. 
This resembles the classic Merton strategy with proportional transaction costs under $\eta=0$ and a power utility with risk aversion $1-p$.
Second, when $z$ is below $\bar{z}$ but away from 0, the region I (resp. region IV) is upper bounded by a sell region (resp. lower bounded by a buy region), suggesting that the investor should not allow arbitrarily large long or short positions in the risky asset. Intuitively, the investor will still gain utility from the terminal wealth even if $\bar{z}=1$ cannot be eventually reached, and therefore, the investor should not take arbitrarily large leverage and risk.

\begin{figure}[hptb!]
\includegraphics[width=0.48\textwidth]{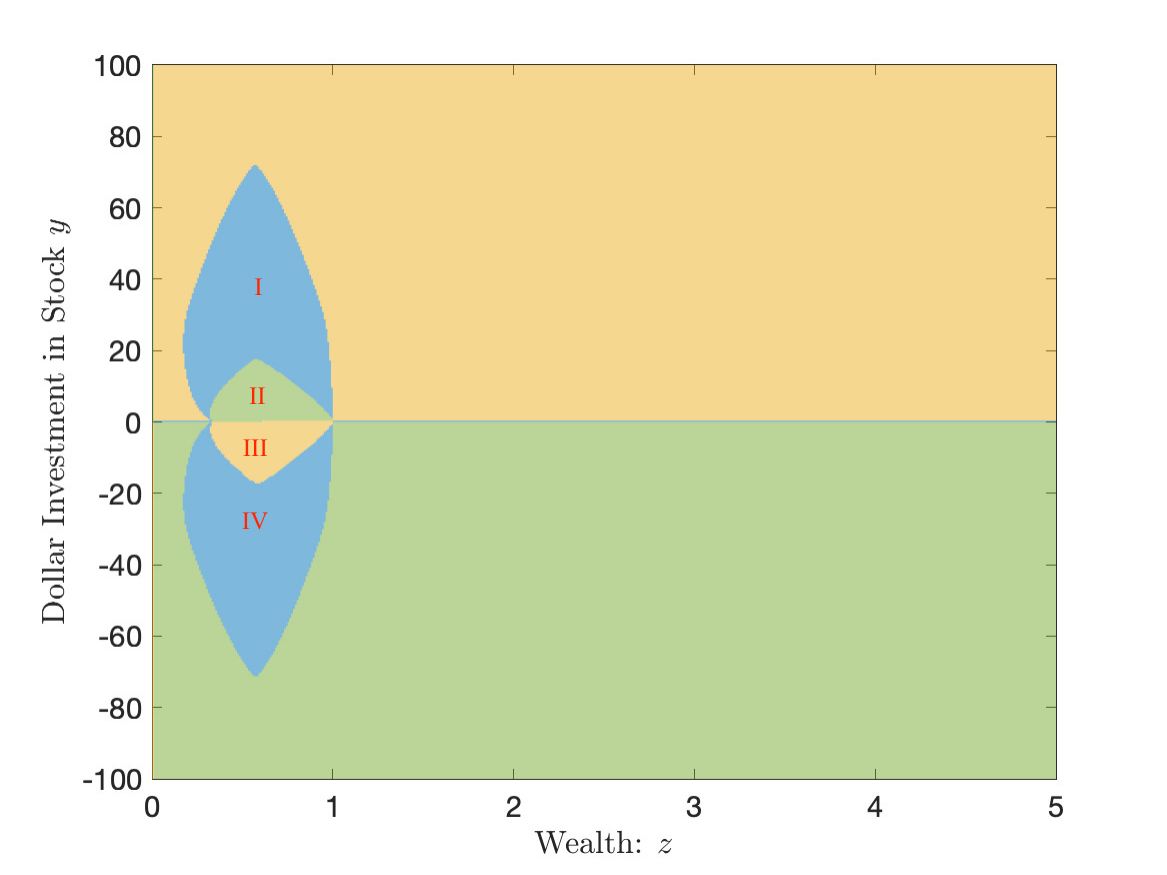}
\includegraphics[width=0.48\textwidth]{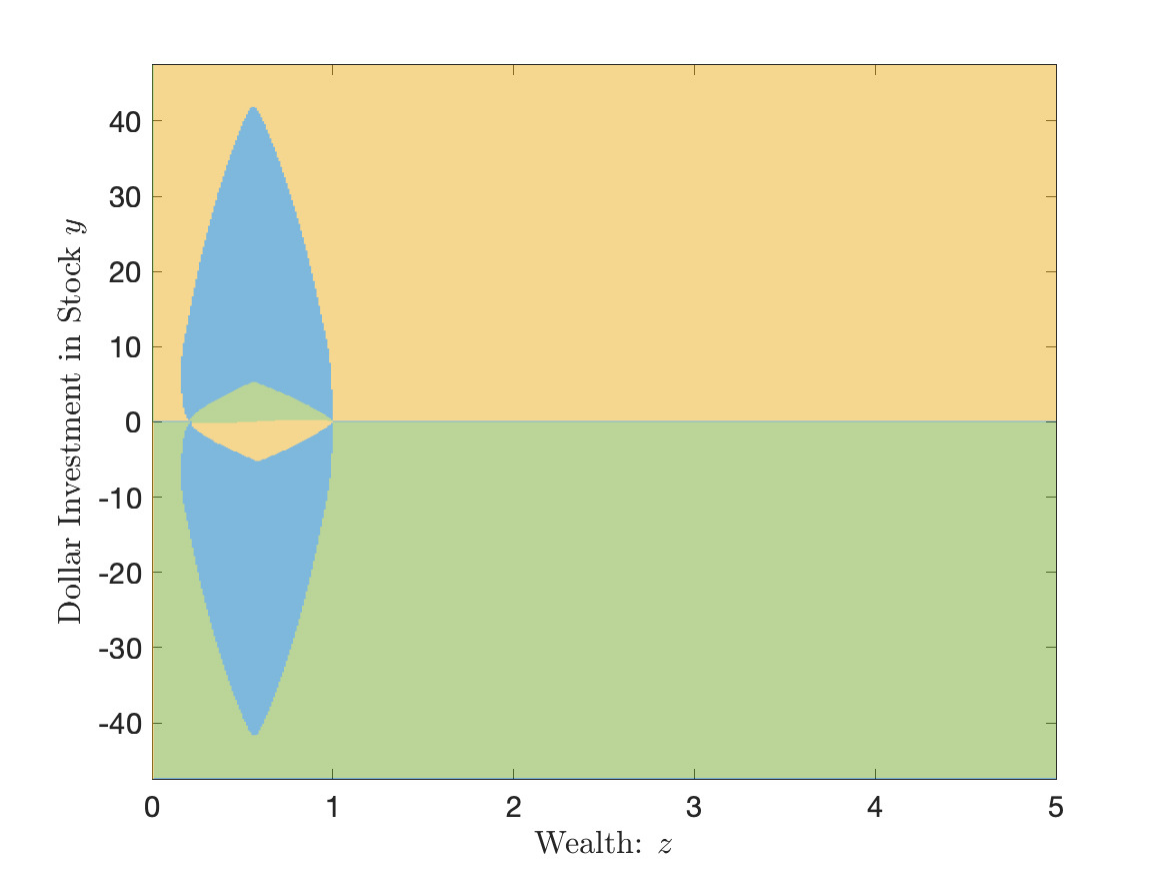}
\caption{The action regions for the aspiration utility problem with zero risk premium. Yellow: sell region; Green: buy region; Blue: no-trading region. 
Left panel: $T-t=0.01$; Right panel: $T-t=0.1$. Parameters: $\theta_1=\theta_2=10^{-3},\sigma=0.3,\eta=0$.
}\label{fig aspi strategy eta0}
\end{figure}

\begin{figure}[hptb!]
\includegraphics[width=0.48\textwidth]{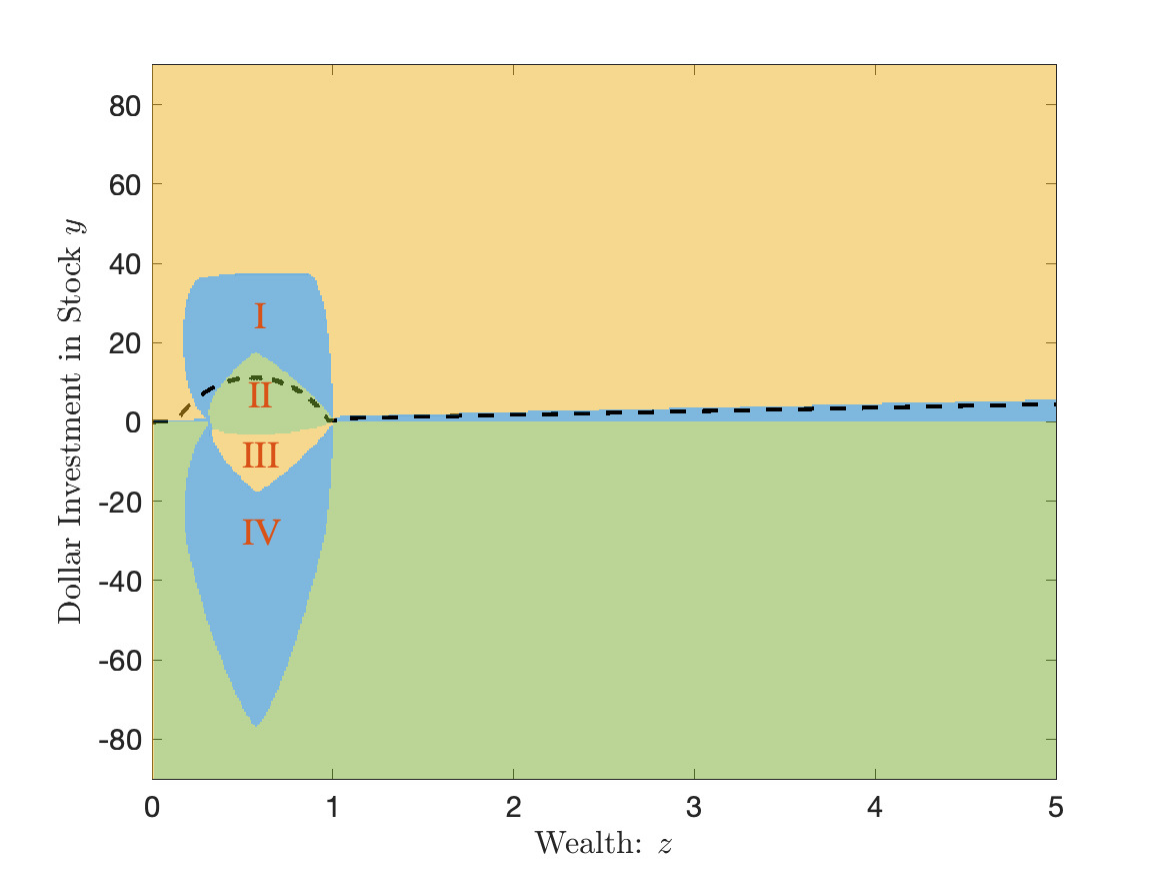}
\includegraphics[width=0.48\textwidth]{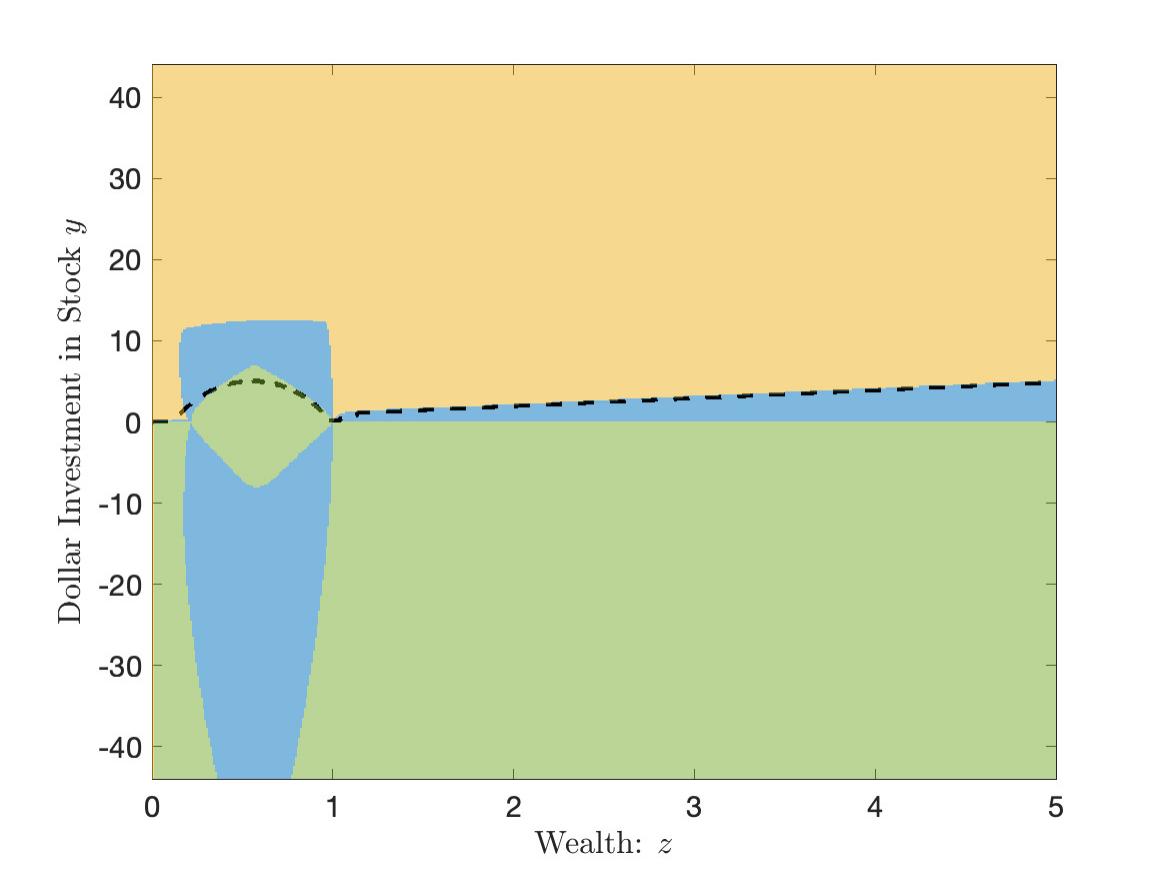}\\
\includegraphics[width=0.48\textwidth]{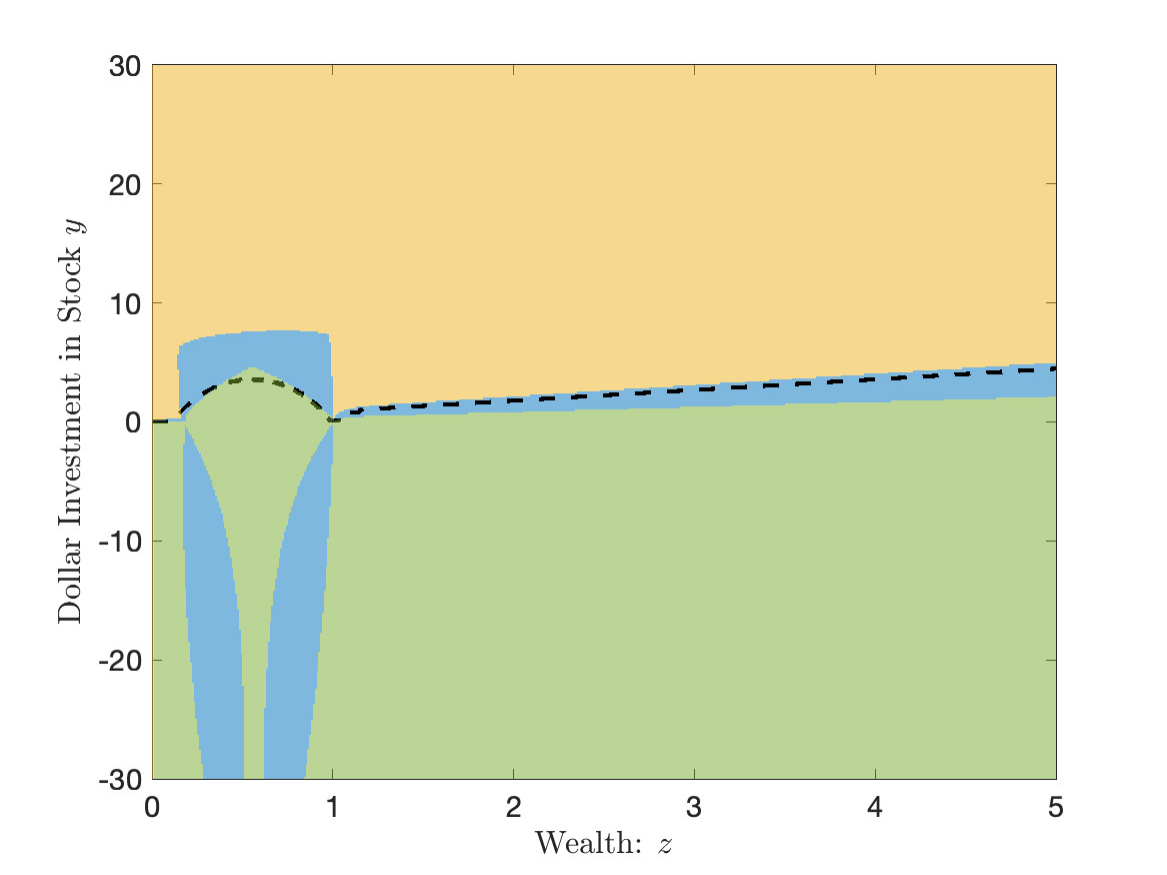}
\includegraphics[width=0.48\textwidth]{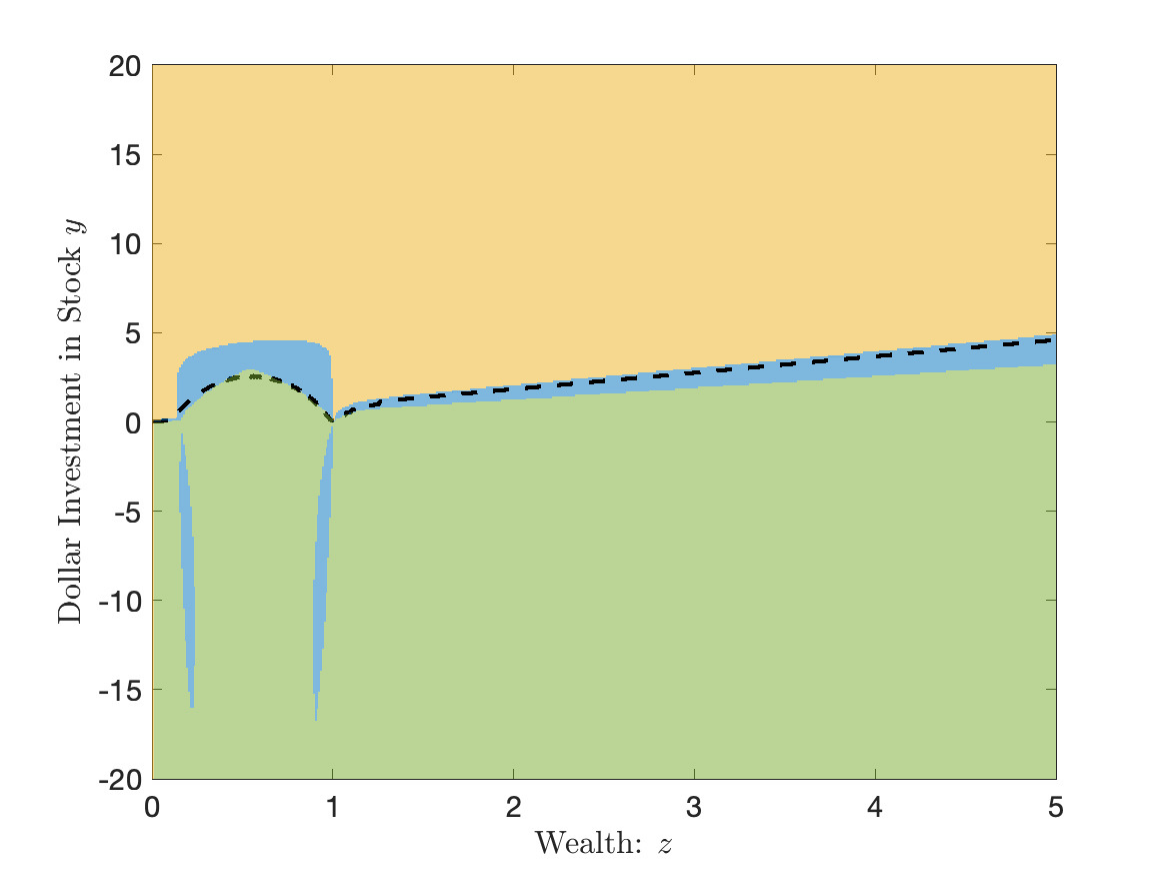}
\caption{The action regions for the aspiration utility problem with positive risk premium. Yellow: sell region; Green: buy region; Blue: no-trading region. Dashed line: target position without transaction cost. Upper left: $T-t=0.01$, upper right: $T-t=0.05$, lower left $T-t=0.1$, lower right $T-t=0.2$. Parameters: $\theta_1=\theta_2=10^{-3},\sigma=0.3,\eta=0.04.$
}\label{fig aspi strategy}
\end{figure}

Next, we switch to the case with a positive risk premium $\eta=0.04$. Figure \ref{fig aspi strategy} shows two main effects of the positive risk premium on the optimal strategy. First, when the wealth $z$ is sufficiently close to 0, the strategy is to keep a small fraction of wealth in the risky asset to avoid bankruptcy; when $z$ is sufficiently large, the optimal strategy again resembles the classic Merton strategy with transaction costs (e.g., \cite{shreve1994optimal} and \cite{dai2009finite}), that is, performing minimum trading to keep the position sufficiently close to the Merton line. 

Second, when $z$ is below 1 but away from 0, the effect of the risk premium is stronger when $T-t$ is larger, similar to the goal-reaching problem. Indeed, recall from the right panels of Figure \ref{fig goal reaching etas} that for the goal-reaching problem, as $T-t$ increases, the effects of the positive risk premium include (a) the downward enlargement of the sell region for $y>0$ sufficiently large, and (b) the downward enlargement of the buy region for $|y|$ small into the region $y<0$. By comparing Figure \ref{fig aspi strategy} with Figure \ref{fig aspi strategy eta0}, we see that similar effects of positive risk premium are also present in the aspiration utility case when $z$ is below 1 but away from 0: (a) the downward enlargement of the sell region above region I in Figure \ref{fig aspi strategy} when $y>0$ is sufficiently large; and (b) the downward enlargement of the buy region II, which first takes over the sell region III, then further protrudes downwards. The interpretation of these two effects is similar to the goal-reaching problem, namely, to reduce the stock position to avoid bankruptcy and to take advantage of the positive risk premium, respectively. Note that the lower bounded shape of the no-trading region IV and the gradual downward protrusion of the buy region II jointly create two disjoint no-trading regions in $y<0$ (c.f. the blue regions within $y<0$ in the bottom left and bottom right panels of Figure \ref{fig aspi strategy}). Summarizing the above discussions, in these two regions, the investor neither has sufficient motivation to reduce the short position to preserve wealth nor deems it cost-effective to switch to a long position for the positive risk premium.

\subsection{Goal-Reaching Problem under Gaussian Mean Return Model}\label{subset:GMR}
Finally, we plot the action regions for the goal-reaching problem with the Gaussian mean return model. To this end, we set $\bar{\nu} = 0.1333$ so that the risk premium corresponding to the long-term $\nu$ level is 0.04. Furthermore, we take $\kappa=0.27$, $\zeta=0.065$, and $\rho=-0.93$.

Overall, Figure \ref{fig nu strategy} shows that the action regions for the goal-reaching problem with the Gaussian mean return model are similar to those under the geometric Brownian motion model. Specifically, when the current risk premium is 0.04, the left panel of Figure \ref{fig nu strategy} resembles the bottom right panel of Figure \ref{fig goal reaching etas}, with a smaller buy region for $y<0$. Indeed, since the future risk premium $\mu(t)-r$ is uncertain, the investor will be more conservative in switching from a short position to a long position in the presence of transaction costs. When the current risk premium is $-0.04$, the action regions are flipped upside down as shown in the right panel of Figure \ref{fig nu strategy}. This can be interpreted similarly to the positive risk premium case, since the investor now benefits from the risk premium with a short position instead.

\begin{figure}[hptb!]
\includegraphics[width=0.48\textwidth]{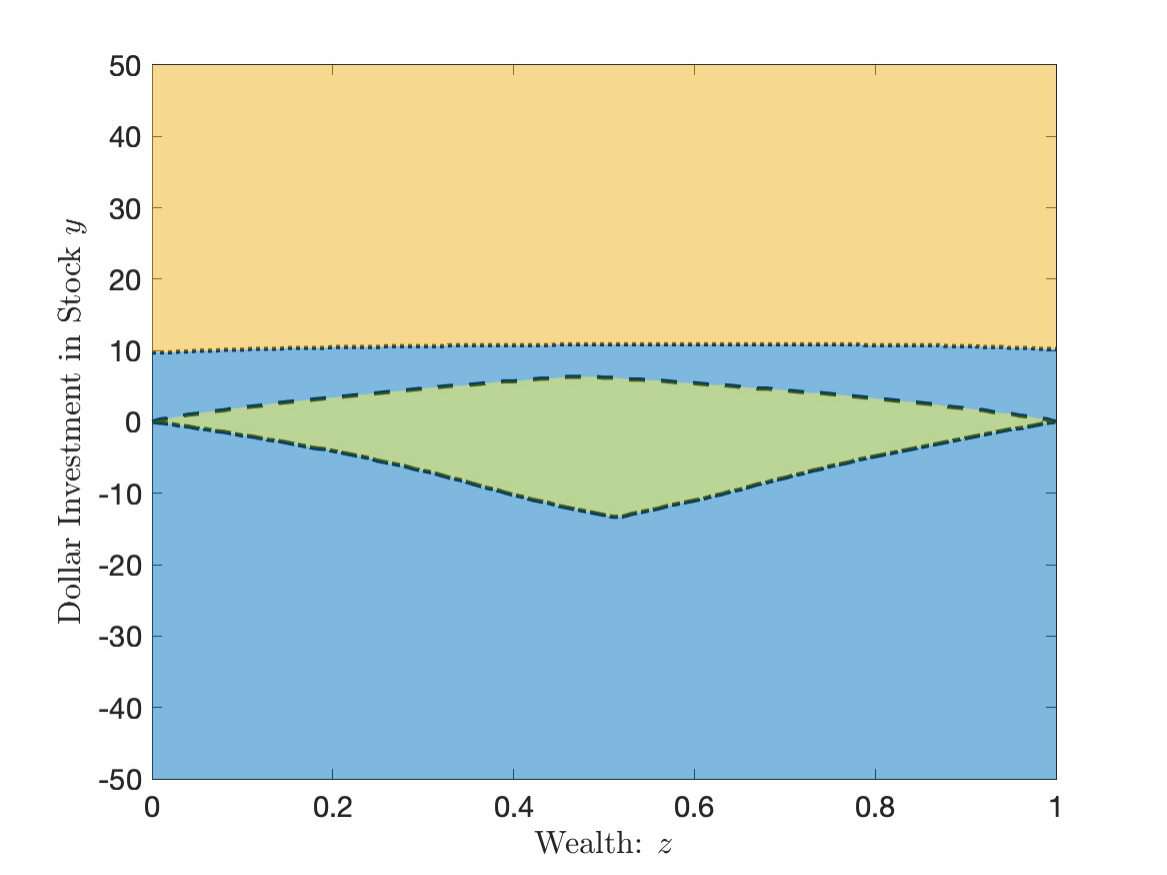}
\includegraphics[width=0.48\textwidth]{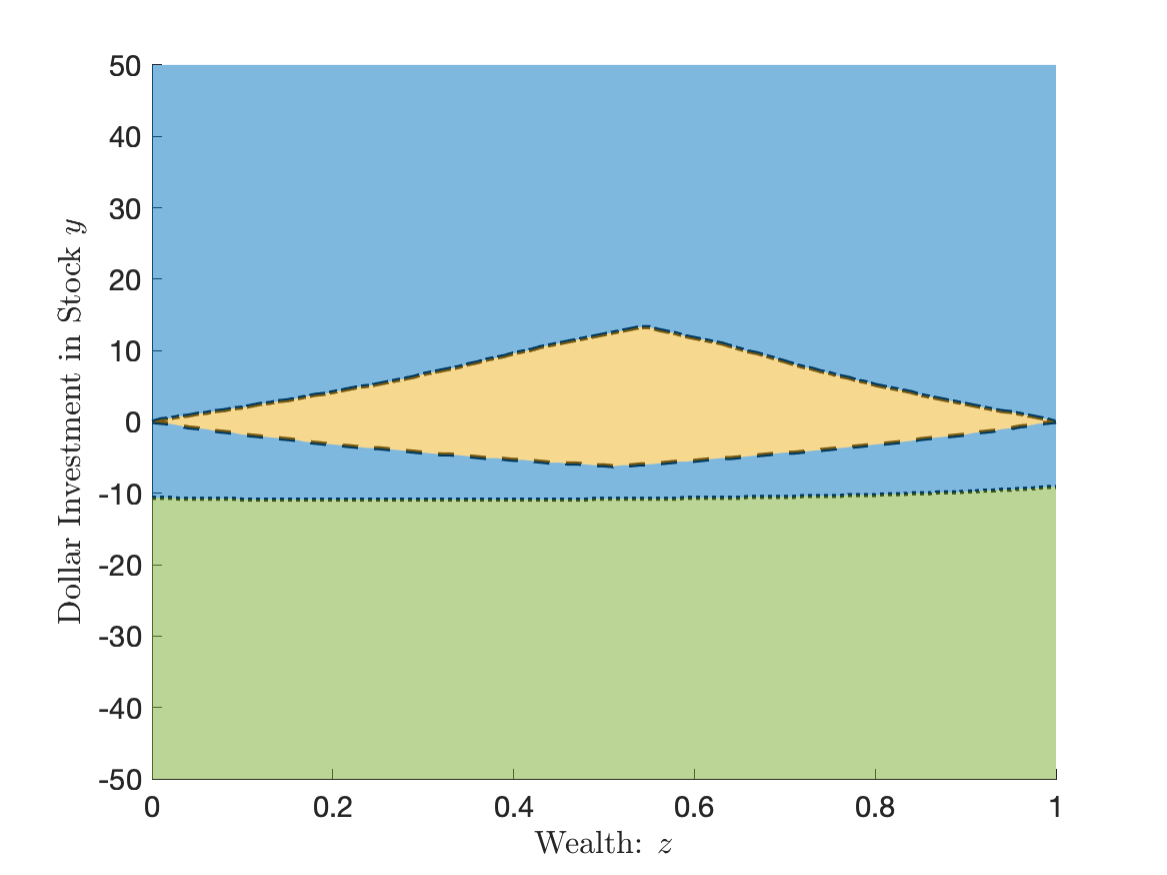}
\caption{The action regions of the goal-reaching problem under the Gaussian mean return model ($T-t=0.1$). Left panel: ${\nu}=0.1333$ (current $\mu-r=0.04$); Right panel ${\nu}=-0.1333$ (current $\mu-r=-0.04$). Yellow: sell region; Green: buy region; Blue: no-trading region. Dashed line: target position without transaction cost. Parameters: $\kappa=0.27$, $\zeta=0.065$, and $\rho=-0.93$ are from   \cite{dai2022nonconcave}. We also set $\sigma=0.3$, $\bar{\nu}=0.1333$ and $\theta_1=\theta_2=10^{-3}$. Under these parameters, $t_{\hat{p}} = -\infty$ for any $\hat{p}\in (0,1)$ and sufficiently close to $0$. 
}\label{fig nu strategy}
\end{figure}

\section{Conclusion}\label{sect:conclusion}
In this paper, we study the non-concave utility maximization problem under proportional transaction costs. Since the concavification principle is no longer applicable, we derive a rigorous theoretical characterization of the value function in terms of the discontinuous viscosity solution. Most importantly, we establish the asymptotic behavior of the value function as time approaches maturity. 

As numerical illustrations, we study the optimal strategies for the goal-reaching problem, as well as the aspiration utility maximization problem. We found that when the remaining time is short, the investor may hold on to a larger position in the risky asset compared to the frictionless case, and the investor can also hold on to a large short position of the risky asset despite its positive risk premium. 

From a theoretical perspective, this paper is among the strands of work on the discontinuous viscosity solutions arising from some mathematical finance problems. It will be of interest to further build a unified theoretical framework incorporating general frictions, such as capital gains tax and fixed costs. 
The joint impact of frictions and risk-seeking incentives predicted by our numerical results can also inspire future empirical work for real-world analyses.

\bibliographystyle{apalike}
\bibliography{reference.bib}

\section*{Appendices}
\addcontentsline{toc}{section}{Appendices}
\renewcommand{\thesubsection}{Appendix \Alph{subsection}}

\subsection{Proof of Theorem \ref{prop:term}}\label{sect:propterm}
We decompose the proof into three steps. We first show that the theorem holds in the special case of the goal-reaching problem in \ref{App A GBM 1}. Second, we prove a result bridging the goal-reaching problem to the general case in \ref{App A GBM 2}. Third, we finally prove for general utilities in \ref{App A GBM 3}.    

\subsubsection{The Special Case of Goal-Reaching Problem}\label{App A GBM 1}
\begin{proposition}\label{prop:goalreaching}
For the goal-reaching problem with $U(z) = \mathbf{1}_{z\geq w}$, $z\geq K$,  we have
\begin{align}\label{equ: termi}
\lim \limits_{(t, x, y, \nu) \to (T^-, \hat{x}, \hat{y}, \hat{\nu})}  V(t, x, y, \nu) - 2\Phi \bigg( \frac{\min\{z-w, 0\}}{ |w-x| \sigma(\hat{\nu})\sqrt{T-t} } \bigg) = 0,
\end{align}
where $z := x+(1-\theta_1)y^+- (1+\theta_2) y^-\geq K$ and $\hat{z} := \hat{x}+(1-\theta_1)\hat{y}^+- (1+\theta_2) \hat{y}^-\geq K$.  This equality can be achieved when choosing the strategy $\pi^*$ such that no transaction is made before reaching the liquidation boundary $K$ or the target wealth level $w$, and as soon as reaching either of these two boundaries, the investor liquidates all the stocks and waits until terminal time $T$. 
\end{proposition}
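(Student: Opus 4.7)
The plan is to establish matching lower and upper bounds for $V(t,x,y,\nu)$ as $(t,x,y,\nu)\to(T^-,\hat x,\hat y,\hat\nu)$. The lower bound will follow from a direct evaluation of the strategy $\pi^*$ from the statement; the upper bound, which must hold against \emph{every} admissible strategy, is the main technical obstacle and will require exploiting the transaction-cost structure.

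For the lower bound, under $\pi^*$ no trade occurs before $Z_s$ hits $\{K,w\}$, so $X_s\equiv x$ and $Y_s$ evolves as the frictionless GBM. Setting $\phi(y):=(1-\theta_1)y^+-(1+\theta_2)y^-$ and noting the identities $\phi(y)=z-x$ and $\phi'(y)y=\phi(y)$ valid for both signs of $y$, the linearization $Y_s-y\approx \sigma(\hat\nu)\,y\,(B_s-B_t)$ yields $Z_s-z\approx (z-x)\sigma(\hat\nu)(B_s-B_t)$ uniformly on $[t,T]$ as $T-t$ shrinks. The reflection principle applied to this driftless one-dimensional Brownian motion then gives a hitting probability of $2\Phi\bigl((z-w)/(|z-x|\sigma(\hat\nu)\sqrt{T-t})\bigr)$, and since $|z-x|\to |w-\hat x|$ whenever $\hat z=w$, the stated lower bound follows. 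The boundary cases $\hat z>w$ (immediate liquidation gives $V=1=2\Phi(0)$) and $\hat z<w$ (both expressions vanish as $t\to T^-$) are dealt with separately, with the drift and nonlinearity corrections absorbed as $o(1)$ using Assumption \ref{ass:2}.1.

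For the upper bound I first apply It\^o--Tanaka to $Z_s=X_s+\phi(Y_s)$ to obtain, for any admissible strategy,
\begin{align*}
dZ_s = \phi'(Y_s)\bigl[\eta(\nu_s)Y_s\,ds+\sigma(\nu_s)Y_s\,dB_s\bigr] - (\theta_1+\theta_2)\bigl[\mathbf 1_{Y_s>0}dL_s+\mathbf 1_{Y_s<0}dM_s\bigr] - \tfrac{\theta_1+\theta_2}{2}dL^0_s(Y).
\end{align*}
The last two terms are non-positive, so $Z_s$ is dominated by the It\^o process $\widetilde Z_s := z+\int_t^s\phi(Y_u)\eta(\nu_u)\,du+\int_t^s\phi(Y_u)\sigma(\nu_u)\,dB_u$ (using $\phi'(Y)Y=\phi(Y)$), and consequently $V\le\sup_\pi\mathbb P(\max_{t\le s\le T}\widetilde Z_s\ge w)$. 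The diffusion scale of $\widetilde Z$ is set by $|\phi(Y_u)|$, which can in principle be made large by aggressive trading; the cost of doing so is what the proof must extract.

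The main obstacle is controlling this supremum uniformly in $\pi$, exploiting the trade-off that enlarging $|\phi(Y_u)|$ requires trading which costs $(\theta_1+\theta_2)$ per unit and depletes the solvency budget $z-K$. I would split the event $\{\max\widetilde Z_s\ge w\}$ via the stopping time $\tau_\Delta:=\inf\{s\ge t:|\phi(Y_s)-(z-x)|\ge\Delta\}\wedge T$. On $\{\tau_\Delta=T\}$ the quadratic variation of the martingale part of $\widetilde Z$ is at most $C(|z-x|+\Delta)^2(T-t)$, and a Dambis--Dubins--Schwarz time-change plus the reflection principle bound the hitting probability by $2\Phi\bigl((z-w)/((|z-x|+\Delta)\sigma(\hat\nu)\sqrt{T-t})\bigr)+o(1)$, matching the target after sending $\Delta\to 0$ following the main limit. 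On $\{\tau_\Delta<T\}$ the accrued transaction cost is at least $(\theta_1+\theta_2)\Delta$, so reaching $w$ forces the martingale part of the full $Z$-decomposition to absorb an extra $(\theta_1+\theta_2)\Delta$; for $\Delta$ chosen of order $(T-t)^{1/2-\varepsilon}$, a Bernstein-type exponential-martingale tail estimate shows this event has probability $o(1)$. Assumption \ref{ass:2}.2 on the tails of $\nu$ enters here to absorb the stochastic-volatility fluctuations beyond GBM.
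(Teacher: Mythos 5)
Your lower bound is essentially the paper's: run the strategy $\pi^*$, linearize, and invoke the reflection principle, subtracting off the liquidation-hitting probability. The paper does this carefully via the inequality $\ln w \le w-1$ and the uniform continuity of $\Phi$, but the idea is the same and your treatment is sound in spirit.

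Your upper bound, however, takes a genuinely different route and it has two gaps that I do not think are merely presentational. First, the claim ``on $\{\tau_\Delta<T\}$ the accrued transaction cost is at least $(\theta_1+\theta_2)\Delta$'' is false as stated. The stopping time $\tau_\Delta$ monitors $\phi(Y_s)$, and $\phi(Y_s)$ moves for two distinct reasons: trading, and the stock price itself drifting/diffusing (the $\eta Y\,ds + \sigma Y\,dB$ component). The event $\{\tau_\Delta<T\}$ can be triggered entirely by stock movement with zero transaction cost accrued. You could split this event into ``stock moved a lot'' (cheap to bound) and ``trading contributed $\ge\Delta/2$'' (implies cost $\gtrsim\Delta$), but you have not done that, and it is not a cosmetic patch. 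Second, and more seriously, there is a circularity: even on the event where transaction cost $\gtrsim\Delta$ was paid by time $\tau_\Delta$, you need to bound $\mathbb{P}(\max_{\tau_\Delta\le s\le T}\widetilde Z_s\ge w)$, and on $(\tau_\Delta,T]$ the investor is free to keep trading and make $|\phi(Y_s)|$ as large as she likes. A single Bernstein-type exponential-martingale bound needs a priori control of the quadratic variation on $(\tau_\Delta,T]$, which is exactly what $\tau_\Delta$ was introduced to supply and which is now gone. You would need an iterated scheme (layered stopping times $\tau_{2\Delta},\tau_{3\Delta},\dots$ with geometric cost accumulation against the solvency budget $z-K$) or a global bound on cumulative trading, neither of which appears in your outline.

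The paper avoids all of this by a cleaner reduction in Step 1: with probability tending to one as $T-t\to 0$, the stock's oscillation ratio $\max_{s_1,s_2}S_{s_2}/S_{s_1}$ stays below $(1+\theta_2)/(1-\theta_1)$, and on that event every buy-then-sell (or short-then-cover) round trip is a strict loss. Hence up to an $o(1)$ probability error, one may restrict to sell-only strategies, for which $\max_s Y_s$ is bounded by $y\cdot\max_s S_s/S_t$ and the hitting probability is bounded directly by the fluctuation of the stock alone, via a deterministic time change of the driving Brownian motion. No It\^o--Tanaka decomposition, no layered stopping times, and no circularity to break. If you want to preserve your approach, I would at minimum (i) decompose $\{\tau_\Delta<T\}$ into stock-movement and trading contributions, and (ii) bound cumulative trading globally, e.g.\ by observing that $L_T-L_t\le\theta_1^{-1}\max_s Z^{(0)}_s$ where $Z^{(0)}$ is the frictionless wealth under the same strategy (this is Lemma \ref{lem:hitting prob} territory in the paper), rather than relying on the stopping-time localization alone.
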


\begin{proof}[Proof of Proposition \ref{prop:goalreaching}]

The result is straightforward when $z \geq w$. Therefore, we focus on $z<w$ in the following. 
 
We only consider the case $y>0$; the case $y \leq 0$ can be proved similarly. 

1. We first show that for this terminal condition, we only need to consider strategies without buying or shorting stock in $[t, T]$.

Assume $\Delta>0$ dollars of stock is bought at time $t$, i.e., $d L_t = \Delta>0$, and the investor sells those stock at time $t'\leq T$, then for any $s \in [t, t']$, we have the value of these $\Delta$ amount of stocks is $(1-\theta_1)\Delta \frac{S_s}{S_t}$, while the total purchase cost paid at time $t$ is $(1+ \theta_2) \Delta$. Thus, since the interest rate is $0$, we have the excess return of this transaction is
\begin{align} 
\Delta \bigg((1-\theta_1) \frac{S_s}{S_t} - (1+ \theta_2) \bigg), \ \forall s\in [t, t'], 
\end{align} 
which will be always non-positive if  $\max \limits_{t\leq s_1, s_2 \leq T}  \frac{S_{s_2}}{S_{s_1}} \leq \frac{ 1+ \theta_2}{1-\theta_1}$. 
Similarly, short-selling will also  generate negative excess return if $\max \limits_{t\leq s_1, s_2 \leq T}  \frac{S_{s_2}}{S_{s_1}} \leq \frac{ 1+ \theta_2}{1-\theta_1}$. 
We have the following limit 
\begin{align}\label{equ lim S ratio}
\lim \limits_{(t, x, y, \nu) \to (T^-, \hat{x}, \hat{y}, \hat{\nu})} \mathbb{P} \left(\max \limits_{t\leq s_1, s_2 \leq T} \frac{S_{s_2}}{S_{s_1}} \geq \frac{1+\theta_2}{1-\theta_1}  \right) = 0, 
\end{align}
because to ensure $S_{s_2}$ fluctuate enough amount, either $|\nu|$ is large enough, or the Brownian motion is sufficiently fluctuating, while the first event's probability is bounded by Assumption \ref{ass:2}, which converges to $0$, and the second event's probability also converges to 0 when $T-t\to 0$.  

2. In what follows, we only consider the strategies without purchase and short-sale in $[t, T]$. In this case, since the entire stock position should be liquidated no later than the terminal time, we must have   
\begin{align}\notag
\mathbb{P}(Z_T \geq w) &\leq \mathbb{P}\left((1-\theta_1)y \left(\max\limits_{t\leq s \leq T}\frac{S_{s}}{S_{t}} -1\right) \geq w-z \right)\\ \label{equ:asym}
 &= \mathbb{P} \left(\max\limits_{t\leq s \leq T}\frac{S_{s}}{S_{t}}  \geq \frac{w-x}{(1-\theta_1)y} \right). 
\end{align}
For any $\delta \leq 1$, consider the subset $\Omega_\delta: = [\hat{\nu}-\delta, \hat{\nu}+\delta ]$, and denote $a = \max \limits_{\nu \in \Omega_1}|\eta( \nu) -\frac{1}{2}\sigma^2(\nu)| $,  $\sigma_\delta = \max \limits_{\nu \in \Omega_\delta}\sigma(\nu)$,  we have that  for any constant $C>1$,
\begin{align}
	& \mathbb{P} \left(\max\limits_{t\leq s \leq T}\frac{S_{s}}{S_{t}}  \geq C\right)\\
	 = & \mathbb{P} \left(\max\limits_{t\leq s \leq T}\ln\frac{S_{s}}{S_{t}}  \geq \ln C\right) \notag \\
	\leq  & \mathbb{P} \left(a(T-t) +  \max\limits_{t\leq s \leq T}  \int_{t}^s  \sigma(\nu_v) d \mathcal{B}_v \geq \ln C, \max \limits_{t\leq s \leq T} |\nu_s -\hat{\nu}| \leq \delta  \right) + \mathbb{P} \left( \max \limits_{t\leq s \leq T} |\nu_s -\hat{\nu}| \geq \delta \right) \\
	\leq & \mathbb{P} \left( \max\limits_{t\leq s \leq T}  \int_{t}^s  \sigma(\nu_v) d \mathcal{B}_v \geq \ln C - a(T-t) , \max \limits_{t\leq s \leq T} \sigma(\nu_s) \leq \sigma_\delta  \right) + \mathbb{P} \left( \max \limits_{t\leq s \leq T} |\nu_s -\hat{\nu}| \geq \delta \right). \label{equ max S esti 1}
\end{align}
We introduce a new time scale $\xi = t+ \int_t^s \frac{\sigma^2(\nu_v)}{\sigma_\delta^2} dv$, and have 
\begin{align}
\int_t^s \sigma(\nu_v) d \mathcal{B}_v \text{ is identical in distribution with } \int_t^\xi \sigma_\delta d \mathcal{B}_v,\ \forall\ t\leq s\leq T .
\end{align}
Noticing that when $\max \limits_{t\leq s \leq T} \sigma(\nu_s) \leq \sigma_\delta$, we always have $\xi \leq s \leq T$, then 
\begin{align}
 	& \mathbb{P} \left( \max\limits_{t\leq s \leq T}  \int_{t}^s  \sigma(\nu_v) d \mathcal{B}_v \geq \ln C - a(T-t) , \max \limits_{t\leq s \leq T} \sigma(\nu_s) \leq \sigma_\delta  \right)\\
 	\leq   &  \mathbb{P} \left( \max\limits_{t\leq \xi \leq T}  \int_{t}^\xi  \sigma_\delta d \mathcal{B}_s \geq \ln C - a(T-t) \right)\\
 	=  & \mathbb{P} \left(\max\limits_{t\leq \xi \leq T} (\mathcal{B}_\xi - \mathcal{B}_t) \geq \frac{\ln C-a(T-t)}{\sigma_\delta} \right)\\
 	= &2 \Phi \left(\frac{\min\{-\ln C+a(T-t), 0\}}{\sigma_\delta\sqrt{T-t}}\right). \label{equ max S esti 2}
\end{align}
Therefore, from \eqref{equ max S esti 1}, \eqref{equ max S esti 2}, according to the  inequality $  \ln w\geq \frac{w-1}{w}$, $\forall w > 0$, 
\begin{align}
	\eqref{equ:asym} &\leq 2 \Phi \left(\frac{\min \{-\ln (\frac{w-x}{(1-\theta_1)y}) +a(T-t), 0\}} {\sigma_\delta\sqrt{T-t}}\right) + \mathbb{P} \left( \max \limits_{t\leq s \leq T} |\nu_s -\hat{\nu}| \geq \delta \right)\notag\\
	&\leq  2 \Phi  \left(\frac{\min \{-\frac{w-z}{w-x} +a(T-t), 0\}} {\sigma_\delta\sqrt{T-t}}\right)+ \mathbb{P} \left( \max \limits_{t\leq s \leq T} |\nu_s -\hat{\nu}| \geq \delta \right) \notag \\
    & = 2 \Phi  \left(\frac{\min \{z-w+ a(w-x)(T-t), 0\}} {\sigma_\delta(w-x)\sqrt{T-t}}\right)+ \mathbb{P} \left( \max \limits_{t\leq s \leq T} |\nu_s -\hat{\nu}| \geq \delta \right).\label{equ max S}
\end{align}
As a result, on the one hand, from \eqref{equ lim S ratio}, \eqref{equ max S} and Assumption \ref{ass:2}, we have 
\begin{align}
	& \limsup  \limits_{(t, x, y, \nu) \to ( T^-, \hat{x}, \hat{y}, \hat{\nu})} \bigg[V(t, x, y, \nu) -2 \Phi \left(\frac{z-w}{ (w-x) \sigma(\hat{\nu})\sqrt{T-t} } \right) \bigg]\\
	\leq 
& \limsup \limits_{(t, x, y, \nu) \to ( T^-, \hat{x}, \hat{y}, \hat{\nu})} \bigg[ \mathbb{P}\left(\max \limits_{t\leq s_1, s_2 \leq T} \frac{S_{s_2}}{S_{s_1}} \geq \frac{1+\theta_2}{1-\theta_1}  \right)  \\
	& \qquad  +2 \Phi  \left(\frac{\min\{ z-w+ a(w-x)(T-t), 0\}} {\sigma_\delta(w-x)\sqrt{T-t}}\right) -2 \Phi \left(\frac{z-w}{ (w-x) \sigma(\hat{\nu})\sqrt{T-t} } \right) \notag \\
	& \qquad + \mathbb{P} \left( \max \limits_{t\leq s \leq T} |\nu_s -\hat{\nu}| \geq \delta \right)\bigg]\\
	= &\limsup \limits_{(t, x, y, \nu) \to ( T^-, \hat{x}, \hat{y}, \hat{\nu})} \bigg[ 2 \Phi  \left(\frac{\min\{ z-w+ a(w-x)(T-t), 0\}} {\sigma_\delta(w-x)\sqrt{T-t}}\right) -2 \Phi \left(\frac{z-w}{ (w-x) \sigma(\hat{\nu})\sqrt{T-t} } \right) \notag \\
	& \qquad + \mathbb{P} \left( \max \limits_{t\leq s \leq T} |\nu_s -\hat{\nu}| \geq \delta \right)\bigg]\\
	= & \limsup \limits_{(t, x, y, \nu) \to ( T^-, \hat{x}, \hat{y}, \hat{\nu})} \bigg[  2 \Phi  \left(\frac{\min\{ z-w+ a(w-x)(T-t), 0\}} {\sigma_\delta(w-x)\sqrt{T-t}}\right) -2 \Phi \left(\frac{z-w}{ (w-x) \sigma(\hat{\nu})\sqrt{T-t} } \right)\bigg] \notag \\
	= & \limsup \limits_{(t, x, y, \nu) \to ( T^-, \hat{x}, \hat{y}, \hat{\nu})} \bigg[ 2 \Phi  \left(\frac{z-w} {\sigma_\delta(w-x)\sqrt{T-t}}\right) -2 \Phi \left(\frac{z-w}{ (w-x) \sigma(\hat{\nu})\sqrt{T-t} } \right)\bigg] \label{equ uni Phi} \\
 	\leq  & 2 \max \limits_{v\leq 0} \bigg[\Phi\bigg(\frac{v}{\sigma_\delta}\bigg) - \Phi\bigg(\frac{v}{\sigma(\hat{\nu})}\bigg)\bigg], 
\end{align}
where \eqref{equ uni Phi} is from the uniformly continuity of  function $\Phi$ on $\mathbb{R}$.
Since $\delta$ can be arbitrarily small and $\lim\limits_{\epsilon\to 0} \sup\limits_{w\in \mathbb{R}}|\Phi((1+\epsilon) w) -\Phi(w)| = 0$,  
we have actually 
\begin{align}
	 \limsup  \limits_{(t, x, y, \nu) \to ( T^-, \hat{x}, \hat{y}, \hat{\nu})} \bigg[V(t, x, y, \nu) -2 \Phi \left(\frac{z-w}{ (w-x) \sigma(\hat{\nu})\sqrt{T-t} } \right) \bigg] \leq 0. 
\end{align}
On the other hand, consider the strategy $\pi^*$ such that no transaction will be made before reaching the target or the liquidation boundary, then  
\begin{align}\label{equ V lowerbd}
V(t, x, y, \nu)\geq & \mathbb{P}(Z^{\pi^*}_T = w) 
 	=  \mathbb{P} \left(\max\limits_{t\leq s \leq T}\frac{S_{s}}{S_{t}}  \geq \frac{w-x}{(1-\theta_1)y} \right) - \mathbb{P} \left(\min\limits_{t\leq s \leq T}\frac{S_{s}}{S_{t}}  \leq \frac{K-x}{(1-\theta_1)y} \right). 
\end{align}
Analogous to \eqref{equ max S}, we have from $\ln w \leq w-1$, $\forall w >0$ that 
\begin{align}
	& \mathbb{P} \left(\max\limits_{t\leq s \leq T}\frac{S_{s}}{S_{t}}  \geq \frac{w-x}{(1-\theta_1)y} \right) \geq 2 \Phi  \left(\frac{\min \{z-w- a(1-\theta_1)y (T-t), 0\}} {(1-\theta_1)y\sqrt{T-t}\min \limits_{\nu \in \Omega_\delta}\sigma(\nu)}\right)- \mathbb{P} \left( \max \limits_{t\leq s \leq T} |\nu_s -\hat{\nu}| \geq \delta \right),
\end{align}
and 
\begin{align}
	& \mathbb{P} \left(\min\limits_{t\leq s \leq T}\frac{S_{s}}{S_{t}}  \leq \frac{K-x}{(1-\theta_1)y} \right) \leq     2 \Phi  \left(\frac{\min \{K-z+ a(1-\theta_1)y (T-t), 0\}} {(1-\theta_1)y\sqrt{T-t}\min \limits_{\nu \in \Omega_\delta}\sigma(\nu)}\right)+ \mathbb{P} \left( \max \limits_{t\leq s \leq T} |\nu_s -\hat{\nu}| \geq \delta \right), \label{equ prob low}
\end{align}
which converges to $0$ when $(t, x, y, \nu) \to ( T^-, \hat{x}, \hat{y}, \hat{\nu})$ for any $\hat{z} >K$.

{Therefore,  from Assumption \ref{ass:2}} and \eqref{equ V lowerbd},  \eqref{equ prob low}, when $\hat{z} = w$, 
\begin{align}
	& \liminf  \limits_{(t, x, y, \nu) \to ( T^-, \hat{x}, \hat{y}, \hat{\nu})} \bigg[V(t, x, y, \nu) - 2 \Phi \bigg(\frac{z-w}{ (w-x) \sigma(\hat{\nu})\sqrt{T-t} } \bigg)\bigg]\\
	\geq &  \liminf  \limits_{(t, x, y, \nu) \to ( T^-, \hat{x}, \hat{y}, \hat{\nu})} \bigg[2 \Phi  \left(\frac{\min \{z-w- a(1-\theta_1)y (T-t), 0\}} {(1-\theta_1)y\sqrt{T-t}\min \limits_{\nu \in \Omega_\delta}\sigma(\nu)}\right)- 2\mathbb{P} \left( \max \limits_{t\leq s \leq T} |\nu_s -\hat{\nu}| \geq \delta \right) \\
	&-  2 \Phi  \left(\frac{\min \{K-z+ a(1-\theta_1)y (T-t), 0\}} {(1-\theta_1)y\sqrt{T-t}\min \limits_{\nu \in \Omega_\delta}\sigma(\nu)}\right) - 2 \Phi \bigg(\frac{z-w}{ (w-x) \sigma(\hat{\nu})\sqrt{T-t} } \bigg) \bigg]\\
  = & \liminf  \limits_{(t, x, y, \nu) \to ( T^-, \hat{x}, \hat{y}, \hat{\nu})} \bigg[2 \Phi  \left(\frac{\min \{z-w- a(1-\theta_1)y (T-t), 0\}} {(1-\theta_1)y\sqrt{T-t}\min \limits_{\nu \in \Omega_\delta}\sigma(\nu)}\right) - 2 \Phi \big(\frac{z-w}{ (w-x) \sigma(\hat{\nu})\sqrt{T-t} } \big)\bigg]\\
  = &  \liminf  \limits_{(t, x, y, \nu) \to ( T^-, \hat{x}, \hat{y}, \hat{\nu})} \bigg[2 \Phi  \left(\frac{z-w} {(1-\theta_1)y\sqrt{T-t}\min \limits_{\nu \in \Omega_\delta}\sigma(\nu)}\right) - 2 \Phi \big(\frac{z-w}{ (w-x) \sigma(\hat{\nu})\sqrt{T-t} } \big)\bigg]\label{equ uni Phi2}\\
  \geq   & 2 \max \limits_{v\leq 0} \bigg[\Phi\bigg(\frac{v}{\min \limits_{\nu \in \Omega_\delta}\sigma(\nu)}\bigg) - \Phi\bigg(\frac{v}{\sigma(\hat{\nu})}\bigg)\bigg],
\end{align}
where \eqref{equ uni Phi2} is from that the function $\Phi$ is uniformly continuous on $\mathbb{R}$.
Since $\delta$ can be arbitrarily small and $\lim\limits_{\epsilon\to 0} \sup\limits_{w\in \mathbb{R}}|\Phi((1+\epsilon) w) -\Phi(w)| = 0$, 
we have actually 
\begin{align}
\liminf  \limits_{(t, x, y, \nu) \to ( T^-, \hat{x}, \hat{y}, \hat{\nu})} V(t, x, y, \nu) - 2 \Phi (\frac{z-w}{ (w-x) \sigma(\hat{\nu})\sqrt{T-t} } ) \geq 0, \text{ when } \hat{z}= w. 
\end{align}
When $\hat{z} < w$, $\limsup  \limits_{(t, x, y, \nu) \to ( T^-, \hat{x}, \hat{y}, \hat{\nu})} \Phi (\frac{z-w}{ (w-x) \sigma(\hat{\nu})\sqrt{T-t} } ) = 0$, and since $V(t, x, y, \nu)\geq 0$, we can easily derive the inequality
\begin{align}
\liminf  \limits_{(t, x, y, \nu) \to ( T^-, \hat{x}, \hat{y}, \hat{\nu})} V(t, x, y, \nu) - 2 \Phi (\frac{z-w}{ (w-x) \sigma(\hat{\nu})\sqrt{T-t} } ) \geq 0, \text{ when } \hat{z}< w. 
\end{align} 
Consequently, we have proved the proposition. 
\end{proof}

\subsubsection{Bridging the Goal-Reaching Problem with the General Case}\label{App A GBM 2}

Before we prove the terminal condition \eqref{equ: tercon} for general utility functions, we need the following proposition. 
\begin{proposition}\label{pro:estipro}
For any constants $0<q<1$, $\alpha>0$, $w>z$, $n \in \mathbb{N}^+$, and  $(x, y, \nu)\in \mathscr{S}$ , there exists constant {$C_{\nu, n}, d_{\nu} >0$}, which are locally uniformly bounded with respect to $\nu$ such that when
\begin{align}
T-t \leq &\min \bigg\{1, T-{t_{\nu, 2|\nu|}}, \bigg(\frac{w-z}{16 C_{\nu} (1-\theta_1)w^\alpha } \bigg)^4, \bigg(\frac{\ln 2}{4C_{\nu}}\bigg)^2,   \frac{z^{4\alpha}}{|{y}|^4}, (\frac{d_{\nu}}{2n})^2 \bigg\},  
\end{align} 
where $t_{\nu, 2|\nu|}$ is from Assumption \ref{ass:2} by setting $\bar{N} = 2|\nu|$ and  $C_{\nu}: = 2L |\nu| + |\eta(0)|+ \frac{1}{2} \bigg(\sigma(0) +2L|\nu|\bigg)^2$,
we have for any admissible strategy $(L_s, M_s)_{t\leq s \leq T} \in \mathcal{A}_t(x, y, \nu)$, the following inequality always holds
\begin{align}
\mathbb{P}\bigg(Z_T \geq w\big| (X_t, Y_t, \nu_t) = (x, y, \nu)\bigg) \leq &C_{\nu, n} \bigg(z^q (T-t)^{q/4} w^{-\alpha q} + \bigg(\frac{w^\alpha} {w-z}\bigg)^n  (T-t)^{n/4}\bigg).
\end{align}
\end{proposition}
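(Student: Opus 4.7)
The plan is to decompose the probability $\mathbb{P}(Z_T\geq w)$ into contributions from different scales, control the wealth process through its moment growth, and apply Markov-Chebyshev at two different exponents corresponding to the two terms in the bound. The two terms have complementary roles: the first captures the regime where $w$ is much larger than $z$ (via a low-moment estimate with exponent $q$), while the second captures the regime where $w-z$ is small relative to $(T-t)^{1/4}w^{\alpha}$ (via a high-moment estimate with exponent $n$). The assumed restriction $T-t\leq z^{4\alpha}/|y|^4$, equivalent to $|y|(T-t)^{1/4}\leq z^{\alpha}$, is precisely what calibrates the contribution of the initial stock position so that both terms have the uniform $(T-t)^{1/4}$ scale.

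First, I would isolate the ``good event'' $E=\{\max_{t\leq s\leq T}|\nu_s-\nu|\leq 2|\nu|\}$. By Assumption \ref{ass:2}.2 (invoked here with $\bar{N}=2|\nu|$, which is why $t_{\nu,2|\nu|}$ appears in the threshold), the probability of $E^c$ is bounded by $\widetilde{C}_{\nu,2|\nu|}e^{-4\delta_{\nu,2|\nu|}|\nu|^2/(T-t)}$, which decays faster than any polynomial in $(T-t)$ and can therefore be absorbed into either term of the target bound. On $E$, the Lipschitz condition (Assumption \ref{ass:2}.1) yields $|\eta(\nu_s)|+\tfrac12\sigma(\nu_s)^2\leq C_{\nu}$ uniformly, with $C_\nu$ as defined in the statement. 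This aligns with the appearance of $C_{\nu}$ in the threshold for $T-t$.

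The main technical tool is the moment bound for $\tilde{Z}_s:=X_s+Y_s$, which satisfies
\[
d\tilde{Z}_s=\eta(\nu_s)Y_s\,ds+\sigma(\nu_s)Y_s\,d\mathcal{B}_s-\theta_2\,dL_s-\theta_1\,dM_s,
\]
together with the key algebraic identity $\tilde{Z}_s-Z_s=\theta_1Y_s^+ +\theta_2Y_s^-\geq\min(\theta_1,\theta_2)|Y_s|$ and the solvency $Z_s\geq K\geq 0$, which imply $|Y_s|\leq\tilde{Z}_s/\min(\theta_1,\theta_2)$. Applying Itô to $\tilde{Z}_s^p$ for $p\in(0,1]$ or to $\tilde{Z}_s^n$, using this bound to absorb the $|Y|$ contributions into the drift, and invoking Grönwall on $E$ produces $\mathbb{E}[\sup_{t\leq s\leq T}\tilde{Z}_s^p]\leq C_{\nu,p}\tilde{z}^p$. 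A parallel BDG estimate then yields $\mathbb{E}[|\tilde{Z}_T-\tilde{z}|^n]\leq C_{\nu,n}(T-t)^{n/2}\tilde{z}^n$, and since $Z_T\leq \tilde{Z}_T$ and $\tilde{z}-z\leq(\theta_1+\theta_2)|y|$, one obtains pointwise $(Z_T-z)_+\leq|\tilde{Z}_T-\tilde{z}|+(\theta_1+\theta_2)|y|$.

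With these estimates, the two summands arise as follows. For the $(w^\alpha/(w-z))^n(T-t)^{n/4}$ term, apply Chebyshev with exponent $n$: using $|y|\leq z^\alpha/(T-t)^{1/4}\leq w^\alpha/(T-t)^{1/4}$ (from the assumed restriction) converts the $(T-t)^{n/2}$ exponent from BDG, when combined with the $|y|^n$ contribution, into the final $(T-t)^{n/4}w^{n\alpha}$ scale, and division by $(w-z)^n$ gives the result. For the $z^q(T-t)^{q/4}w^{-\alpha q}$ term, use the $q$-moment bound $\mathbb{P}(Z_T\geq w)\leq\mathbb{E}[\tilde{Z}_T^q]/w^q\leq C_{\nu,q}\tilde{z}^q/w^q$; the constraint on $|y|$ then gives $\tilde{z}^q\leq C(z^q+z^{\alpha q}/(T-t)^{q/4})$, and one rescales by a factor of $w^{(1-\alpha)q}$ matching the Markov step to the $\alpha q$-power of $w$, producing the asserted form (with the additional freedom to increase the exponent in $w$ on the right since $z\leq w$). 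The main obstacle is the careful bookkeeping in the last step: one must ensure the scaling in $(T-t)^{1/4}$ is propagated consistently so that the constants $C_{\nu,n}$ and $d_\nu$ depend only on $\nu$ and $n$ (not on the initial $(x,y)$), which is precisely what the threshold conditions on $T-t$ are designed to enforce. Handling the local time at $Y_s=0$ when invoking Itô for $\tilde{Z}$ is routine since $\tilde{Z}$ has smooth dynamics in $Y$.
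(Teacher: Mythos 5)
The proposal takes a genuinely different route from the paper, and unfortunately that route has a fundamental gap that I don't see how to repair.

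The paper's proof hinges on a case split on the \emph{cumulative trading volume}: either $L_T-L_t$ or $M_T-M_t$ exceeds a calibrated threshold $B=w^\alpha(T-t)^{-1/4}$, or it does not. The first alternative is unlikely because the paid transaction cost $\theta_1(L_T-L_t)$ cannot exceed $\max_s Z^{(0)}_s$, and the CRRA closed-form solution (Lemma~\ref{lem:hitting prob}) gives $\mathbb{P}(\max_s Z^{(0)}_s\geq \theta_1 B)\leq F^{(q)}z^q(\theta_1 B)^{-q}$, producing exactly $z^q(T-t)^{q/4}w^{-\alpha q}$. The second alternative is unlikely because when trades are bounded by $B$, the position is bounded by $B+|y|$, so to reach $w$ the log-stock-price must fluctuate by order $(w-z)/B$; Lemma~\ref{lem:hitting prob GBM} gives exponential concentration $e^{-d_\nu \epsilon/\sqrt{T-t}}$, and Lemma~\ref{lem:esti delta} converts this to $(w^\alpha/(w-z))^n(T-t)^{n/4}$. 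Your proposal has neither the trading-volume decomposition nor the calibrated threshold $B$; both of your terms are attempted directly from moment bounds on the frictionless wealth $\tilde Z_s=X_s+Y_s$, and the $(T-t)$ scaling does not come out right.

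Concretely, two steps fail. \textbf{(a)} For the first term, $\mathbb{P}(Z_T\geq w)\leq \mathbb{E}[\tilde Z_T^q]/w^q\leq C\tilde z^q/w^q$ contains no positive power of $T-t$; inserting the constraint $|y|\leq z^\alpha/(T-t)^{1/4}$ gives $\tilde z^q\lesssim z^q + z^{\alpha q}(T-t)^{-q/4}$, so after Chebyshev the bound behaves like $(T-t)^{-q/4}$, which \emph{blows up} as $T-t\to 0$ rather than vanishing. The required decaying factor $(T-t)^{q/4}$ only appears once one bounds the trading volume via the frictionless max-wealth and then inserts the threshold $B$. \textbf{(b)} The claimed BDG estimate $\mathbb{E}[|\tilde Z_T-\tilde z|^n]\leq C(T-t)^{n/2}\tilde z^n$ is false because the singular control terms $-\theta_2\,dL_s-\theta_1\,dM_s$ in $d\tilde Z_s$ can cause an instantaneous drop: if the entire stock position is liquidated at time $t$, then $\tilde Z_{t+}-\tilde z=-\theta_1 y^+-\theta_2 y^-$, which is of order $|y|$ regardless of $T-t$. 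More fundamentally, high ($n\geq 1$) moments of the frictionless wealth along an \emph{arbitrary} admissible strategy are not controlled at all; only the sub-linear ($q<1$) moments are, via the CRRA value function. The paper avoids this by sourcing the $(T-t)^{n/4}$ factor from the sub-Gaussian tail of $\max\ln(S_{s_2}/S_{s_1})$, not from wealth moments. Your strategy, as written, does not recover either term of the bound.
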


\begin{proof}[Proof of Proposition \ref{pro:estipro}]

For any strategy $\Pi = (L_s, M_s)$, $t\leq s\leq T$, consider the process with no transaction cost under strategy $\Pi$. Denote the corresponding cash, stock, and wealth processes by $X^{(0)}_s$,  $Y^{(0)}_s$,  $Z^{(0)}_s$. Then 
\begin{align}
L_T- L_t \leq \frac{1}{\theta_1}\max\limits_{t\leq s\leq T} Z^{(0)}_s,\notag
\end{align}
 {since the stock position is subject to transaction cost $\theta_1$ upon liquidation}, and cumulative transaction cost cannot exceed $\max\limits_{t\leq s\leq T} Z^{(0)}_s$.  Similarly, we have $$M_T- M_t \leq \frac{1}{\theta_2}\max\limits_{t\leq s\leq T} Z^{(0)}_s.$$
Given  $(x_t, y_t, z_t, \nu_t)$, to make $Z_T >w$, {we need either that the long or short leverage is sufficiently high, or the stock price is sufficiently fluctuant in the remaining time.} Therefore,   
for any constant $B>0$, we have 
\begin{align}
&\mathbb{P}\bigg(Z_T \geq w\big| (X_t, Y_t, \nu_t) = (x, y, \nu)\bigg)\notag \\
 \leq & \mathbb{P}(L_T- L_t\geq B) + \mathbb{P}\left((1-\theta_1)(B+|y|)\left(\max \limits_{t\leq s_1\leq s_2 \leq T}\frac{S_{s_2}}{S_{s_1}}-1\right)  \geq w-z\right)  \notag \\
& \quad +  \mathbb{P}(M_T- M_t \geq B) +  \mathbb{P}\left((1+\theta_2)(B+|y|)\left(1- \min\limits_{t\leq s_1\leq s_2 \leq T}\frac{S_{s_2}}{S_{s_1}}\right)  \geq w-z\right) \notag \\
\leq & \mathbb{P}\left(\max\limits_{t\leq s\leq T} Z^{(0)}_s \geq \theta_1 B\right)  + \mathbb{P}\left((1-\theta_1)(B+|y|)\left(\max \limits_{t\leq s_1\leq s_2 \leq T}\frac{S_{s_2}}{S_{s_1}}-1\right)  \geq w-z\right)  \label{equ distriesti1} \\
& \quad +  \mathbb{P}\left(\max\limits_{t\leq s\leq T} Z^{(0)}_s \geq \theta_2 B\right) +  \mathbb{P}\left((1+\theta_2)(B+|y|)\left(1- \min\limits_{t\leq s_1\leq s_2 \leq T}\frac{S_{s_2}}{S_{s_1}}\right)  \geq w-z\right). \label{equ: distriesti}
\end{align}
We need the following lemmas for the proof of the proposition. 
\begin{lemma}\label{lem:hitting prob}
For any constant $w>z>0$ and $0<q<1$, we have
\begin{align}
\max\limits_{\Pi} \mathbb{P}\left(\max\limits_{t\leq s\leq T}Z^{(0)}_s \geq w|z_t = z, \nu_t = \nu\right) = \max\limits_{\Pi} \mathbb{P}(Z^{(0)}_T \geq w|z_t = z, \nu_t = \nu) \leq F^{(q)}(t, \nu) \frac{z^q}{w^q}. \label{equ: L_t}
\end{align} 
\end{lemma}
\begin{proof}[Proof of Lemma \ref{lem:hitting prob}]
Since $\widetilde{U}(\zeta) : = \frac{\zeta^q}{w^q} \geq \mathbf{1}_{\zeta\geq w}$, we have 
\begin{align}
\max\limits_{\Pi} \mathbb{P}\left(Z^{(0)}_T\geq w|z_t = z, \nu_t = \nu\right) \leq \max\limits_{\Pi}  \mathbb{E}\left[\widetilde{U}(Z^{(0)}_T)| z_t = z, \nu_t = \nu\right] = F^{(q)}(t, \nu)  \frac{z^q}{w^q} \notag
\end{align}
from the closed-form solution for CRRA utilities.
\end{proof}

\begin{lemma}\label{lem:hitting prob GBM}
For any constant $\epsilon>0$, when 
\begin{align}\label{equ con lem GBM}
{T-t <\min \{1, T- {t_{\nu, 2|\nu|}}\} \ \text{and}\ \frac{\epsilon}{4 \sqrt{T-t}} > C_{\nu}}, 
\end{align}
we have 
\begin{align}
\max\bigg\{\mathbb{P} \left(\max \limits_{t\leq s_1\leq s_2 \leq T} \ln\frac{S_{s_2}}{S_{s_1}} \geq \epsilon\right), \mathbb{P} \left(\min \limits_{t\leq s_1\leq s_2 \leq T} \ln\frac{S_{s_2}}{S_{s_1}} \leq -\epsilon\right)  \bigg\}\leq \frac{1}{d_{\nu}} e^{-d_{\nu} \frac{\epsilon}{ \sqrt{T-t}}}, \label{equ: Osc}
\end{align}
for some constant $d_{\nu}>0$, which is locally uniformly positive with respect to $\nu$.
\end{lemma}
\begin{proof}[Proof of Lemma \ref{lem:hitting prob GBM}]
From the Assumption \ref{ass:2}, we have 
\begin{align}
	a_N := &  \max \limits_{ |\nu|< N}|\eta(\nu) -\frac{1}{2}\sigma^2(\nu)|  
	\leq  L N + |\eta(0)|+ \frac{1}{2} \bigg(\sigma(0) +LN\bigg)^2 .\label{equ aN}
\end{align}
Given the condition that $\frac{\epsilon}{4 \sqrt{T-t}} > C_{\nu}$, we  can find $N>0$ such that the right hand equals $\frac{\epsilon}{4 \sqrt{T-t}}$. By definition of $N$ and $C_\nu$, we have 
\begin{align}
N \geq \max \bigg\{C \frac{\sqrt{\epsilon}}{(T-t)^{1/4}}, 2|\nu|\bigg\}>0. \label{equ N lb} 
\end{align}
for some constant $C$ since the right hand side of \eqref{equ aN} grows in speed of $N^2$. 

When {$T-t\leq1$}, analogous to \eqref{equ max S esti 1}, \eqref{equ max S esti 2}, since $a_N(T-t) \leq \frac{\epsilon}{4 \sqrt{T-t}} (T-t) \leq \frac{\epsilon}{4}$,  we have 
\begin{align}
	& \mathbb{P} \left(\max \limits_{t\leq s_1\leq s_2 \leq T} \ln\frac{S_{s_2}}		{S_{s_1}} \geq \epsilon\right)  \\
	\leq & \mathbb{P} \left(\max \limits_{t\leq s \leq T} \ln\frac{S_{s}}		{S_{t}} \geq \frac{\epsilon}{2}\right) +  \mathbb{P} \left(\min \limits_{t\leq s \leq T} \ln\frac{S_{s}}		{S_{t}} \leq -\frac{\epsilon}{2}\right)\\
	\leq & 2\bigg[ 2 \Phi \left(\frac{\min\{-\frac{\epsilon}{2} +a_N(T-t), 0\}}{\max \limits_{|\nu|\leq N} \sigma(\nu) \sqrt{T-t}}\right) + \mathbb{P} \left( \max \limits_{t\leq s \leq T} |\nu_s | \geq N \right) \bigg]\\
	\leq & 2\bigg[ 2 \Phi \left(\frac{\min\{-\frac{\epsilon}{4}, 0\}}{\max \limits_{|\nu|\leq N} \sigma(\nu)\sqrt{T-t}}\right) + \mathbb{P} \left( \max \limits_{t\leq s \leq T} |\nu_s| \geq N \right) \bigg]\\
	\leq &4\Phi \left(\frac{-{\epsilon}}{4\max \limits_{|\nu|\leq N} \sigma(\nu)\sqrt{T-t}}\right) + 2\mathbb{P} \left( \max \limits_{t\leq s \leq T} |\nu_s | \geq N \right).  \label{equ max S esti}
\end{align}
Noticing when $\xi<0$, 
\begin{align}
\Phi(\xi) = \frac{1}{\sqrt{2\pi}} \int^{\infty}_{|\xi|} e^{-\frac{1}{2} t^2} dt 
\leq  \frac{1}{\sqrt{2\pi}}   \int^{\infty}_{|\xi|} \frac{t}{|\xi|}e^{-\frac{1}{2} t^2} dt
=  \frac{1}{\sqrt{2\pi} |\xi|} e^{-\frac{1}{2} \xi^2}, \label{equ esti cdfn}
\end{align}
and from the definition of $N$ and \eqref{equ con lem GBM}, we have
\begin{align}
\max \limits_{|\nu|\leq N} \sigma^2(\nu) (T-t) \leq  \bigg(\sigma(0) +LN\bigg)^2 (T-t) \leq 2 C_{\nu}(T-t) \leq  \frac{\epsilon}{2} \sqrt{T-t}.
\end{align} 
Therefore,  we can prove  
\begin{align}
\Phi \left(\frac{-{\epsilon}}{4\max \limits_{|\nu|\leq N} \sigma(\nu)\sqrt{T-t}}\right)\leq \frac{4\max \limits_{|\nu|\leq N} \sigma(\nu)\sqrt{T-t}}{\sqrt{2\pi}\epsilon}  e^{-\frac{1}{2} \frac{\epsilon^2}{16\max \limits_{|\nu|\leq N} \sigma^2(\nu) (T-t)}} \leq \frac{2 (T-t)^{1/4}}{\sqrt{\pi\epsilon}}e^{-\frac{\epsilon}{16 \sqrt{T-t}}} . \label{equ max S esti2}
\end{align}
Since $\xi: = \frac{\epsilon}{4 \sqrt{T-t}} > C_{\nu}>0$, we have the right hand side of \eqref{equ max S esti2} equals
\begin{align}\label{equ est Da}
\frac{1}{\sqrt{\pi} \sqrt{\xi}}e^{-\frac{\xi}{4 }}  \leq \frac{1}{d_\nu} e^{-\frac{\xi}{4 }}
\end{align} 
for some constant $d_\nu>0$. 

From Assumption \ref{ass:2} and \eqref{equ N lb}, 
\begin{align}
	& \mathbb{P} \left( \max \limits_{t\leq s \leq T} |\nu_s| \geq N \right) 
	\leq  \mathbb{P} \left( \max \limits_{t\leq s \leq T} |\nu_s -{\nu}| \geq N - |\nu| \right) \\
	\leq &  \widetilde{C}_{\nu, 2 |\nu|} e^{- \delta_{\nu} \frac{(N - |\nu|)^2}{T-t}}
	\leq     \widetilde{C}_{\nu, 2 |\nu|} e^{- \delta_{\nu} \frac{N ^2}{4(T-t)}}
	\leq    \widetilde{C}_{\nu, 2 |\nu|} e^{- \delta_{\nu} \frac{C^2\epsilon }{4(T-t)^{3/2}}}.  \label{equ max S esti3}
\end{align}
Therefore, from \eqref{equ max S esti}, \eqref{equ max S esti2}, \eqref{equ est Da}, and \eqref{equ max S esti3}, we have 
\begin{align}
	     \mathbb{P} \left(\max \limits_{t\leq s_1\leq s_2 \leq T} \ln\frac{S_{s_2}}		{S_{s_1}} \geq \epsilon\right)  \leq & \frac{4}{d_{\nu}} e^{-\frac{\epsilon}{16 \sqrt{T-t}}}+ \widetilde{C}_{\nu, 2 |\nu|} e^{- \delta_{\nu} \frac{C^2\epsilon }{4(T-t)^{3/2}}} 
	\leq \frac{4}{d_{\nu}} e^{-d_{\nu} \frac{\epsilon}{ \sqrt{T-t}}}
\end{align}
for some constants $d_{\nu}>0$.
\end{proof}
\begin{lemma}\label{lem:esti delta}
For any integer $n>0$, there is a constant $\delta_n>0$, such that
\begin{align}
(1+A)^\zeta \geq \delta_n (\zeta A)^n, \quad \forall \ \zeta\geq 2n,\ A \geq 0. \label{equ: expoesti}
\end{align}
\end{lemma}
\begin{proof}[Proof of Lemma \ref{lem:esti delta}]
Denote $\lfloor \zeta \rfloor$ the maximum integer no larger than $\zeta$, we have
\begin{align}
(1+A)^\zeta \geq (1+A)^{\lfloor \zeta \rfloor} \geq C_{\lfloor \zeta \rfloor}^n  A^n. \notag
\end{align} 
By noticing
\begin{align}
\inf \limits_{\zeta\geq 2n} \frac{C_{\lfloor \zeta \rfloor}^n}{\zeta^n} = \delta_n >0, \notag
\end{align}
we finish the proof.
\end{proof}

Now we turn back to the proof of Proposition \ref{pro:estipro}. We first focus on \eqref{equ distriesti1}.
According to Lemma \ref{lem:hitting prob}, we have 
\begin{align}
\mathbb{P}\left(\max\limits_{t\leq s\leq T} Z^{(0)}_s \geq \theta_1 B\right) \leq F^{(q)}(t, \nu)  \frac{z^q}{\theta_1^q} B^{-q}. \label{equ CRRA max}
\end{align}
Denote $\epsilon = \ln \left( \frac{w-z}{(1-\theta_1) (B+|y|)}+1  \right)$.
According to Lemma \ref{lem:hitting prob GBM}, when $B = w^\alpha (T-t)^{-1/4}$  
and  $${T -t \leq   \frac{z^{4\alpha}}{|{y}|^4} \leq \frac{w^{4\alpha}}{|{y}|^4}, }$$
we have $B \geq |y|$, and from $\ln (1+\zeta) \geq  \frac{1}{2} \zeta$, $\forall\ 0\leq \zeta \leq 1$, we know that 
\begin{align}
\epsilon \geq  \ln \left( \frac{w-z}{2 (1-\theta_1) B}+1  \right) \geq \min \bigg\{ \ln 2,   \frac{w-z}{4 (1-\theta_1) B} \bigg\}.
\end{align}
Therefore, if we further have 
\begin{align}
{T-t \leq \min \bigg\{1,  \bigg(\frac{w-z}{16C_{\nu}(1-\theta_1)w^\alpha } \bigg)^4, \bigg(\frac{\ln 2}{4C_{\nu}}\bigg)^2 \bigg\},  }
\end{align} 
we have \eqref{equ con lem GBM} is satisfied and thus from Lemma \ref{lem:hitting prob GBM}, 
\begin{align}
 \mathbb{P}\left((1-\theta_1)(B+|y|)\left(\max \limits_{t\leq s_1\leq s_2 \leq T}\frac{S_{s_2}}{S_{s_1}}-1\right)  \geq w-z\right)  \leq \frac{1}{d_{\nu}} e^{-d_{\nu} \frac{\epsilon}{ \sqrt{T-t}}}. \label{equ stock max}
\end{align}
In summary, from \eqref{equ CRRA max} and \eqref{equ stock max}
we have under this selection of $B = C^\alpha (T-t)^{-1/4}$,
\begin{align}
&  \mathbb{P}(\max\limits_{t\leq s\leq T} Z^{(0)}_s \geq \theta_1 B)  + \mathbb{P}\left((1-\theta_1)(B+|y|)\left(\max \limits_{t\leq s_1\leq s_2 \leq T}\frac{S_{s_2}}{S_{s_1}}-1\right)  \geq w-z\right) \notag \\
 \leq & F^{(q)}(t, \nu)  \frac{z^q}{(\theta_1)^q} (T-t)^{q/4} w^{-\alpha q} \label{equ term esti 1}\\
 & +\frac{1}{d_{\nu}}  e^{-d_{\nu} \frac{\epsilon}{ \sqrt{T-t}}} \label{equ term esti 2}. 
\end{align}

We also have  
\begin{align}
\eqref{equ term esti 1} 
\leq  F^{(q)}(t, \nu)  \frac{z^q}{(\min \{\theta_1,\theta_2\})^q} (T-t)^{q/4} w^{-\alpha q}.  \notag
\end{align}
Noticing $\epsilon = { \ln \left( \frac{w-z}{(1-\theta_1) (B+|y|)}+1  \right)} \geq {\ln \left( \frac{w-z}{2 (1-\theta_1) B}+1  \right)}$, we have 
\begin{align}
e^{-d_\nu \frac{\epsilon}{ \sqrt{T-t}}} \leq  \left( \frac{w-z}{2 (1-\theta_1) w^\alpha (T-			t)^{-1/4}}+1  \right)^{- \frac{d_\nu}{\sqrt{T-t}} }.
\end{align}
For any integer $n>0$, according to Lemma \ref{lem:esti delta},  when {$T-t \leq (\frac{d_{\nu}}{2n})^2$},  
\begin{align*}
&    \left( \frac{w-z}{2 (1-\theta_1) w^\alpha (T-t)^{-1/4}}+1  \right)^{\frac{-d_{\nu}}{\sqrt{T-t}} } \\
	\leq & \frac{1}{ \delta_n} \bigg[\frac{d_{\nu}(w-z)}{2(1-\theta_1) w^\alpha (T-		t)^{-1/4} \sqrt{T-t}} \bigg]^{-n}      \\
	= & \frac{1}{ \delta_n}  \bigg[\frac{2(1-\theta_1) w^\alpha (T-t)^{1/4}} {d_{\nu}(w-z)}\bigg]^n  \\
	= &  \frac{1}{\delta_n}  \bigg[\frac{2 (1-\theta_1) w^\alpha} {d_{\nu}(w-z)}\bigg]^n  (T-t)^{n/4}\\
	\leq &  \frac{1}{\delta_n}  \bigg[\frac{2 (1+\theta_2) w^\alpha} {d_{\nu}(w-z)}\bigg]^n  (T-t)^{n/4}
\end{align*}
In summary, we have 
\begin{align}
\eqref{equ term esti 1} + \eqref{equ term esti 2}   \leq C_{\nu, n} \bigg(z^q (T-t)^{q/4} w^{-\alpha q} + \bigg(\frac{w^\alpha} {w-z}\bigg)^n  (T-t)^{n/4}\bigg).
\end{align}
Similarly, we can handle \eqref{equ: distriesti}. 
This finishes the proof.
\end{proof}

\subsubsection{Proof of Theorem \ref{prop:term} }\label{App A GBM 3}

We prove the equality of \eqref{equ: tercon} by showing that inequalities hold in both directions.

{\flushleft\textbf{We first show the left side of \eqref{equ: tercon} is no greater than $0$.}}

For any constant $\epsilon>0$, there is a $0< \delta <\hat{z}$, s.t. $U(w)\leq U(\hat{z})+\epsilon$, when $w\leq \hat{z} +\delta$, and $U(w) \geq U(\hat{z}-)-\epsilon$ when $w\geq \hat{z}-\delta$. We have
\begin{align}
&\mathbb{E}[U(Z_T)\mathbf{1}_{{Z_T}\geq \hat{z}+ \delta}] \notag\\ 
=  & \int_{\hat{z}+\delta}^{+\infty} U(w) d \mathbb{P}(Z_T\leq w)\notag\\
\leq &  \int_{\hat{z}+\delta}^{+\infty} (C_1+C_2 w^p) d \mathbb{P}(Z_T\leq w)\notag\\
= & C_1\mathbb{P}(Z_T\geq \hat{z}+\delta) + C_2 (\hat{z}+\delta)^p \mathbb{P}(Z_T\geq \hat{z}+\delta) +C_2 \int_{\hat{z}+\delta}^{+\infty} p w^{p-1} \mathbb{P} (Z_T\geq w) dw. \label{equ:expext >z0}
\end{align} 
We let  $p<q<1$, $\frac{p}{q}<\alpha<1$, $n\in\mathbb{N}^+\cap (1/(1-\alpha), +\infty)$, then according to Proposition \ref{pro:estipro},
\begin{align*}
& \limsup \limits_{(t, x, y, \nu) \to ( T^-, \hat{x}, \hat{y}, \hat{\nu})} \int_{\hat{z}+\delta}^{+\infty} p w^{p-1}  \mathbb{P} (Z_T\geq w) dw \\
	 \leq & \limsup \limits_{(t, x, y, \nu) \to ( T^-, \hat{x}, \hat{y}, \hat{\nu})} C_{\nu, n} \bigg[\int_{\hat{z}+\delta}^{+\infty} p w^{p-1} {z^q} (T-t)^{q/4} w^{-\alpha q} dw\\
		 & \qquad \qquad + \int_{\hat{z}+\delta}^{+\infty} p w^{p-1} \bigg(\frac{w^\alpha} {w-z}\bigg)^n  (T-t)^{n/4} dw\bigg] \notag\\
	\leq &  \limsup \limits_{(t, x, y, \nu) \to ( T^-, \hat{x}, \hat{y}, \hat{\nu})} C_{\nu, n}\int_{\hat{z}+\delta}^{+\infty} p  {z^q} (T-t)^{q/4} w^{-\alpha q+p-1} dw\\
	& \qquad  + \limsup \limits_{(t, x, y, \nu) \to ( T^-, \hat{x}, \hat{y}, \hat{\nu})} C_{\nu, n} \int_{\hat{z}+\delta}^{+\infty} p w^{p-1} \bigg(\frac{w^\alpha} {w-z}\bigg)^n  (T-t)^{n/4} dw.\notag
\end{align*}
Since $\alpha q >p$ and $p-1+(\alpha-1) n \leq p-1-1 <-1$, when $T-t\to0$,   
\begin{align}
{\limsup \limits_{(t, x, y, \nu) \to ( T^-, \hat{x}, \hat{y}, \hat{\nu})} \int_{\hat{z}+\delta}^{+\infty} p w^{p-1}  \mathbb{P} (Z_T\geq w)dw } = 0.\notag
\end{align}
Analogously, 
\begin{align}
&\limsup \limits_{(t, x, y, \nu) \to ( T^-, \hat{x}, \hat{y}, \hat{\nu})}  \mathbb{P} (Z_T\geq \hat{z}+\delta) = 0. \notag
\end{align}
Consequently, we have from \eqref{equ:expext >z0}, 
\begin{align}
 \limsup \limits_{(t, x, y, \nu) \to ( T^-, \hat{x}, \hat{y}, \hat{\nu})} \mathbb{E}[U(Z_T)\mathbf{1}_{{Z_T}\geq \hat{z}+ \delta}] = \limsup \limits_{(t, x, y, \nu) \to ( T^-, \hat{x}, \hat{y}, \hat{\nu})}  \int_{\hat{z}+\delta}^{+\infty} U(w) d \mathbb{P}(Z_T\leq w)=0.\label{equ:esti expe >z0}
\end{align}
Set 
\begin{numcases}{\bar{U}(\zeta) := }
	U(\hat{z}-) & $K\leq \zeta< \hat{z}$\notag\\
	U(\hat{z}) & $\zeta\geq \hat{z}$,\notag
\end{numcases}
from \eqref{equ:esti expe >z0} we have 
\begin{align}
	& \limsup \limits_{(t, x, y, \nu) \to ( T^-, \hat{x}, \hat{y}, \hat{\nu})} E[U(Z_T)] -U(\hat{z}-) - 2 \big(U(\hat{z}) - U(\hat{z}-)\big)  \Phi\bigg(\frac{\min\{z-\hat{z},0\} }{ |\hat{z}-x| \sigma(\hat{\nu})\sqrt{T-t} } \bigg) \notag\\
	= & \limsup \limits_{(t, x, y, \nu) \to ( T^-, \hat{x}, \hat{y}, \hat{\nu})}  \int_{K}^{+\infty} U(w) d \mathbb{P}(Z_T\leq w) - U(\hat{z}-) - 2 \big(U(\hat{z}) - U(\hat{z}-)\big)  \Phi\bigg(\frac{\min\{z-\hat{z},0\} }{ |\hat{z}-x| \sigma(\hat{\nu})\sqrt{T-t} } \bigg)  \notag\\
	= & \limsup \limits_{(t, x, y, \nu) \to ( T^-, \hat{x}, \hat{y}, \hat{\nu})}   \int_{\hat{z}+\delta}^{+\infty} U(w) d \mathbb{P}(Z_T\leq w)+  \int_K^{\hat{z}+\delta}U(w) d \mathbb{P}(Z_T\leq w)\notag\\
		& \qquad \qquad - U(\hat{z}-) - 2 \big(U(\hat{z}) - U(\hat{z}-)\big)  \Phi\bigg(\frac{\min\{z-\hat{z},0\} }{ |\hat{z}-x|  \sigma(\hat{\nu})\sqrt{T-t} } \bigg) \notag\\
	= &  \limsup \limits_{(t, x, y, \nu) \to ( T^-, \hat{x}, \hat{y}, \hat{\nu})}   \int_K^{\hat{z}+\delta} U(w) d \mathbb{P}(Z_T\leq w)- U(\hat{z}-) - 2 \big(U(\hat{z}) - U(\hat{z}-)\big)  \Phi\bigg(\frac{\min\{z-\hat{z},0\} }{ |\hat{z}-x| \sigma(\hat{\nu})\sqrt{T-t} } \bigg)\notag  \\
	\leq &  \epsilon+\limsup \limits_{(t, x, y, \nu) \to ( T^-, \hat{x}, \hat{y}, \hat{\nu})}   \int_K^{\hat{z}+\delta} \bar{U}(w) d \mathbb{P}(Z_T\leq w)- U(\hat{z}-) - 2 \big(U(\hat{z}) - U(\hat{z}-)\big)  \Phi\bigg(\frac{\min\{z-\hat{z},0\} }{ |\hat{z}-x|\sigma(\hat{\nu})\sqrt{T-t} } \bigg). \notag
\end{align}
From Proposition \ref{prop:goalreaching}, we have 
\begin{align}
\limsup \limits_{(t, x, y, \nu) \to ( T^-, \hat{x}, \hat{y}, \hat{\nu})} \int_K^{\hat{z}+\delta} \bar{U}(w) d \mathbb{P}(Z_T\leq w) - U(\hat{z}-) - 2 \big(U(\hat{z}) - U(\hat{z}-)\big)  \Phi\bigg(\frac{\min\{z-\hat{z},0\} }{ |\hat{z}-x| \sigma(\hat{\nu})\sqrt{T-t} } \bigg) \leq 0. \notag 
\end{align}
Since $\epsilon$ is arbitrary, 
\begin{align}
\limsup \limits_{(t, x, y, \nu) \to ( T^-, \hat{x}, \hat{y}, \hat{\nu})}   \mathbb{E}[U(Z_T)] - 2 \big(U(\hat{z}) - U(\hat{z}-)\big)  \Phi\bigg(\frac{\min\{z-\hat{z},0\} }{ |\hat{z}-x| \sigma(\hat{\nu})\sqrt{T-t} } \bigg) \leq 0. \notag 
\end{align}

{\flushleft\textbf{We next show the left side of \eqref{equ: tercon} is no less than $0$.}}

When $\hat{y} >0$, consider the strategy $\pi^*$ which does not sell or buy in $[t, \tau_{\hat{z}}\wedge \tau_{\hat{z}-\delta})$, and sell all stock at $\tau_{\hat{z}}\wedge \tau_{\hat{z}-\delta}$, where $\tau_{\hat{z}}: = \inf \{ s\in [t, T]| Z^{\pi^*}_s\geq \hat{z}\}$, $\tau_{\hat{z}-\delta}: = \inf \{ s\in [t, T]| Z^{\pi^*}_s\leq \hat{z}-\delta\}$. 
For any $\hat{z}>0$, we have from \eqref{equ prob low}
\begin{align}
	      &\limsup \limits_{(t, x, y, \nu) \to ( T^-, \hat{x}, \hat{y}, \hat{\nu})}\mathbb{P}\big(Z^{\pi^*}_T \leq \hat{z}-\delta\big) \\
	\leq & \limsup \limits_{(t, x, y, \nu) \to ( T^-, \hat{x}, \hat{y}, \hat{\nu})} \mathbb{P} \left(\min\limits_{t\leq s \leq T}\frac{S_{s}}{S_{t}}  \leq \frac{\hat{z}-\delta-x}{(1-\theta_1)y} \right) \\
	\leq &  \limsup \limits_{(t, x, y, \nu) \to ( T^-, \hat{x}, \hat{y}, \hat{\nu})}  2 \Phi  \left(\frac{\min \{\hat{z}-\delta-z+ a(1-\theta_1)y(T-t), 0\}} {(1-\theta_1)y\sqrt{T-t}\min \limits_{\nu \in \Omega_\delta}\sigma(\nu)}\right)+ \mathbb{P} \left( \max \limits_{t\leq s \leq T} |\nu_s -\hat{\nu}| \geq \delta \right)\\
	= & 0. 
\end{align}
Therefore, when $(t, x, y, \nu)$ is sufficiently close to $( T^-, \hat{x}, \hat{y}, \hat{\nu})$, we have $\mathbb{P}\big(Z^{\pi^*}_T \leq \hat{z}-\delta\big)\leq \epsilon$, and 
\begin{align}
       &V(t, x, y, \nu) \\
	\geq & U(\hat{z}-) \mathbb{P}(\hat{z}-\delta \leq Z^{\pi^*}_T \leq \hat{z})+ U(\hat{z})  \mathbb{P}( Z^{\pi^*}_T \geq \hat{z}) \\
	\geq & U(\hat{z}-) \big(1-\mathbb{P}( Z^{\pi^*}_T \geq \hat{z}) \big)+ U(\hat{z})  \mathbb{P}( Z^{\pi^*}_T \geq \hat{z}) -\epsilon | U(\hat{z}-) |, 
\end{align}
while the first two terms is actually the goal-reaching problem's utility. Then from Proposition  \ref{equ: termi},  
\begin{align}
	&\liminf \limits_{(t, x, y, \nu) \to ( T^-, \hat{x}, \hat{y}, \hat{\nu})}  V(t, x, y, \nu) - U(\hat{z}-) - 2 \big(U(\hat{z}) - U(\hat{z}-)\big)  \Phi\bigg(\frac{\min\{z-\hat{z},0\} }{ |\hat{z}-x| \sigma(\hat{\nu})\sqrt{T-t} } \bigg) \notag  \\
	\geq & \liminf \limits_{(t, x, y, \nu) \to ( T^-, \hat{x}, \hat{y}, \hat{\nu})} U(\hat{z}-) \big(1-\mathbb{P}( Z^{\pi^*}_T \geq \hat{z}) \big)+ U(\hat{z})  \mathbb{P}( Z^{\pi^*}_T \geq \hat{z}) -\epsilon | U(\hat{z}-) | \notag \\
		& \qquad - U(\hat{z}-) - 2 \big(U(\hat{z}) - U(\hat{z}-)\big)  \Phi\bigg(\frac{\min\{z-\hat{z},0\} }{ |\hat{z}-x| \sigma(\hat{\nu})\sqrt{T-t} } \bigg)\notag\\
\geq &  -\epsilon| U(\hat{z}-) |.\label{equ:liminf w terminal}
\end{align}
The result is derived by noticing that $\epsilon$ can be arbitrarily small. 

For the case $\hat{y}\leq0$, the proof is similar, and that finishes our proof.\qed

\subsection{Proof of Theorem \ref{thm compa principle}}\label{sect:comparison}

We prove by contradiction. Consider $\psi(t, x, y, \nu) = e^{\beta(t-T)} u(t, x, y, \nu)$ and  $\phi(t, x, y, \nu) = e^{\beta(t-T)} v(t, x, y, \nu)$, where ${\beta>0}$, then $\psi$ (resp. $\phi$) is a viscosity subsolution (resp. supersolution) to 
\begin{align}
&\min\{-F_t -\frac{1}{2}\bigg(\sigma^2(\nu) y^2 F_{yy} + \zeta^2(\nu) F_{\zeta\zeta} + 2 \rho \sigma(\nu) y \zeta(\nu)F_{y\zeta} \bigg) - \eta(\nu) y F_y-m(\nu) F_\nu + \beta F ,\\
&\qquad \qquad  F_y - (1-\theta_1)F_x, \quad (1+\theta_2)  F_x - F_y   \} = 0, \notag
\end{align}
with the boundary condition  for $x_0+ (1-\theta_1) y_0^+ - (1+\theta_2) y_0^- = K$ 
\begin{align}
\begin{cases}
\lim \limits_{(s, x, y, \nu) \rightarrow(t, x_0, y_0, \nu_0)} F(s, x, y, \nu)=e^{\beta(t-T)} U(K) & \text { if } U(K)>-\infty \\ 
\lim \limits_{(s, x, y, \nu) \rightarrow(t, x_0, y_0, \nu_0)}  F(s, x, y, \nu)-e^{\beta(t-T)}V_{CRRA}(s, x, y, \nu)=0 & \text { if } K=0 \text { and } U(0)=-\infty,
\end{cases}
\end{align}
and the terminal condition \eqref{equ: tercon}.

{We only consider the case with $|U(K)|< \infty$, the other case can be handled in a similar approach by considering the function $F(t, x, y, \nu) - e^{\beta(t-T)}V_{CRRA}(t, x, y, \nu)$. }

Assume on the contrary there is some point $(\bar{t}, \bar{x}, \bar{y}, \bar{\nu}) \in (t_p, T)\times \mathscr{S}$ such that 
\begin{align}
\psi(\bar{t}, \bar{x}, \bar{y}, \bar{\nu}) -\phi(\bar{t}, \bar{x}, \bar{y}, \bar{\nu}) = 2\delta >0. \notag
\end{align}
Define function
\begin{align}
G(t, x, y, \nu, s, u, v, \xi) :=& \big((x+y)^q+C\big) F^{(q)}(t-(q-\hat{p}), \nu)+\big((u+v)^q+C\big) F^{(q)}(s-(q-\hat{p}), {\xi})\\
: =& H^{(1)} (t, x, y, \nu) + H^{(2)} (s, u, v, \xi)
\end{align}
where $q\in (p, 1)$ such that $t_{q} < \bar{t} - (q-{p})$ and $C> \max \limits_{w\geq 0} (C_1 w^{{p}}+C_2- w^{q})$.
 Then consider 
\begin{align}
M_\alpha (t, x, y, \nu, s, u, v, \xi): = &\psi(t, x, y, \nu) - \phi(s, u, v, \xi) -\varphi(t, x, y, \nu, s, u, v, \xi), \notag
\end{align}
where 
\begin{align}
&\varphi(t, x, y, \nu, s, u, v, \xi): \\
&\qquad =  \frac{\epsilon_1}{t-(t_q+q-{p})} + \frac{\epsilon_2}{T-t} + \epsilon_3 G(t, x, y, \nu, s, u, v, \xi) + \frac{\alpha}{2} \left(  (t-s)^2+ (y-v)^2  +(x-u)^2 + (\nu-\xi)^2   \right), 
\end{align}
and $\epsilon_1$, $\epsilon_2$, $\epsilon_3$ are three positive constants which are sufficiently small, s.t. $$M_\alpha (\bar{t}, \bar{x}, \bar{y}, \bar{\nu}, \bar{t}, \bar{x}, \bar{y}, \bar{\nu}) > \delta>0. $$ 

First we show that for any $\alpha>0$, we can find an interior point $(t, x, y, \nu, s, u, v, \xi) \in (t_q+q-{p}, T) \times \mathscr{S} \times  (t_q+q-{p}, T) \times \mathscr{S}$, such that $M_\alpha$ attains global maximum. 
On the one hand, when $x+y$ is large, 
\begin{align}
	\psi(t, x, y, \nu)\leq & C_1 (x+y)^{{p}} F^{({p})}(t, \nu)+C_2 \\
	\leq &\big(C_1 (x+y)^{{p}}+C_2\big) F^{({p})}(t, \nu)\\
	\leq & \epsilon_3(x+y)^qF^{(q)}(t, \nu)
\end{align}
On the other hand, we have that 
\begin{align}
\epsilon_d : = \min \limits_{t_{q} + q-{p} < t \leq T } \{A_q(t - (q-{p})) - A_{{p}}(t) \} >0. 
\end{align}
For the remaining set, when $x+y$ is bounded,  if $|\nu|$ is large enough, 
\begin{align}
	\psi(t, x, y, \nu) \leq & C_1 (x+y)^{{p}} F^{({p})}(t, \nu)+C_2\\
	\leq &\big(C_1 (x+y)^{{p}}+C_2\big) F^{({p})}(t, \nu)\\
	= & \big(C_1 (x+y)^{{p}}+C_2\big)e^{A_{{p}}(t) \nu^2 +B_{{p}}(t) \nu +C_{{p}}(t)}\\
	\leq &\big(C_1 (x+y)^{{p}}+C_2\big) e^{-\frac{\epsilon_d}{2}\nu^2} F^{(q)}(t-(q-{p}), \nu)\\
	\leq &  \epsilon_3\bigg((x+y)^q+C\bigg) F^{(q)}(t-(q-{p}), \nu) .
\end{align}
Given the lower bound over $\phi$ from the definition of set $\mathcal{C}$, when $x+y$ or $|\nu|$ is sufficiently larger, $M_\alpha$ will be negative. Moreover, given bounded $x+y$ and $|\nu|$, if $u+v$ or $|\xi|$ is sufficiently large, $M_\alpha$ is also negative.
Therefore, $M_\alpha$ takes maximum only on a bounded subset of $(t_q+q-{p}, T) \times \mathscr{S} \times  (t_q+q-{p}, T) \times \mathscr{S}$ independent of $\alpha$, $\epsilon_1$, and $\epsilon_2$.  
Moreover, this maximum cannot be taken on $z = K$ for the boundary condition we imposed. 
We denote one of the maximizers by $(t_\alpha, x_\alpha, y_\alpha, \nu_\alpha, s_\alpha, u_\alpha, v_\alpha, \xi_\alpha)$. 

Second, we show that we can find a sufficiently small $\epsilon_2 >0$, such that 
\begin{align}
\frac{\epsilon_1}{(t_\alpha-(t_q+q-{p}))^2} \geq \frac{\epsilon_2}{(T-t_\alpha)^2},\ \text{for sufficiently large $\alpha$}.
\label{eq:cp3}
\end{align}
Notice that for any small fixed $\epsilon_1$, $\epsilon_2$, and $\epsilon_3$ there exists a subsequence such that 
\begin{align}
\lim \limits_{\alpha \to +\infty}\alpha \big( (t_\alpha-s_\alpha)^2 +(x_\alpha-u_\alpha)^2+  (y_\alpha-v_\alpha)^2 + (\nu-\xi)^2 \big) = 0,\label{equ:alpha lim}
\end{align}
and both $(t_\alpha, x_\alpha, y_\alpha, \nu_\alpha)$
and $ (s_\alpha,u_\alpha, v_\alpha, \xi_\alpha)$ converge to some interior point $(\hat{t}, \hat{x}, \hat{y}, \hat{\nu})$ (see, e.g. \cite{crandall1992user}). 
Let $(\hat{t}_0, \hat{x}_0, \hat{y}_0, \hat{\nu}_0)$ be a limit of $(\hat{t}, \hat{x}, \hat{y}, \hat{\nu})$ as $\epsilon_2\to 0$.

We next show that $\hat{t}_0<T$, and \eqref{eq:cp3} is proved accordingly.
We prove by contradiction.
If $\hat{t}_0=T$, according to the terminal condition \eqref{equ: tercon}, 
we have
\begin{align*}
	& \limsup \limits_{(\hat{t}, \hat{x}, \hat{y},\hat{\nu}) \to (T-, \hat{x}_0, \hat{y}_0, \hat{\nu}_0)} \bigg(\psi(\hat{t}, \hat{x}, \hat{y},\hat{\nu}) - \phi(\hat{t}, \hat{x}, \hat{y},\hat{\nu})\bigg)  \\
	\leq & \limsup \limits_{(\hat{t}, \hat{x}, \hat{y},\hat{\nu}) \to (T-, \hat{x}_0, \hat{y}_0, \hat{\nu}_0)}\psi(\hat{t}, \hat{x}, \hat{y},\hat{\nu})  - U(\hat{z}_0-) -  2 \Phi\left(\frac{\min\{z-\hat{z}, 0\}}{ |\hat{z}-x| \sigma(\nu_0)\sqrt{T-t} } \right) (U(\hat{z}_0) - U(\hat{z}_0-) ) \\
	\quad &- \liminf \limits_{(\hat{t}, \hat{x}, \hat{y},\hat{\nu}) \to (T-, \hat{x}_0, \hat{y}_0, \hat{\nu}_0)} \phi(\hat{t}, \hat{x}, \hat{y},\hat{\nu}) - U(\hat{z}_0-) -  2 \Phi\left(\frac{\min\{z-\hat{z}, 0\}}{ |\hat{z}-x| \sigma(\nu_0)\sqrt{T-t} } \right) (U(\hat{z}_0) - U(\hat{z}_0-) ) \\
 \leq & 0,
\end{align*}
which contradicts the fact that, for each $\epsilon_2$ and $\alpha$, 
\begin{align}
       & \psi(t_\alpha, x_\alpha, y_\alpha, \nu_\alpha) - \phi(t_\alpha, x_\alpha, y_\alpha, \nu_\alpha)\\
  & \quad    \geq  M_\alpha(t_\alpha, x_\alpha, y_\alpha, \nu_\alpha, s_\alpha, u_\alpha, v_\alpha, \xi_\alpha)\geq M_\alpha(\bar{t}, \bar{x}, \bar{y}, \bar{\nu},\bar{t}, \bar{x}, \bar{y}, \bar{\nu})>\delta>0.\label{eq:cp5}
\end{align}
Third, we apply Ishii's lemma to prove the theorem. For notational simplicity, we use $(t, x, y, \nu, s, u, v, \xi)$ for $(t_\alpha, x_\alpha, y_\alpha, \nu_\alpha, s_\alpha, u_\alpha, v_\alpha, \xi_\alpha)$. By Ishii's lemma, for any $\gamma>0$, there are constants $M$ and $N$ such that
\begin{align}
& \min\{- H(t, x, y, \nu,  \nabla \varphi, M) + \beta \psi, \varphi_y - (1-\theta_1)\varphi_x, (1+\theta_2)  \varphi_x - \varphi_y   \} \leq 0,\label{equ: cpsub} \\
&  \min\{-H(s, u, v, \xi, -\nabla \varphi, N)+ \beta \phi, -\varphi_v + (1-\theta_1)\varphi_u, -(1+\theta_2)  \varphi_u + \varphi_v   \} \geq 0,\label{equ: cpsuper} 
\end{align}
where 
\begin{align}
& H(t, x, y, \nu, \nabla \varphi, A) \\
& \qquad : = \varphi_t +\frac{1}{2}\bigg(\sigma^2(\nu) y^2 A_{11} + \zeta^2(\nu) A_{22} + 2 \rho \sigma(\nu) y \zeta(\nu) A_{12} \bigg) + \eta( \nu) y \varphi_y+m(\nu) \varphi_\nu \\
& \qquad =  \varphi_t +\frac{1}{2} Tr (A \Sigma_{y, \nu}) + \eta(\nu) y \varphi_y+m(\nu) \varphi_\nu,
\end{align}
with $ \Sigma_{y, \nu}: = 
 \begin{pmatrix}
   \sigma^2(\nu) y^2 &\rho \sigma(\nu) y \zeta(\nu)\\
   \rho \sigma(\nu) y \zeta(\nu) & \zeta^2(\nu)
  \end{pmatrix}$
and 
\begin{align}
\left (
 \begin{matrix}
   M &0\\
   0 &-N
  \end{matrix}
  \right)               \leq
    \nabla^2_{y, \nu, v, \xi} \ \varphi+ \gamma
  \left (\nabla^2_{y, \nu, v, \xi} \ \varphi
  \right) ^2 \label{eq:Ishii second}
\end{align}
with
\[
\nabla^2_{y, \nu, v, \xi} \ \varphi= \left (
 \begin{matrix}
   \nabla^2_{y, \nu} \varphi & -\alpha I_2\\
    -\alpha I_2 & \nabla^2_{v, \xi} \varphi 
  \end{matrix}
  \right).
\]
From the definition of $\varphi$, we have 
\begin{align}
&\varphi_t =  -\frac{\epsilon_1}{(t-(t_q+q-{p}))^2} + \frac{\epsilon_2}{(T-t)^2}+\epsilon_3 H^{(1)}_t+ \alpha (t-s),\\ 
& \varphi_s = \epsilon_3H^{(2)}_s-\alpha (t-s),\notag\\
& \varphi_x = \epsilon_3 H^{(1)}_x +\alpha (x-u), \qquad  \varphi_y = \epsilon_3 H^{(1)}_y + \alpha (y-v),\notag\\
&  \varphi_u =\epsilon_3 H^{(2)}_u  - \alpha (x-u), \qquad   \varphi_v = \epsilon_3 H^{(2)}_v - \alpha (y-v), \notag\\
& \varphi_\nu =  \epsilon_3 H^{(1)}_\nu +\alpha (\nu-\xi), \qquad \varphi_\xi =  \epsilon_3H^{(2)}_\xi -\alpha (\nu-\xi)
\end{align}
and 
\begin{align}
&\nabla^2_{y, \nu} \ \varphi = \nabla^2_{y, \nu} H^{(1)}+ \alpha I_2, \quad 
\nabla^2_{v, \xi} \ \varphi = \nabla^2_{v, \xi} H^{(2)}+ \alpha I_2
\end{align}

According to  \eqref{equ: cpsub}, one of the following three inequalities should be satisfied. 
  
{\flushleft (i) $\varphi_y - (1-\theta_1) \varphi_x \leq 0$.}

We have from \eqref{equ: cpsuper} that 
\begin{align}
0\geq & [\varphi_y - (1-\theta_1) \varphi_x]  - [ -\varphi_v + (1-\theta_1) \varphi_u  ]  =\theta_1\epsilon_3 \bigg[ H^{(1)}_y + H^{(2)}_v \bigg]>0, \notag
\end{align}
which cannot hold, and hence leads to a contraction.

{\flushleft (ii) $(1+\theta_2) \varphi_x -\varphi_y \leq 0$.}

We have from \eqref{equ: cpsuper} that 
\begin{align}
0\geq & [(1+\theta_2)\varphi_x - \varphi_y]  -  [-(1+\theta_2)\varphi_u + \varphi_v]  = \theta_2\epsilon_3 \bigg[ H^{(1)}_\nu + H^{(2)}_\xi \bigg] >0, \notag
\end{align}
which again cannot hold, and hence leads to a contraction.

{\flushleft (iii) $- H(t, x, y, \nu,  \nabla \varphi, M) + \beta \psi \leq 0$}

We have from \eqref{equ: cpsuper} that 
\begin{align*}
	0\geq& [-\varphi_t -\frac{1}{2} Tr (M \Sigma_{y, \nu}) - \eta(\nu) y \varphi_y-m(\nu) \varphi_\nu+\beta\psi]\\
	&  + [-\varphi_s +\frac{1}{2} Tr (N \Sigma_{v, \xi}) - \eta(\xi) v \varphi_v - m(\xi) \varphi_\xi-\beta\phi  ] \\
	= & \bigg[\frac{\epsilon_1}{(t-(t_q+q-{p}))^2} - \frac{\epsilon_2}{(T-t)^2}-\epsilon_3  H^{(1)}_t-\alpha (t-s) \\
	& \qquad -\frac{1}{2}Tr (M \Sigma_{y, \nu})  - \eta(\nu) y \bigg(\epsilon_3 H^{(1)}_y + \alpha (y-v)\bigg) -m(\nu) \bigg(\epsilon_3 H^{(1)}_\nu +\alpha (\nu-\xi) \bigg) +\beta \psi\bigg] \\
	&\quad + \bigg[- \epsilon_3 H^{(2)}_s+ \alpha(t-s)+\frac{1}{2} Tr (N \Sigma_{v, \xi}) \\
	& \qquad  - \eta(\xi) v \bigg( \epsilon_3 H^{(2)}_v - \alpha (y-v) \bigg)- m( \xi) \bigg(\epsilon_3 H^{(2)}_\xi-\alpha (\nu-\xi)  \bigg) -\beta \phi   \bigg] \\
	\geq & -\epsilon_3\bigg( H^{(1)}_t + \eta(\nu) y H^{(1)}_y  + m(\nu) H^{(1)}_\nu \bigg) -\epsilon_3\bigg(H^{(2)}_s + \eta(\xi) v H^{(2)}_v  +  m( \xi) H^{(2)}_\xi \bigg)\\
	& \qquad -\frac{1}{2} \big( Tr (M \Sigma_{y, \nu})-Tr (N \Sigma_{v, \xi})\big) + \alpha (y-v) \big( \eta(\xi) v -\eta(\nu) y \big) +\alpha (\nu-\xi) \big(m( \xi) -m(\nu) \big)\\
	& \qquad + \beta (\psi-\phi) \label{equ cp final}
\end{align*}
Since $\Sigma_{y, \nu}$ is symmetric and positive definite, there is $2\times 2$ matrix $\sigma_{y, \nu}: = 
 \begin{pmatrix}
  \sigma(\nu) y, &0\\
  \rho \zeta(\nu), & \sqrt{1-\rho^2} \zeta(\nu)
  \end{pmatrix}$ such that 
\begin{align}
 \Sigma_{y, \nu} =  \sigma_{y, \nu}\sigma^\top_{y, \nu}.
\end{align} 
According to \eqref{eq:Ishii second}, from the Lipschitz continuity of $\sigma( \nu)$ and $\zeta(\nu)$, if we choose sufficiently small $\gamma$ as $\alpha \to \infty$, 
\begin{align}
	&Tr (M \Sigma_{y, \nu})-Tr (N \Sigma_{v, \xi})\\
	= & Tr ( \sigma^\top_{y, \nu} M \sigma_{y, \nu})-Tr (\sigma^\top_{v, \xi} N \sigma_{v, \xi})\\
	= & Tr \bigg(  (\sigma^\top_{y, \nu},  \sigma^\top_{v, \xi}) 
\begin{pmatrix}
M,&0\\
0, &-N
\end{pmatrix}
\begin{pmatrix}
\sigma_{y, \nu}\\
\sigma_{v, \xi}
\end{pmatrix}
\bigg)\\
    \leq & Tr \bigg(  (\sigma^\top_{y, \nu},  \sigma^\top_{v, \xi}) 
\left(\nabla^2_{y, \nu, v, \xi} \ \varphi + \gamma ( \nabla^2_{y, \nu, v, \xi} \ \varphi)^2\right)
\begin{pmatrix}
\sigma_{y, \nu}\\
\sigma_{v, \xi}
\end{pmatrix}
\bigg)\\ 
     \leq & Tr \bigg(  (\sigma^\top_{y, \nu},  \sigma^\top_{v, \xi}) 
\nabla^2_{y, \nu, v, \xi} \ \varphi 
\begin{pmatrix}
\sigma_{y, \nu}\\
\sigma_{v, \xi}
\end{pmatrix}
\bigg) +\gamma Tr \bigg(  (\sigma^\top_{y, \nu},  \sigma^\top_{v, \xi}) 
(\nabla^2_{y, \nu, v, \xi} \ \varphi)^2
\begin{pmatrix}
\sigma_{y, \nu}\\
\sigma_{v, \xi}
\end{pmatrix}
\bigg)\\
	= & \alpha  Tr\left((\sigma_{y, \nu}-\sigma_{v, \xi})^\top(\sigma_{y, \nu}-\sigma_{v, \xi})\right) + \epsilon_3  Tr\bigg(\sigma_{y, \nu}^\top (\nabla^2_{y,\nu} H^{(1)}) \sigma_{y, \nu} + \sigma_{v, \xi}^T (\nabla^2_{v,\xi} H^{(2)}) \sigma_{v, \xi}\bigg)\\
	& \quad +\gamma Tr \bigg(  (\sigma^\top_{y, \nu},  \sigma^\top_{v, \xi}) 
(\nabla^2_{y, \nu, v, \xi} \ \varphi)^2
\begin{pmatrix}
\sigma_{y, \nu}\\
\sigma_{v, \xi}
\end{pmatrix}
\bigg) \\
	= & \epsilon_3  Tr\bigg( (\nabla^2_{y,\nu} H^{(1)}) \Sigma_{t, y, \nu} +  (\nabla^2_{v,\xi} H^{(2)}) \Sigma_{s, v, \xi}\bigg) +o(1),
\end{align}
and
\begin{align}
 \alpha (y-v) \big( \eta(\xi) v -\eta(\nu) y \big) +\alpha (\nu-\xi) \big(m(\xi) -m( \nu) \big) = o(1).
\end{align} 
 Therefore,
 \begingroup
 \allowdisplaybreaks
 \begin{align}
 \eqref{equ cp final} \geq& -\epsilon_3 \bigg( H^{(1)}_t + \frac{1}{2}Tr\bigg( (\nabla^2_{y,\nu} H^{(1)}) \Sigma_{y, \nu}\bigg) + \eta(\nu) y H^{(1)}_y  + m(\nu) H^{(1)}_\nu \bigg) \\
 & \quad -\epsilon_3\bigg(H^{(2)}_s + \frac{1}{2} Tr\bigg( (\nabla^2_{v,\xi} H^{(2)}) \Sigma_{v, \xi}\bigg)+ \eta(\xi) v H^{(2)}_v  +  m(\xi) H^{(2)}_\xi \bigg)+ \beta (\psi-\phi) +o(1). \label{equ cp final2}
 \end{align} 
 Since $h^{(1)}(t, x, y, \nu): = (x+y)^q F^{(q)} (t, \nu)$ corresponds to the closed form solution when no transaction cost is imposed, we have 
\begin{align} 
 & h^{(1)}_t + \frac{1}{2}Tr\bigg( (\nabla^2_{y,\nu} h^{(1)}) \Sigma_{y, \nu}\bigg) + \eta(t, \nu) y h^{(1)}_y  + m(t, \nu) h^{(1)}_\nu \\
	 = & (x+y)^q F^{(q)}_t + \frac{1}{2}\bigg(\sigma^2(\nu) y^2 q(q-1) (x+y)^{q-2}F^{(q)} + \zeta^2(\nu) (x+y)^q F^{(q)}_{\nu\nu} \\
 & \quad + 2 \rho \sigma(\nu) y q(x+y)^{q-1} F^{(q)}_\nu \bigg)+ \eta(\nu) y q(x+y)^{q-1} F^{(q)}+ m(\nu) (x+y)^{q} F^{(q)}_\nu\\
 	\leq & (x+y)^q \max \limits_{\kappa := \frac{y}{x+y}} \bigg[F^{(q)}_t + \frac{1}{2}\bigg( \sigma^2(\nu) \kappa^2 q(q-1) F^{(q)} + \zeta^2(\nu)  F^{(q)}_{\nu\nu} + 2 \rho \sigma(\nu)  q \kappa F^{(q)}_\nu \bigg)  \\
 & \qquad + \eta(\nu)  q \kappa F^{(q)}+ m(\nu) F^{(q)}_\nu  \bigg]\\
 	 =  & (x+y)^q \bigg[ F^{(q)}_t+ \frac{1}{2} \zeta^2(\nu) F^{(q)}_{\nu\nu} + m(\nu) F^{(q)}_\nu \\ 
 	 &\qquad +\max \limits_{\kappa := \frac{y}{x+y}} \bigg( \frac{1}{2}  \sigma^2(\nu) \kappa^2 q(q-1) F^{(q)}  +  \rho \sigma(\nu)  q \kappa F^{(q)}_\nu   +  \eta(\nu)  q \kappa F^{(q)} \bigg) \bigg]   \\
	 = & 0. \label{equ h esti}
\end{align}
\endgroup
and for $J^{(1)}(y, \nu): = F^{(q)} (t, \nu) $, 
\begin{align} 
 & J^{(1)}_t + \frac{1}{2}Tr\bigg( (\nabla^2_{y,\nu} J^{(1)}) \Sigma_{y, \nu}\bigg) + \eta(\nu) y J^{(1)}_y  + m(\nu) J^{(1)}_\nu \\
 = &  F^{(q)}_t +\frac{1}{2} \zeta^2(\nu) F^{(q)}_{\nu\nu} + m(\nu) F^{(q)}_\nu, \label{equ J esti}
\end{align}
since $\max \limits_{\kappa := \frac{y}{x+y}} \bigg( \frac{1}{2}  \sigma^2( \nu) \kappa^2 q(q-1) F^{(q)}  +  \rho \sigma(\nu)  q \kappa F^{(q)}_\nu   +  \eta( \nu)  q \kappa F^{(q)} \bigg) \geq 0$, we have from \eqref{equ h esti}
that 
$\eqref{equ J esti} \leq 0. $

In summary, since 
\begin{align}
H^{(1)}(t, x, y, \nu) = ((x+y)^q+C) F^{(q)}(t, \nu) = h^{(1)}(t, x, y, \nu) + C J^{(1)}(y, \nu),\\ 
H^{(2)}(s, u, v, \xi) = ((u+v)^q+C) F^{(q)}(s, \xi)= h^{(2)}(s, u, v, \xi) + C J^{(1)}(v, \xi). 
\end{align}
we have 
\begin{align}
\eqref{equ cp final2} \geq  \beta (\psi-\phi) +o(1) = 2\beta\delta +o(1) >0. 
\end{align}
This leads to a contradiction and concludes the proof.\qed

\subsection{Proof of Theorem \ref{thm:viscosity solution}}\label{sect:viscosity}
As indicated in the main body, we only need to show Theorem \ref{thm:viscosity solution}(ii), i.e., $V$ is a viscosity solution. Also, condition c) in Definition \ref{def visco} is a direct result of Theorem \ref{prop:term} and has been proved in \ref{sect:propterm}. Therefore, in the following, we focus on verifying conditions a) and b) in Definition \ref{def visco}.

\subsubsection{Verifying Condition a)}\label{sec con a}
Condition a) is from the following weak dynamic programming principle. 
\begin{proposition}[Weak Dynamic Programming]\label{pro:DP}
Denote $(\hat{X}_{s}, \hat{Y}_s, \hat{\nu}_s)$ as the state processes $(X_s, Y_s, \nu_s)$ starting from $X_t=x$, ${Y}_t = {y}$, ${\nu}_t = {\nu}$ under the portfolio ${\pi}: = (L_s, M_s)_{t\leq s\leq T}$.
For any stopping time $\tau$ taking values within $(t_{p}, T]$, and $(t, {x}, y, \nu) \in (t_p, T)  \times \mathscr{S}$, we have
\begin{align}
V(t, {x}, y, \nu)\leq \sup_{{\pi} \in \mathcal{A}_t(x, y, \nu)} {E}[V^*(\tau, \hat{X}_\tau, \hat{Y}_\tau, \hat{\nu}_\tau)] \notag
\end{align}
and
\begin{align}
V(t, {x}, y, \nu)\geq \sup_{{\pi} \in \mathcal{A}_t(x, y, \nu)} {E}[V_*(\tau, \hat{X}_\tau, \hat{Y}_\tau, \hat{\nu}_\tau)]. \notag
\end{align}
\end{proposition}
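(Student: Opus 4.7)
The plan is to follow the weak dynamic programming framework of Bouchard and Touzi (2011), adapted to the singular control setting with transaction costs and the liquidation constraint encoded by $\mathscr{S}$. The upper inequality follows by a direct tower-property argument, while the lower inequality requires a partition-of-unity / Lindelöf covering argument that bypasses the measurable selection obstruction.

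For the upper inequality, I would fix any $\pi=(L,M)\in\mathcal{A}_t(x,y,\nu)$ and the associated state process $(\hat X_s,\hat Y_s,\hat\nu_s)$. The tower property gives
$$
\mathbb{E}[U(\hat Z_T)] \;=\; \mathbb{E}\!\left[\mathbb{E}\!\left[U(\hat Z_T)\,\big|\,\mathcal{F}_\tau\right]\right].
$$
By the strong Markov property for $(X,Y,\nu)$ and a standard regular conditional probability argument, the restriction of $\pi$ to $[\tau,T]$ is, almost surely, admissible from the random initial state $(\hat X_\tau,\hat Y_\tau,\hat\nu_\tau)$, hence the inner conditional expectation is bounded by $V(\tau,\hat X_\tau,\hat Y_\tau,\hat\nu_\tau)\le V^*(\tau,\hat X_\tau,\hat Y_\tau,\hat\nu_\tau)$. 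Taking expectation and supremum over $\pi$ yields the first inequality.

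For the lower inequality, fix $\varepsilon>0$ and $\pi\in\mathcal{A}_t(x,y,\nu)$. By definition of $V$ and lower semicontinuity of $V_*$, for every $(t_0,x_0,y_0,\nu_0)\in(t_p,T]\times\mathscr{S}$ I can pick an $\varepsilon$-optimal strategy $\pi^{\varepsilon,(t_0,x_0,y_0,\nu_0)}$ and, using continuity of the state dynamics (Assumption \ref{ass:2}.1) together with a dominated convergence argument based on the growth bound from Assumption \ref{assmpt:1} and the function $F^{(q)}$ used for the class $\mathcal{C}$, find an open neighborhood $B(t_0,x_0,y_0,\nu_0)$ on which the same strategy (conveniently re-indexed in time) delivers expected utility exceeding $V_*(t_0,x_0,y_0,\nu_0)-2\varepsilon$ uniformly. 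Lindelöf gives a countable subcover $\{B_k\}$, which I disjointify into a Borel partition $\{A_k\}$ of $(t_p,T]\times\mathscr{S}$. I then concatenate: follow $\pi$ on $[t,\tau]$, and on the event $\{(\tau,\hat X_\tau,\hat Y_\tau,\hat\nu_\tau)\in A_k\}$ switch to the strategy associated with the $k$-th center. Measurability of the concatenation is automatic because the switch is prescribed by the Borel sets $A_k$. Estimating expected utility, pulling the sum through the expectation, and using that each piece is $2\varepsilon$-optimal uniformly on its cell yields
$$
V(t,x,y,\nu)\;\ge\;\mathbb{E}\!\left[V_*(\tau,\hat X_\tau,\hat Y_\tau,\hat\nu_\tau)\right]-2\varepsilon,
$$
after which I let $\varepsilon\to 0$ and take sup over $\pi$.

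The main obstacle is the admissibility of the concatenated strategy. Because of the liquidation threshold $z=K$ and the solvency region $\mathscr{S}$, I must ensure that the near-optimal strategy chosen for a center point is still admissible when started from a nearby state, and that its trajectory never leaves $\mathscr{S}$ on $[\tau,T]$. I would resolve this by shrinking each neighborhood $B_k$ so that it stays uniformly away from $\{z=K\}$ (handling states close to this boundary separately via the prescribed boundary condition and Theorem \ref{prop:term}), and by exploiting that the controls $(L,M)$ are nondecreasing and enter the wealth process only through additive jumps and absolutely continuous drift, so that small perturbations of the initial data propagate stably. The only other subtle point is a uniform integrability check, which follows from the upper bound $C_1+C_2 z^p$ in Assumption \ref{assmpt:1} together with the finiteness of $\mathbb{E}[(\hat X_s+\hat Y_s)^p F^{(p)}(s,\hat\nu_s)]$ implied by Assumption \ref{ass:2}.3, ensuring that the exchange of expectation and partition sum is justified.
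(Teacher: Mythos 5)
Your strategy mirrors the one the paper actually takes: the paper's proof of Proposition~\ref{pro:DP} is a one-line citation to \cite{dai2022nonconcave}, which in turn sits inside the weak dynamic programming framework of \cite{bouchard2011weak}, so the overall architecture you describe (tower property for the upper bound; an $\varepsilon$-optimal-strategy covering, Lindel\"of subcover, disjointification, and concatenation for the lower bound) is the right one and matches the paper's intent.

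There is, however, one step that would fail as you have written it. For the lower inequality you invoke ``continuity of the state dynamics together with a dominated-convergence argument'' to conclude that the reward functional $(t',x',y',\nu')\mapsto\mathbb{E}\big[U(Z_T^{t',x',y',\nu',\pi})\big]$ is stable under small perturbations of the initial data --- that is, lower semicontinuous, which is the analogue of Assumption (A4)(ii) in Bouchard--Touzi. But the paper only assumes $U$ to be nondecreasing and right-continuous, hence \emph{upper} semicontinuous; the goal-reaching indicator $U(z)=\mathbf{1}_{z\geq\bar z}$ is a leading example. Pathwise continuity of $Z_T$ in the initial data plus dominated convergence then yields upper semicontinuity of the reward, which is the wrong direction for the lower bound, so the neighborhood on which ``the same strategy is $2\varepsilon$-optimal'' need not exist by your argument alone. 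Closing this gap requires an extra ingredient, e.g.\ arguing that for the near-optimal strategies chosen at each center the conditional law of $Z_T$ given non-liquidation is atomless at the (at most countably many) jump points of $U$, or approximating $U$ from below by lower semicontinuous utilities and passing to the limit in the value function. A minor additional slip: you point to Theorem~\ref{prop:term} to handle states near the liquidation threshold $z=K$, but that result is the terminal-time ($t\to T$) asymptotics; the boundary behavior at $z=K$ is Proposition~\ref{prop bound con}.
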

The proof of this proposition is identical to \cite{dai2022nonconcave}, then Condition a) is verified by Corollary 5.6 of \cite{bouchard2011weak}. 

It only remains to verify that $V$ is locally uniformly bounded, which we will prove here. Given Assumption \ref{assmpt:1}, on the one hand, $V(t, {x}, y, \nu)$ cannot exceed $C_1+ C_2 {z^p} F^{(p)}(t, \nu)$, which is the solution without transaction costs and terminal utility $C_1+ C_2 w^p$. On the other hand, for any $(t, x, y, \nu) \in (t_p, T)  \times \mathscr{S}$, the investor can immediately liquidate all stock and hold cash until $T$, that generates a natural locally uniformly lower bound for $V$.  

\subsubsection{Verifying Condition b)}
In this part, we want to prove the boundary condition on $z = K$. More precisely, we have the following result. 
\begin{proposition}\label{prop bound con}
For any $t_{p} \leq t, t_0 \leq T$ and $z = x + (1-\theta_1) y^+ - (1+\theta_2) y^-\geq K$, We have 
\begin{numcases}{}
\lim\limits_{(t, x, y, \nu) \to(t_0, \hat{x}, \hat{y}, \hat{\nu})} V(t, x, y, \nu) = U(K), & \text{when} $U(K)> -\infty$, \text{and} $\hat{z} = K$. \notag\\
\lim\limits_{(t, x, y, \nu) \to(t_0, \hat{x}, \hat{y}, \hat{\nu})} V(t, x, y, \nu) - V_{CRRA}(t, x, y, \nu) = 0, & \text{when} $K= 0$, $U(0) = -\infty$ and $\hat{z} = K$. \notag
\end{numcases}
\end{proposition}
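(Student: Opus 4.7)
The plan is to treat the two cases separately and, within each, establish matching upper and lower bounds. For the first case $U(K)>-\infty$, the lower bound comes from the admissible strategy of immediate full liquidation followed by holding cash: it yields $Z_T\equiv z$ identically, so $V(t,x,y,\nu)\geq U(z)$, and right-continuity of the nondecreasing $U$ at $K$ gives $U(z)\to U(K)$ as $z\to K^+$, hence $\liminf V\geq U(K)$. When $t_0=T$ the full conclusion already follows from Theorem \ref{prop:term} evaluated at $\hat z=K$ (noting $U(K-)=U(K)$ by convention), so I would focus the remainder of the argument on $t_0<T$ and on the upper bound.

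The core is a strategy-uniform tail estimate: for any $q\in(p,1)$ and any $N>0$ there exist constants $\kappa_N\geq 0$ and $R_N\geq 0$, with $R_N\to 0$ as $N\to\infty$ uniformly in the admissible strategy $\pi$, such that
\begin{align*}
\mathbb{E}_t^{x,y,\nu}\bigl[(Z^\pi_T-K)^q\bigr]\;\leq\; e^{\kappa_N(T-t)}(z-K)^q+R_N.
\end{align*}
I would prove this by applying Ito's formula to the penalized test function $\phi_\varepsilon(z):=(z-K+\varepsilon)^q$, $\varepsilon>0$, localized by $\tau_N:=\inf\{s\geq t:|\nu_s-\hat\nu|\geq N\}$. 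Transaction jumps of $Z^\pi$ satisfy $\Delta Z_s\leq 0$ and $\phi_\varepsilon$ is increasing, so the induced jumps of $\phi_\varepsilon(Z_s)$ are non-positive. On the continuous part, the drift of $\phi_\varepsilon(Z_s)$ is a concave quadratic in the current investment $Y_s$ whose unconstrained pointwise supremum equals $\frac{q\,\eta(\nu_s)^2}{2(1-q)\sigma(\nu_s)^2}\phi_\varepsilon(Z_s)$, itself bounded by $\kappa_N\phi_\varepsilon(Z_s)$ on $[t,\tau_N]$. Hence $e^{-\kappa_N(s-t)}\phi_\varepsilon(Z_{s\wedge\tau_N})$ is a nonnegative supermartingale, and sending $\varepsilon\downarrow 0$ via Fatou delivers the inequality on $\{\tau_N>T\}$. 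The contribution from $\{\tau_N\leq T\}$ is absorbed into $R_N$ by Holder with exponents $r/q>1$ and its conjugate (for any $r\in(q,1)$), together with the frictionless CRRA bound $\mathbb{E}[(Z^\pi_T)^r]\leq (x+y)^r F^{(r)}(t,\nu)$ (strategy-independent) and the Gaussian-type tail on $\tau_N$ from Assumption \ref{ass:2}.

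With the key estimate, set $\widetilde U(w):=U(w)-U(K)\geq 0$, so $V-U(K)=\sup_\pi\mathbb{E}[\widetilde U(Z^\pi_T)]$. Markov's inequality yields $\sup_\pi\mathbb{P}(Z^\pi_T>K+\delta)\leq\delta^{-q}(e^{\kappa_N(T-t)}(z-K)^q+R_N)$, and combining $(a+K)^p\leq a^p+K^p$ (valid since $p<1$) with Holder (exponents $q/p>1$ and conjugate) gives $\sup_\pi\mathbb{E}[(Z^\pi_T)^p\mathbf{1}_{Z^\pi_T>K+\delta}]\leq C_\delta(e^{\kappa_N(T-t)}(z-K)^q+R_N)$. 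Since $0\leq\widetilde U(w)\leq C_1'+C_2 w^p$ by Assumption \ref{assmpt:1}, the decomposition
\begin{align*}
\mathbb{E}[\widetilde U(Z^\pi_T)]\;\leq\;\widetilde U(K+\delta)+C_1'\,\mathbb{P}(Z^\pi_T>K+\delta)+C_2\,\mathbb{E}[(Z^\pi_T)^p\mathbf{1}_{Z^\pi_T>K+\delta}],
\end{align*}
together with taking $\sup_\pi$, letting $(t,x,y,\nu)\to(t_0,\hat x,\hat y,\hat\nu)$, then $N\to\infty$, and finally $\delta\downarrow 0$ using right-continuity of $U$ at $K$ (so $\widetilde U(K+\delta)\to 0$), yields $\limsup V\leq U(K)$. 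For the second case $K=0$, $U(0)=-\infty$, the lower bound $V\geq V_{CRRA}$ is immediate from $U\geq U_{CRRA}$ in Assumption \ref{assmpt:1}(2); for the upper bound I would rerun the same supermartingale argument applied to the nonnegative difference $U-U_{CRRA}$, which vanishes on $(0,\epsilon_w)$ and is dominated by $C_1''+C_2''w^p$ elsewhere, to conclude $\sup_\pi\mathbb{E}[(U-U_{CRRA})(Z^\pi_T)]\to 0$ as $z\to 0^+$.

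The principal obstacle is uniform-in-strategy control: since the position $y$ is unconstrained even when $z$ is just above $K$, the investor can take arbitrarily leveraged positions and hence induce arbitrarily large instantaneous variance of $Z$. The concavity of $\phi_\varepsilon$ is precisely what neutralizes this---the pointwise supremum over $Y_s$ of the drift of $\phi_\varepsilon(Z_s)$ is finite and proportional to $\phi_\varepsilon(Z_s)$ itself, so the resulting supermartingale inequality is blind to the investor's choice of leverage. Secondary technical issues, namely the unboundedness of $\nu$ and the non-smoothness of $(z-K)^q$ at $z=K$, are handled by the two-parameter $(N,\varepsilon)$ localization described above.
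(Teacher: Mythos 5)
Your proof is correct, but it takes a genuinely different route from the paper's, which is shorter and delegates the hard work elsewhere. The paper's argument is a two-line comparison: since transacting is a losing proposition, the value function under frictions is dominated by the frictionless value function evaluated at the liquidated wealth, i.e.\ $V(t,x,y,\nu)\le \hat V(t,z,\nu)$ where $\hat V$ denotes the frictionless value; combining this with the easy lower bound (immediate liquidation) and invoking the boundary result of \cite{dai2022nonconcave} for the frictionless non-concave problem then yields both cases at once. Your argument, by contrast, is self-contained: it reproves the boundary degeneracy directly in the singular-control model via the strategy-uniform supermartingale estimate $\mathbb{E}\bigl[(Z^\pi_T-K)^q\bigr]\le e^{\kappa_N(T-t)}(z-K)^q+R_N$, with the concavity of $\phi_\varepsilon$ absorbing the unbounded leverage and the $(N,\varepsilon)$ localization handling the unboundedness of $\nu$ and the non-smoothness at $z=K$. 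What your approach buys is independence from the frictionless theory of \cite{dai2022nonconcave}, and the supermartingale bound you build is incidentally close in spirit to the paper's Lemma~\ref{lem:hitting prob} (which bounds hitting probabilities via the frictionless CRRA envelope), though you obtain the moment control more directly, without passing through the closed-form frictionless solution. The price is a substantially longer argument with more moving parts (localization, Fatou, H\"older, and a careful ordering of limits in $(t,x,y,\nu)$, $N$, and $\delta$), all of which you handle correctly. One small note: you observe that $t_0=T$ is already covered by Theorem~\ref{prop:term} with $\hat z=K$ (using the convention $U(K-)=U(K)$); the paper's proof, relying on the uniform frictionless bound, treats $t_0\le T$ without needing this case split.
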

\begin{proof}[Proof of Proposition \ref{prop bound con}]

Case 1: $U(K) > -\infty$. 

On the one hand, it is easy to find that 
\begin{align}
\liminf\limits_{(t, x, y, \nu) \to(t_0, \hat{x}, \hat{y}, \hat{\nu})} V(t, x, y, \nu) \geq U(K).\notag
\end{align}
On the other hand, denote by $\hat{V}(t, x, y, \nu)$ the value function for given wealth $z$ at time $t$ and without transaction costs, we then have
\begin{align}
\hat{V}(t, x+(1-\theta_1)y^+- (1+\theta_2)y^-, \nu) \geq V(t, x, y, \nu). \notag
\end{align}
According to the result for non-concave utility maximization without transaction costs (\cite{dai2022nonconcave}), we have
\begin{align}
\limsup\limits_{(t, x, y, \nu) \to(t_0, \hat{x}, \hat{y}, \hat{\nu})} V(t, x, y, \nu) \leq \limsup\limits_{(t, x, y, \nu) \to(t_0, \hat{x}, \hat{y}, \hat{\nu})} \hat{V}(t, x+(1-\theta_1)y^+- (1+\theta_2)y^-, \nu) = K. \notag
\end{align}
Then we have proved this proposition for Case 1.

Case 2: $K = 0$ and $U(K) = -\infty$.

On the one hand, from Assumption \ref{assmpt:1}, we have $U(w) \geq A_1 \frac{w^{\widetilde{p}}-1}{\widetilde{p}}+$ $A_2$ for all $w>0$. Therefore, 
$$ V(t, x, y, \nu) \geq V_{CRRA}(t, x, y, \nu).  $$
On the other hand, still from the result for non-concave utility maximization without transaction costs (\cite{dai2022nonconcave}), we have
\begin{align}
\limsup \limits_{(t, x, y, \nu) \to(t_0, \hat{x}, \hat{y}, \hat{\nu})} V(t, x, y, \nu) - V_{CRRA}(t, x, y, \nu) \leq  0. 
\end{align}
Then we have proved this proposition for Case 2.
\end{proof}

\subsection{Gaussian Mean Return Model}\label{sec GMR ass2.2}
We verify here that the Gaussian mean return model satisfies Assumption \ref{ass:2}.2. 

We observe from the dynamics of $\nu$ that 
$$d \nu_s = \kappa (\bar{\nu}-\nu_s) dt + \zeta d \mathcal{B}^x_s \leq \kappa (|\bar{\nu}|+ |\nu| + N) dt + \zeta d \mathcal{B}^x_s, \ \text{when $s\leq \tau_{N}$},  $$
where $\tau_N$ is  the first hitting time of $\nu_s  = \nu +N$. That is, when $s\leq \tau_{N}$, 
\begin{align}
\nu_s -\nu\leq \kappa (|\bar{\nu}|+ |\nu| + N) (s-t) + \zeta(\mathcal{B}^x_s -\mathcal{B}^x_t ) \leq \kappa (|\bar{\nu}|+ |\nu| + N) (T-t) + \zeta(\mathcal{B}^x_s -\mathcal{B}^x_t ).  
\end{align}
Without loss of generality, we can set $T-t_{\nu, \bar{N}}$ sufficiently small, such that $\kappa (|\bar{\nu}|+ |\nu| + N) (T-t) \leq \frac{N}{2}$, $\forall t \in [t_{\nu, \bar{N}}, T]$. 
In this case, 
\begin{align}
\mathbb{P} \left( \max \limits_{t\leq s \leq T} \nu_s -\nu \geq N \bigg| \nu_t = \nu \right) \leq \mathbb{P} \left( \max \limits_{t\leq s \leq T} \zeta(\mathcal{B}^x_s -\mathcal{B}^x_t ) \geq \frac{N}{2} \bigg| \nu_t = \nu \right) = 2 \Phi\bigg( -\frac{N/2}{\zeta \sqrt{T-t}} \bigg).  
\end{align}
From \eqref{equ esti cdfn}, 
\begin{align}
\Phi\bigg( - \frac{N/2}{\zeta \sqrt{T-t}} \bigg) \leq C \frac{\sqrt{T-t}}{N} e^{-d \frac{N^2}{T-t}} \leq C \frac{\sqrt{T}}{N} e^{-d \frac{N^2}{T-t}}
\end{align}
for some constants $C, d >0$.  
Then we have our estimate for $\max \limits_{t\leq s \leq T} \nu_s -\nu \geq N$. The estimate for  $\min \limits_{t\leq s \leq T} \nu_s -\nu \leq -N$ is similar.

\subsection{Numerical Procedure}\label{appendix:numScheme}
Define the change of variable $z=x+(1-\theta_1)y^+-(1+\theta_2)y^-$ as in \eqref{eqn:z}, $v=\sqrt{T-t}\cdot y$, $W(t,z,v,\nu)=V(t,x,y,\nu)$. Under this transformation, the HJB equation \eqref{equ PDE main} becomes 
\begin{align*}
	\min\Bigg\{-W_t-\tilde{\mathcal{L}}_{-\theta_1}W,~W_v,~(\theta_1+\theta_2)W_z-\sqrt{T-t}\cdot W_v\Bigg\}&=0,~~\text{for }v\ge 0,\\
	\min\Bigg\{-W_t-\tilde{\mathcal{L}}_{\theta_2}W,~(\theta_1+\theta_2)W_z+\sqrt{T-t}\cdot W_v,~-W_v\Bigg\}&=0,~~\text{for }v<0,
\end{align*}
where
\begin{align*}
\tilde{\mathcal{L}}_{\theta} W &= \frac{1}{2}\sigma^2v^2\left(W_{vv}+\frac{2(1+\theta)}{\sqrt{T-t}}W_{zv}+\frac{(1+\theta)^2}{T-t}W_{zz}\right)+ \frac{1}{2}\zeta^2 W_{\nu\nu}+\rho\sigma\zeta v\frac{1+\theta}{\sqrt{T-t}}W_{z\nu}\\
&\quad +\rho\sigma\zeta v W_{v\nu} + (r+\sigma\nu)v\frac{1+\theta}{\sqrt{T-t}}W_z+\left(r+\sigma \nu-\frac{1}{2(T-t)}\right)vW_v+\kappa (\bar{\nu}-\nu)W_{\nu}.
\end{align*}
Note that in the case of the geometric Brownian motion model, all derivatives with respect to $\nu$ vanish in the above equation, the operator $\tilde{\mathcal{L}}_\theta$ reduces to
\begin{align*}
\tilde{\mathcal{L}}_{\theta}W = \frac{1}{2}\sigma^2 v^2 \left( W_{vv}+\frac{2(1+\theta)}{\sqrt{T-t}} W_{vz}+\frac{(1+\theta)^2}{T-t} W_{zz}\right)-\left(\frac{1}{2(T-t)}-\eta\right) v W_v +\eta(1+\theta) \frac{v}{\sqrt{T-t}} W_z,
\end{align*}
and the problem dimension reduces to two.

We then solve the above variational inequalities numerically via the penalty method (c.f. \cite{dai2010penalty}). The corresponding penalty formulation is
\begin{align*}
	W_t+\tilde{\mathcal{L}}_{-\theta_1}W+\lambda (-W_v)^++\lambda \left(\sqrt{T-t}\cdot W_v-(\theta_1+\theta_2)W_z\right)^+&=0,~~\text{for }v\ge 0,\\
	W_t+\tilde{\mathcal{L}}_{\theta_2}W+\lambda (W_v)^++\lambda \left(-\sqrt{T-t}\cdot W_v-(\theta_1+\theta_2)W_z\right)^+&=0,~~\text{for }v< 0,
\end{align*}
where the penalty constant $\lambda>0$ is a large number. The nonlinear terms $\left(\sqrt{T-t}\cdot W_v-(\theta_1+\theta_2)W_z\right)^+$ and $\left(-\sqrt{T-t}\cdot W_v-(\theta_1+\theta_2)W_z\right)^+$ are linearized using the non-smooth Newton iteration (c.f. \cite{forsyth2002quadratic}), and the linearized equations are solved using the implicit finite-difference scheme.
The convergence of the penalty scheme for the above HJB quasi variational inequalities follows from a similar argument as \cite{dai2010penalty}, which also tackles a utility maximization problem with proportional transaction costs. 
Following \cite{dai2010penalty}, we first assume the finite difference scheme generates 
an $M$-matrix. Then, we can prove the stability of the numerical scheme as Proposition 5.1 of \cite{dai2010penalty}, with which the convergence of the value function as $\lambda \to \infty$ can be guaranteed. 
The convergence of the nonlinear iteration from the penalty term can also be proved as Proposition 5.3 of \cite{dai2010penalty} due to the similar monotonicity in the iteration process.

For the boundary conditions, in the case of the goal-reaching problem, we set $W(t,0,v) = 0$ due to bankruptcy, and $W(t,1,v) = 1$ since the goal is reached by liquidating the whole risky asset position. As $|v|\to \infty$, we impose the boundary condition $W(t,z,v) = z$.
If short-selling of risky assets is prohibited, then only the equation in the region $v\ge 0$ remains, and the buy strategy is imposed on $v=0$. In the case of aspiration utility and S-shaped utility problems, when $z$ is very large, the problem is asymptotically a classic Merton optimal investment problem with proportional transaction costs (up to a shifting and scaling in the utility function). Therefore, we set a Dirichlet boundary condition at a large value of $z$ so that $W$ equals the classic Merton problem with transaction costs up to the same shifting and scaling.

\subsection{Additional Numerical Results}
\subsubsection{Goal-Reaching Problem with Short-selling Prohibited}
We consider the goal-reaching problem with a positive risk premium, but with the additional constraint that short-selling the risky asset is prohibited.  The action regions illustrated in Figure \ref{fig: region} resemble the regions within $y>0$ in the right panels of Figure \ref{fig goal reaching etas}. Therefore, adding a short-selling constraint does not change our findings qualitatively.

\begin{figure}[htbp]
	\begin{center}
		\includegraphics[width=0.48\textwidth]{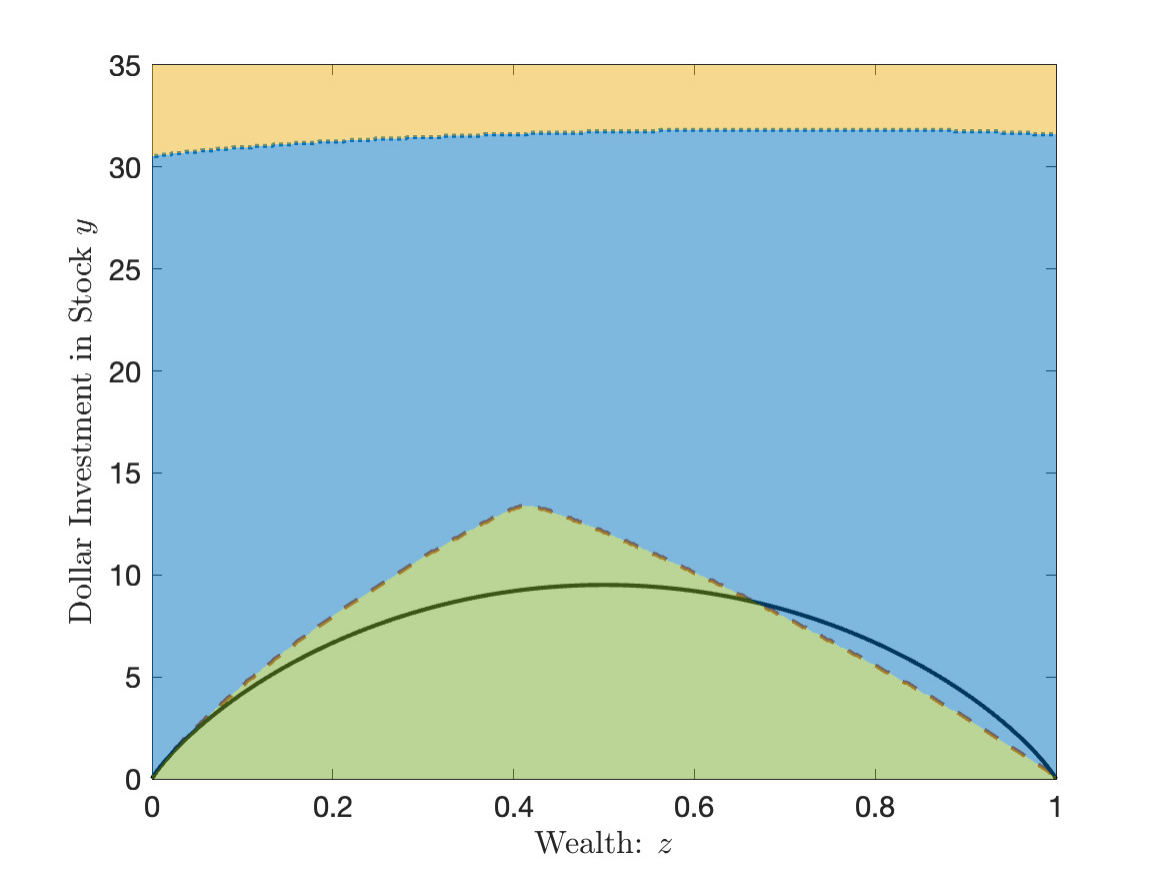}
		\caption{\footnotesize { Action regions for the goal-reaching problem with short-selling prohibited, with time to maturity $T-t=0.02$. Yellow: sell region; Blue: no-trading region; Green: buy region; Solid line: target position without transaction costs. Parameters: $\theta_1=\theta_2=10^{-2},\sigma=0.3,\eta=0.04.$
		}
		}
		\label{fig: region}
	\end{center}
\end{figure}

\subsubsection{The S-Shaped Utility}
We present the results for the S-shaped utility \eqref{eqn:Sutil}, with parameters $\lambda=2.25,p=0.5,z_0=1$, and with short-selling allowed. Due to the similarity with the above results, we only present the action region at $T-t=0.01$ in Figure \ref{fig:SS}. We see that the shape of the action region resembles that of the aspiration utility, and it can still be optimal to take large negative leverage despite the positive risk premium. The only major difference is that when $z$ is close to 0, the investor does not actively reduce the risky position to avoid bankruptcy, as in the aspiration utility. The reason is that, unlike the aspiration utility where the investor is risk-averse for $z<z_0$, here the investor is risk-seeking, and therefore she would gamble for a smaller loss rather than trying to reduce the stock position to avoid bankruptcy.

\begin{figure}[ht!]
\centering
\includegraphics[width=0.48\textwidth]{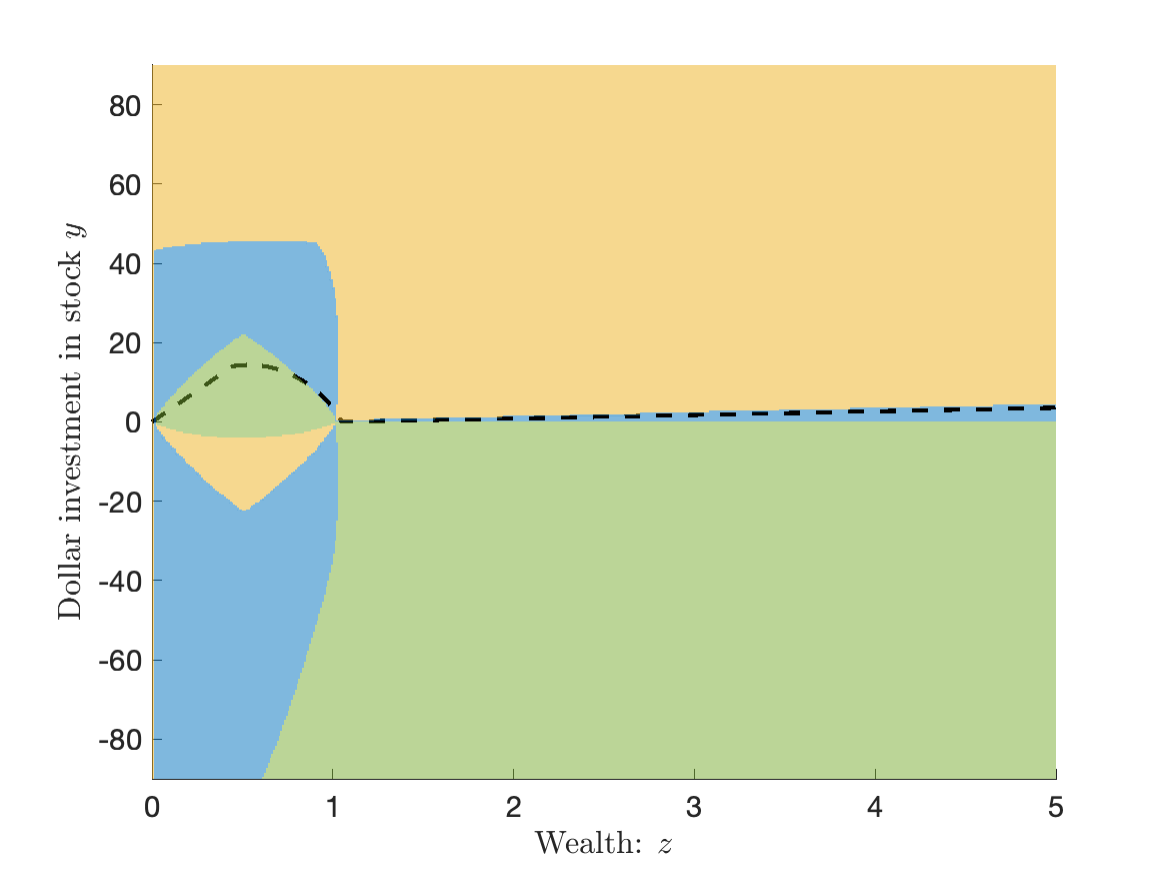}
\caption{The action regions for the S-shaped utility problem without short-selling constraint. Yellow: sell region; Green: buy region; Blue: no-trading region. Dashed line: target position without transaction cost. $T-t=0.01$. Parameters: $\theta_1=\theta_2=10^{-3},\sigma=0.3,\eta=0.04.$}\label{fig:SS}
\end{figure}

\subsubsection{Effect of Jump Size of the Aspiration Utility}\label{appendix:aspiration_jump}
As discussed in Section \ref{subsect:asp}, the action regions of aspiration utility show the goal-reaching incentive in the region where $z$ is below $\bar{z}=1$ but away from 0, so as to reach $\bar{z}$ for the sudden boost in the utility. On the other hand, the risk-averse incentive due to the CRRA-type utility in $z<\bar{z}$ reduces the leverage when $z$ is sufficiently close to zero to preserve wealth and avoid bankruptcy. Intuitively, for an aspiration utility with a larger jump size at $\bar{z}$, the goal-reaching incentive should be stronger, and therefore, the action regions should be closer to those in the goal-reaching case. In the following, we illustrate this effect. 

To this end, we take the same parameter for the aspiration utility as in Section \ref{subsect:asp} with $c_1=0$ (with the jump size of 1 at $\bar{z}$) and consider an alternative case with $c_1=1$ (with the jump size of 2 at $\bar{z}$). The results in Figure \ref{fig:aspiration_jump} confirm the above intuition. Indeed, compared to the smaller jump-size case (left panel), the no-trading region in the larger jump-size case (right panel) expands and is closer to the goal-reaching case (top right panel of Figure \ref{fig goal reaching etas}). 

\begin{figure}[ht!]
\centering
\includegraphics[width=0.48\textwidth]{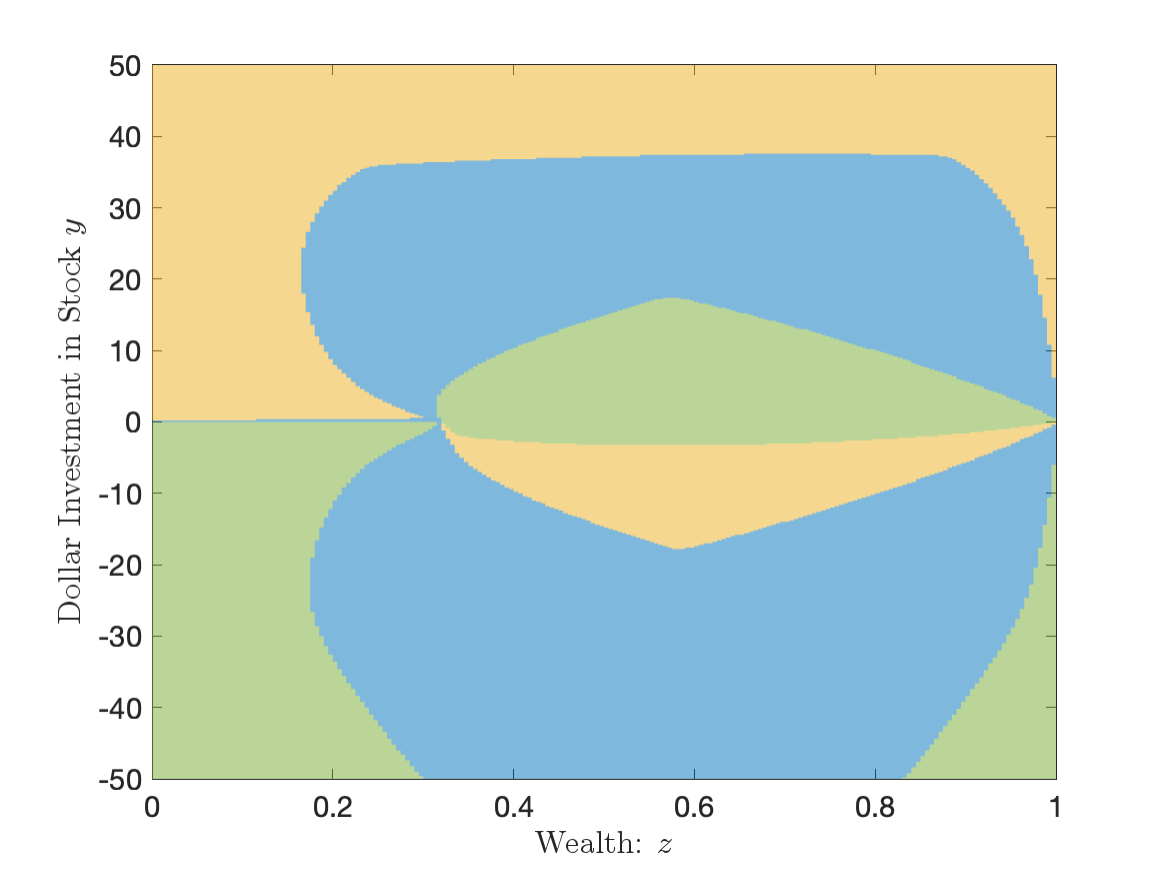}
\includegraphics[width=0.48\textwidth]{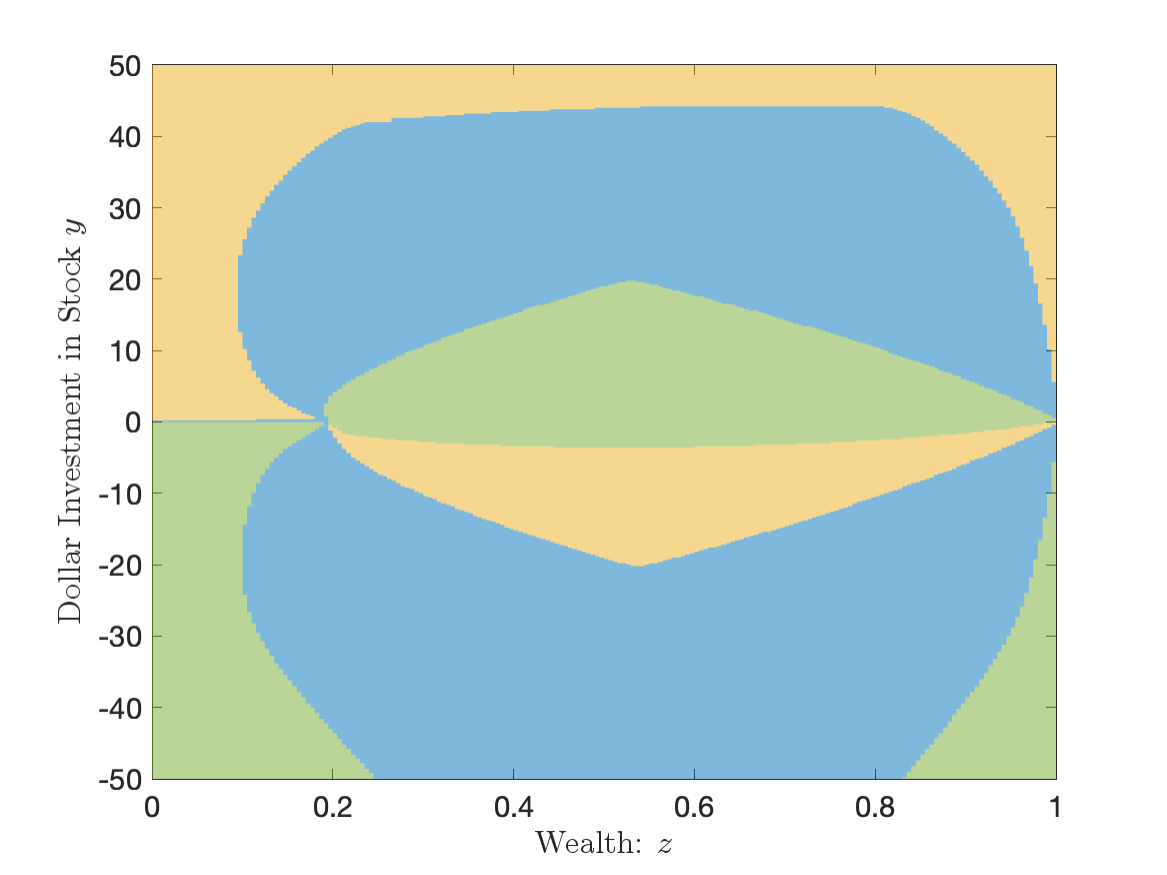}
\caption{The action regions of aspiration utility with $c_1=0$ (Left Panel) and $c_1=1$ (Right Panel). Yellow: sell region; Green: buy region; Blue: no-trading region. $T-t=0.01$. Parameters: $\theta_1=\theta_2=10^{-3},\sigma=0.3,\eta=0.04,p=0.5,c_2=1.5.$}\label{fig:aspiration_jump}
\end{figure}

\subsubsection{The Aspiration Utility without Transaction Costs}\label{appendix:kink}
As shown by the dashed lines in Figure \ref{fig aspi strategy}, even without transaction costs, the optimal target position under the aspiration utility exhibits an interesting shape, where the investment in stock peaks around $z=0.55$ and reduces to 0 as $z$ increases to $\bar{z}=1$. While this observation is consistent with that in Figure 1(f) in \cite{dai2022nonconcave}, we provide some intuitions in the following.

For illustration, we take an aspiration utility with $p=0.5$ and $c_2=0.5$, and take $c_1=1$, 1.2, and 2. We take $\theta_1=\theta_2=0$, $\bar{z}=1$, and $T-t=0.01$. The left panel in Figure \ref{fig:aspiC1} plots the utility functions, where for $c_1=1$, the utility function is continuous in $z$ but its first derivative jump at $z=1$, and for $c_1=1.2$ and 2, the utility function jumps at $z=1$, with a larger jump size under for $c_1=2$.

\begin{figure}[htbp!]
\includegraphics[width=0.45\textwidth]{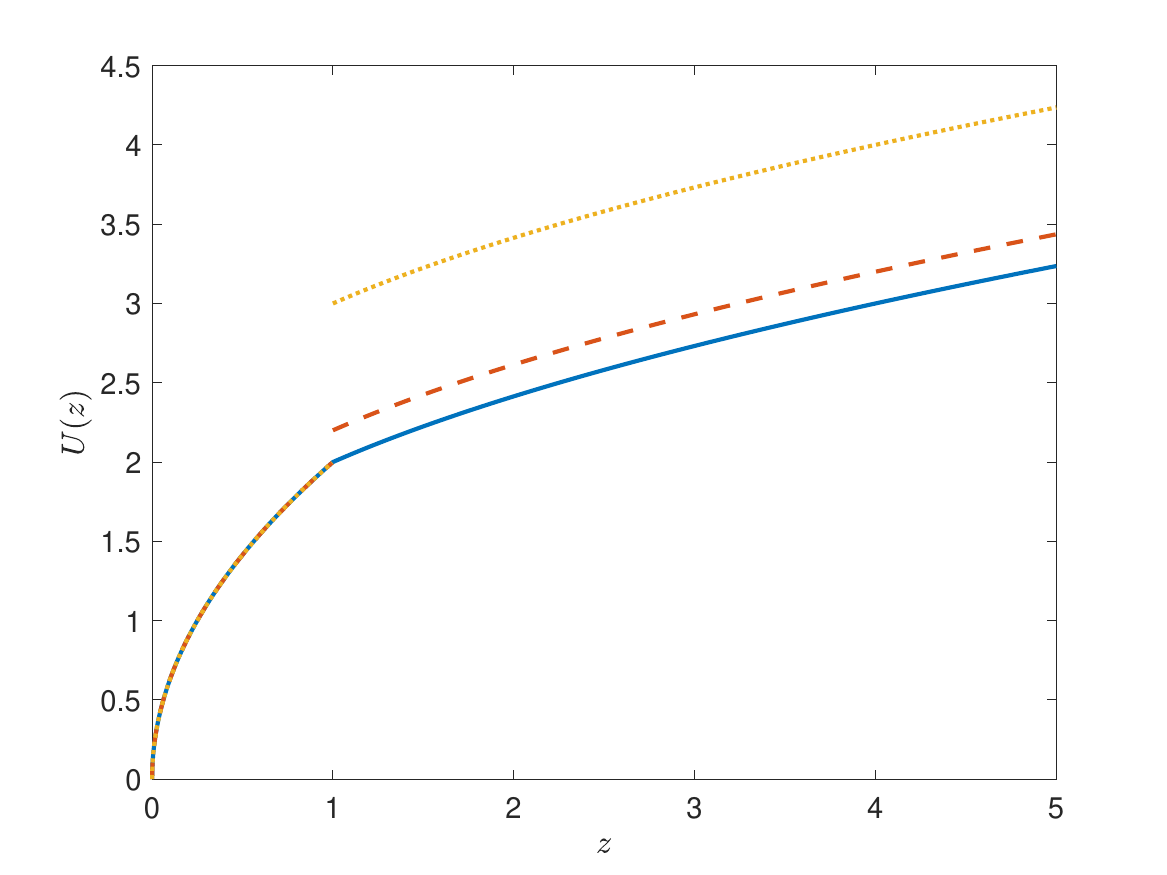}
\includegraphics[width=0.45\textwidth]{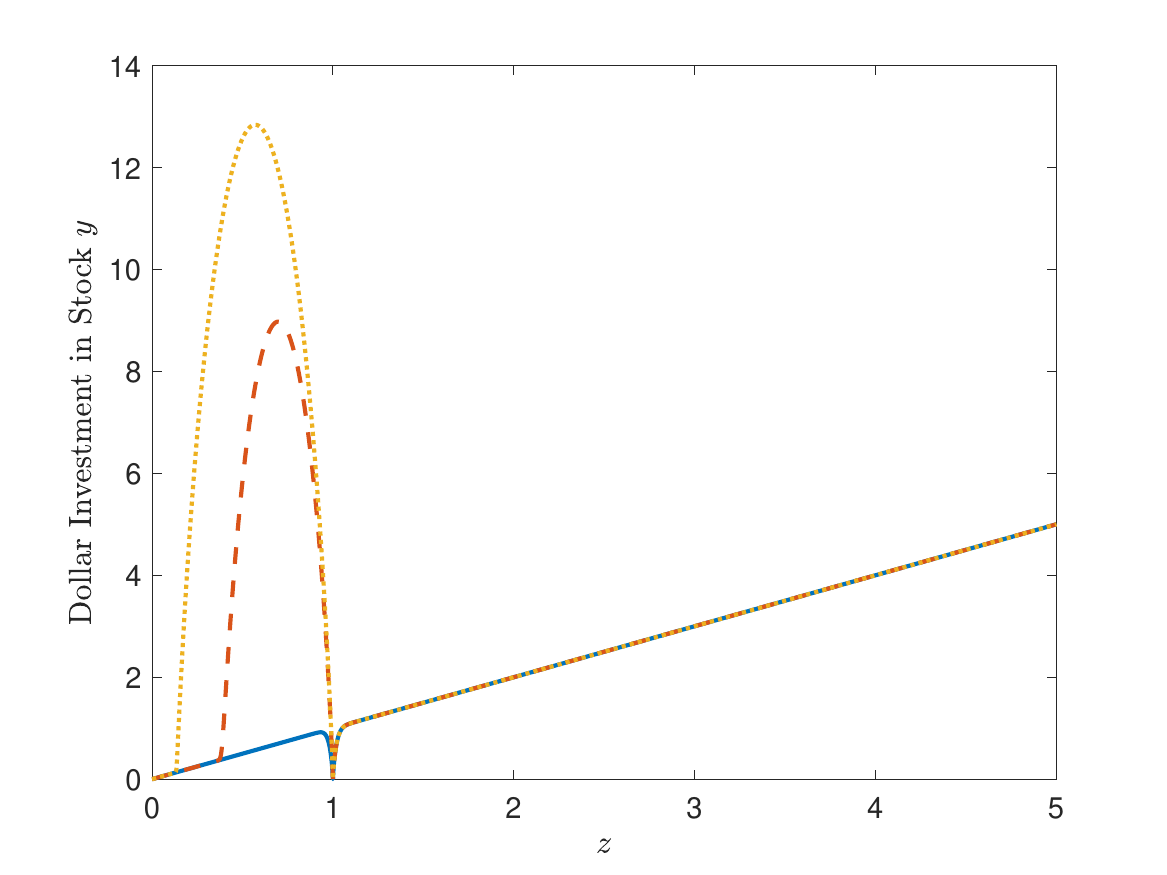}
\caption{\textbf{Left:} Aspiration utility functions; \textbf{Right:} the optimal risky asset $y$ under the aspiration utility without transaction costs. Parameters for the aspiration utility: $p=0.5$, $c_2=0.5$, $c_1=1$ (solid line), 1.2 (dashed line), and 2 (dotted line). Parameters: $\theta_1=\theta_2=0,T-t=0.01$.}\label{fig:aspiC1}
\end{figure}

The optimal investments in stock $y$, as shown in the right panel, share a drop in $y$ to 0 around $z=1$. Intuitively, this is due to the jump of the first derivative of the utility function. On the other hand, the local increase in the optimal $y$ below $z=1$ but away from $z=0$ is positively related to the jump size in the utility at $z=1$. Intuitively, this is due to betting for $z$ to reach 1 for the sudden boost in the utility. However, as $z$ approaches 0, the optimal $y$ reduces to 0. Indeed, in this case, the investor reduces the leverage to preserve wealth and avoid bankruptcy.

\subsubsection{Verification of the Terminal Condition}\label{appendix:terminal}
In this section, we numerically verify the terminal condition \eqref{equ: tercon} for various problems studied in Section \ref{sect:numerical}. To illustrate, we plot the left-hand side difference in \eqref{equ: tercon} against a range of the wealth $z$ while fixing the dollar investment in stock at a certain level $y$, for various time to maturity $T-t$. The terminal condition \eqref{equ: tercon} predicts that the difference should converge to 0 in a pointwise manner as $T-t$ goes to 0.

\subsubsection*{Goal-Reaching Problem}
We verify the terminal condition \eqref{equ: tercon} for the goal-reaching problem with short-selling constraint in Figure \ref{fig com asymp bd cond}, without short-selling constraint in Figure \ref{fig com asymp bd cond pm}, and under the Gaussian mean return model in Figure \ref{fig term veri nu}. Since the goal-reaching utility jumps at $z = 1$, the difference jumps from 1 to 0 at $z = 1-$. This figure confirms that the difference converges to 0 in a pointwise manner. However, such convergence is not uniform, as the maximum difference is always 1. 

\begin{figure}[hptb!]
\centering
\includegraphics[width=0.45\textwidth]{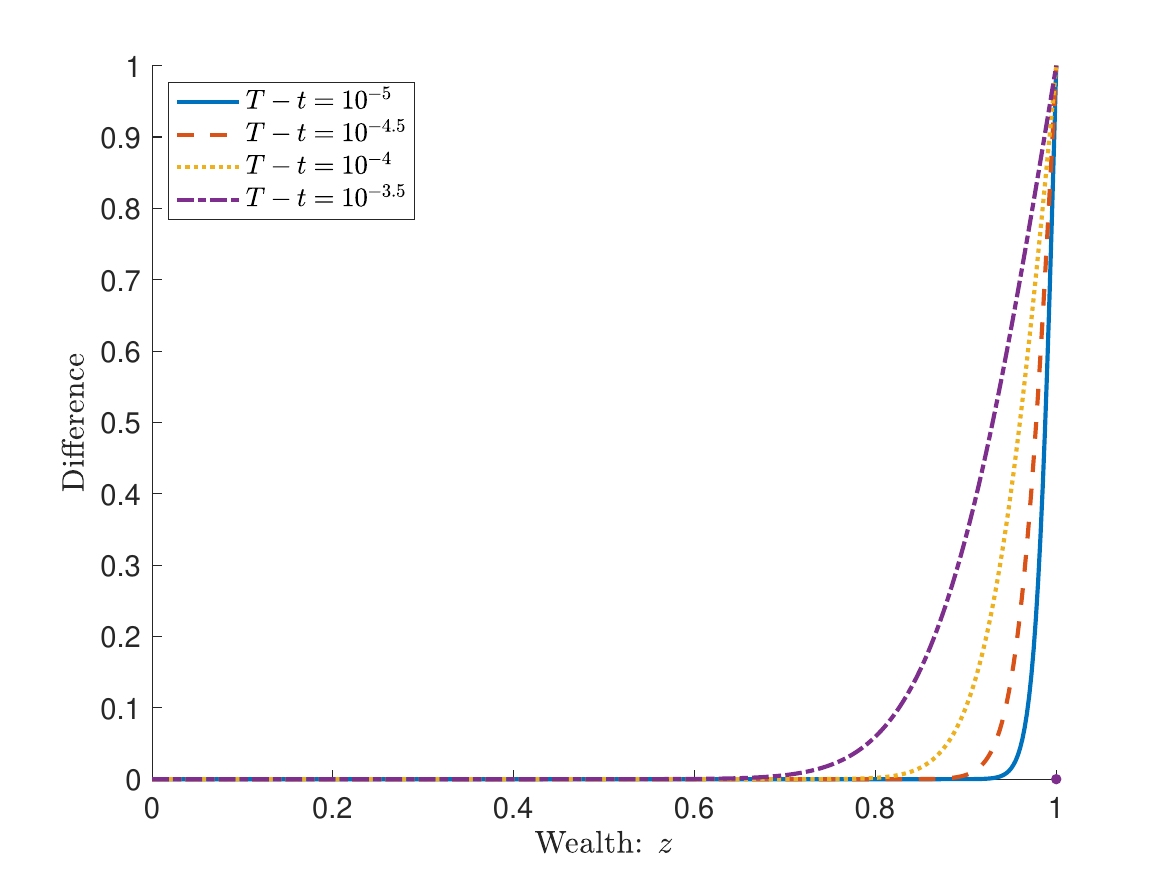}
\caption{Difference between value function and asymptotic expression for goal-reaching problem with short-selling constraint and $y=20$. Parameters: $\theta_1=\theta_2=10^{-2},\sigma=0.3,\eta=0.04.$
} \label{fig com asymp bd cond}
\includegraphics[width=0.45\textwidth]{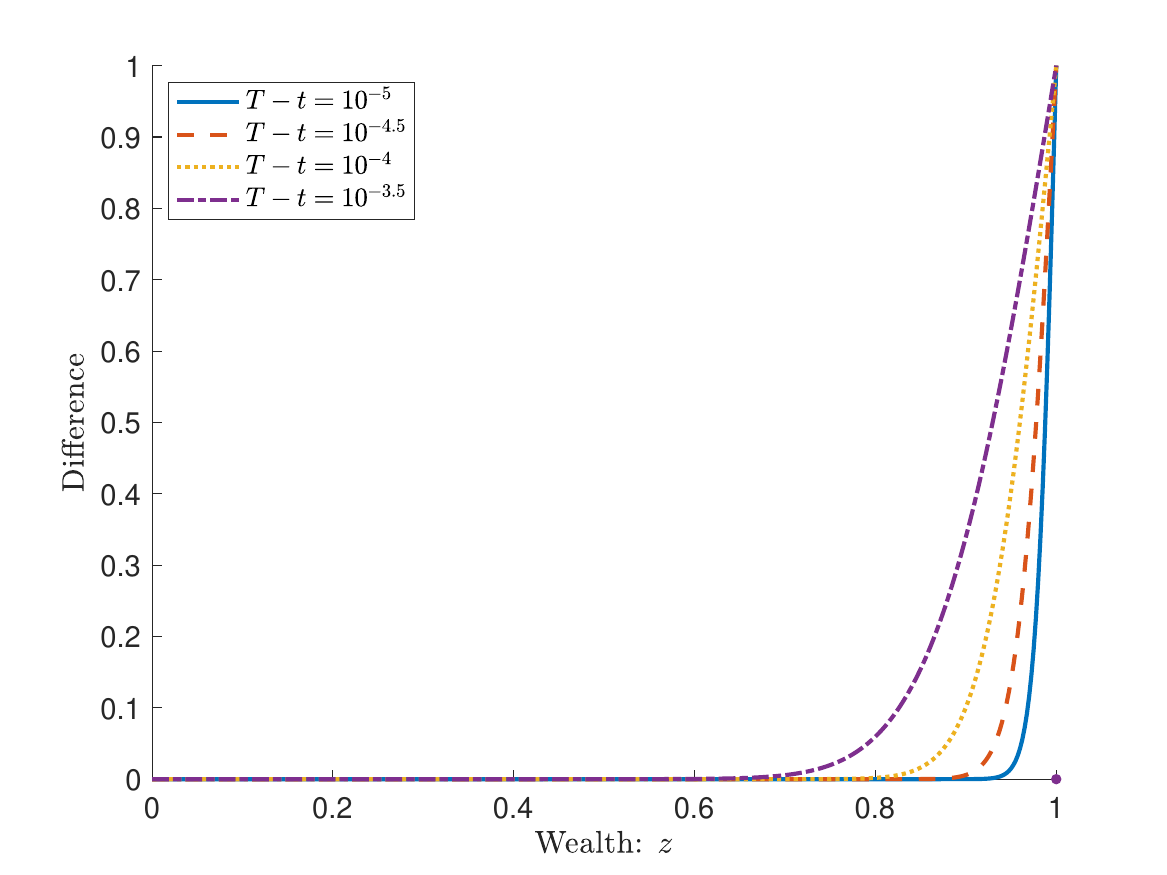}
\includegraphics[width=0.45\textwidth]{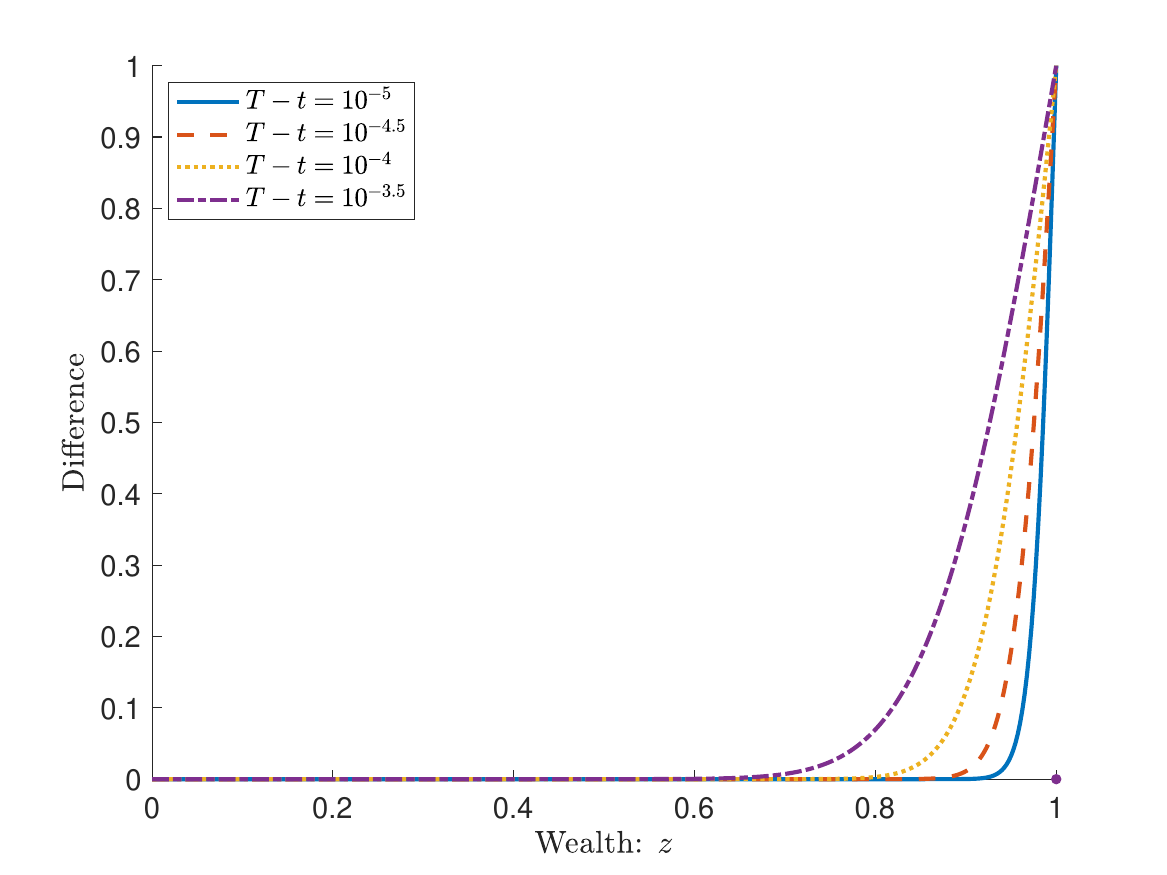}
\caption{Difference between the value function and asymptotic expression for goal-reaching problem without short-selling constraint at $y=20$ (left figure) and $y=-20$ (right figure). Parameters: $\theta_1=\theta_2=10^{-2},\sigma=0.3,\eta=0.04$.
} \label{fig com asymp bd cond pm}
\includegraphics[width=0.45\textwidth]{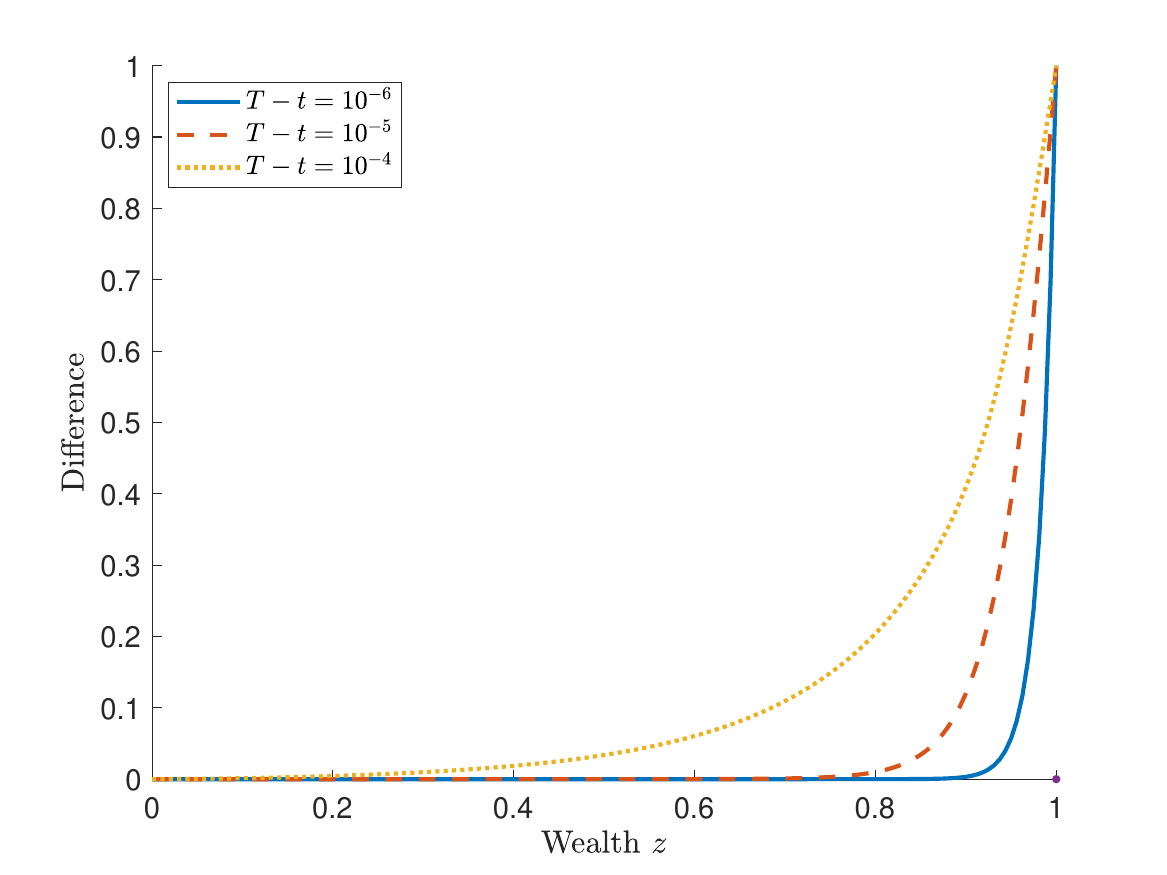}
\includegraphics[width=0.45\textwidth]{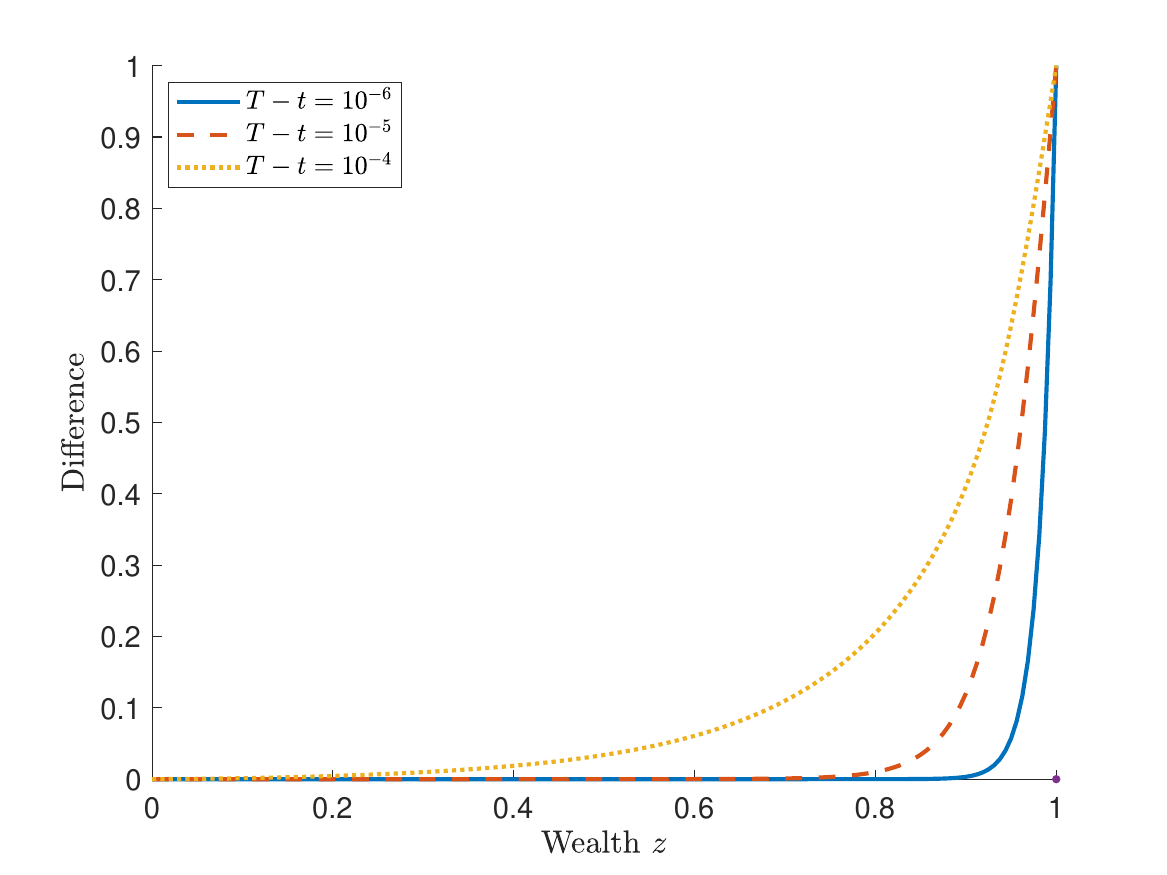}
\caption{Difference between the value function and asymptotic expression for the goal-reaching problem at $y=1$, for $\nu=-0.1333$ (left figure) and $\nu=0.1333$ (right figure). Parameters: $\theta_1=\theta_2=10^{-3},\sigma=0.3,\eta=0.04.$
} \label{fig term veri nu}  
\end{figure}

\subsubsection*{Other Utility}
We verify the terminal condition \eqref{equ: tercon} for the aspiration utility and S-shaped utility in Figure \ref{fig term veri} and \ref{fig term veri S-shaped}, respectively. 

\begin{figure}[hptb!]
\centering
\includegraphics[width=0.45\textwidth]{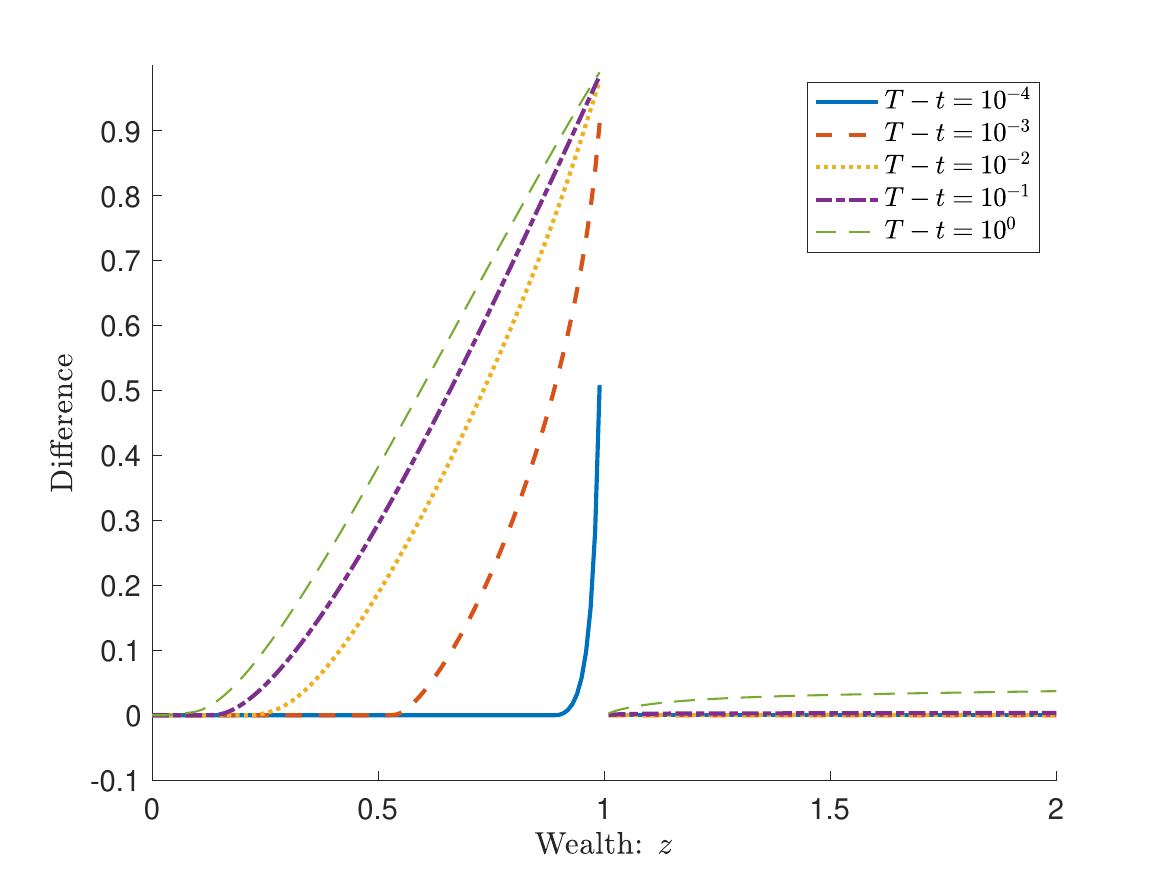}
\includegraphics[width=0.45\textwidth]{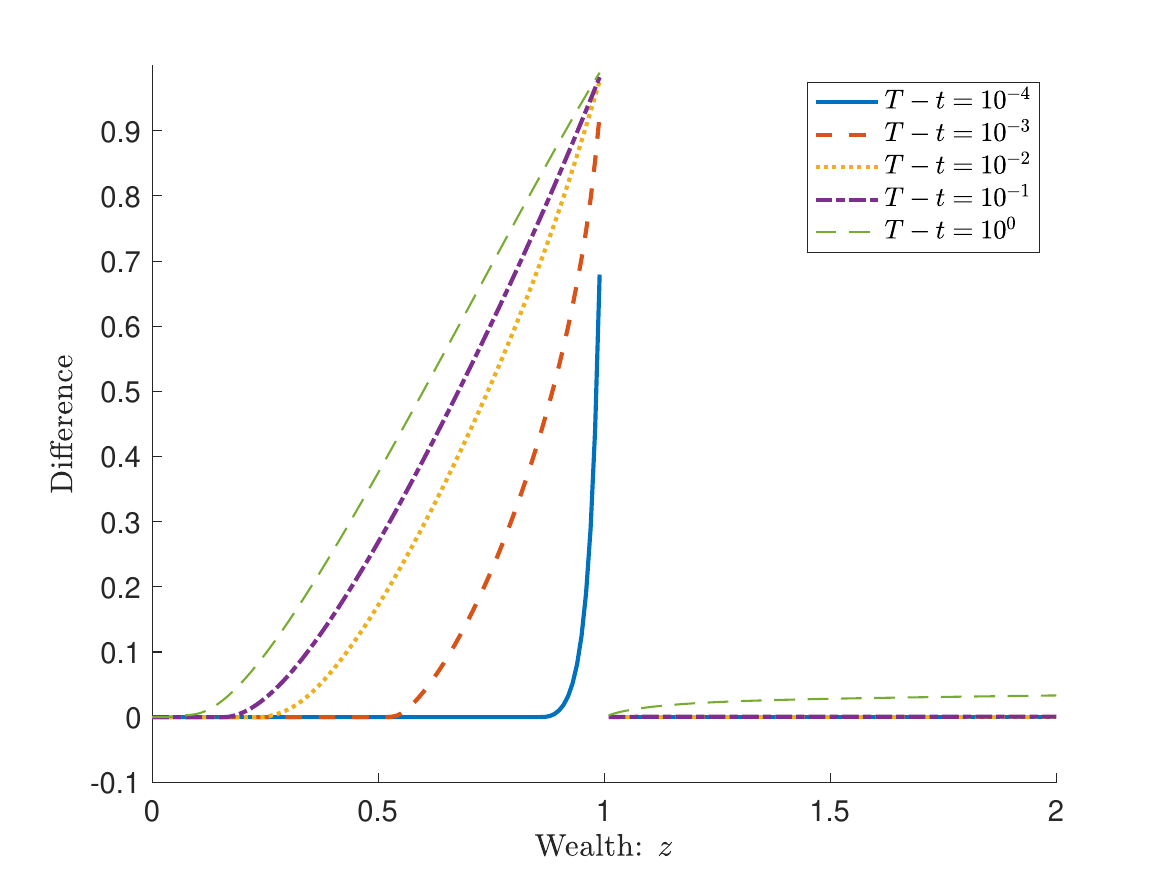}
\caption{Difference between the value function and asymptotic expression for aspiration utility at $y=5$ (left figure) and $y=-5$ (right figure). Parameters: $\theta_1=\theta_2=10^{-3},\sigma=0.3,\eta=0.04.$
} \label{fig term veri}  
\includegraphics[width=0.45\textwidth]{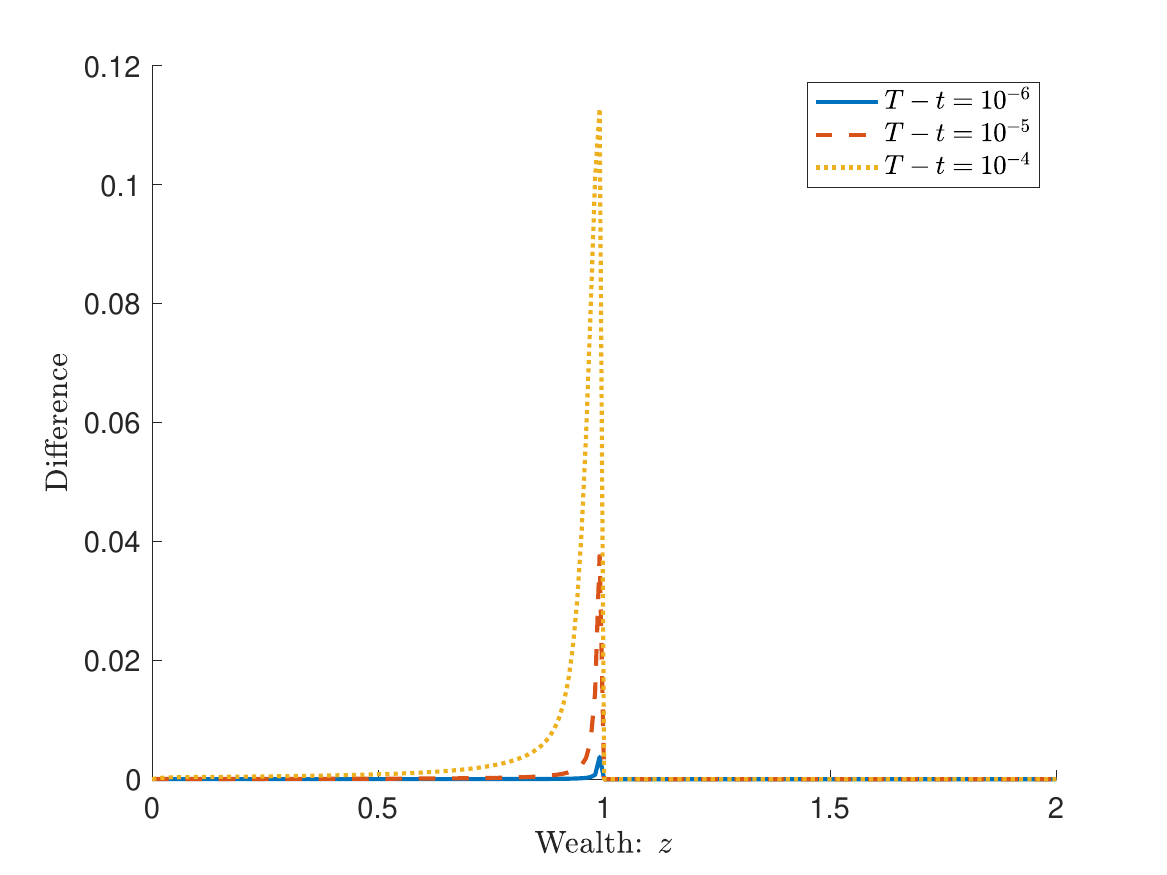}
\caption{Difference between the value function and asymptotic expression for the S-shaped utility at $y=10$. Parameters: $\theta_1=\theta_2=10^{-3},\sigma=0.3,\eta=0.04.$
}\label{fig term veri S-shaped}  
\end{figure}

\end{document}